\def\argmax{\operatornamewithlimits{arg\,max}}
\def\a{\alpha}
\def\b{\beta}
\newtheorem{definition}{Definition}
\newtheorem{theorem}{Theorem}
\newtheorem{lemma}{Lemma}
\newtheorem{proposition}{Proposition}
\newtheorem{corollary}{Corollary}
\newtheorem{observation}{Observation}
\newenvironment{lemm}[1][]{\noindent {\bf Lemma #1.$\;$}\itshape}{}
\newenvironment{theore}[1][]{\noindent {\bf Theorem #1.$\;$}\itshape}{}
\newenvironment{propositio}[1][]{\noindent {\bf Proposition #1.$\;$}\itshape}{}
\newenvironment{corollar}[1][]{\noindent {\bf Corollary #1.$\;$}\itshape}{}
\newenvironment{sketch}{\noindent {\em Proof sketch.}}{$\qed$\medskip }
\title{GSP with General Independent Click-Through-Rates\footnote{This is a longer version of a conference paper in WINE 2014.}}
\author{
Ruggiero Cavallo\footnote{Yahoo Labs, New York, NY, \url{cavallo@yahoo-inc.com}.} \and Christopher A. Wilkens\footnote{Yahoo Labs, Sunnyvale, CA \url{cwilkens@yahoo-inc.com}.}
}
\begin{document}
\maketitle
\sloppy

\def\arraystretch{1.3}


\begin{abstract}
The popular generalized second price (GSP) auction for sponsored search is
built upon a {\em separable} model of click-through-rates that decomposes the
likelihood of a click into the product of a ``slot effect'' and an ``advertiser
effect''---if the first slot is twice as good as the second for some bidder,
then it is twice as good for everyone. Though appealing in its simplicity, this
model is quite suspect in practice. A wide variety of factors including
externalities and budgets have been studied that can and do cause it to be
violated.
In this paper we adopt a view of GSP as an iterated second price auction (see,
e.g., \cite{milgrom10}) and study how the most basic violation of
separability---position dependent, arbitrary public click-through-rates that do
not decompose---affects results from the foundational analysis of
GSP~\citep{varian07,eos07}. For the two-slot setting we prove that for
arbitrary click-through-rates, for arbitrary bidder values, an efficient
pure-strategy equilibrium always exists; however, without separability there
always exist values such that the VCG outcome and payments {\em cannot} be
realized by any bids, in equilibrium or otherwise. The separability assumption
is therefore necessary in the two-slot case to match the payments of VCG but
not for efficiency. We moreover show that without separability, generic
existence of efficient equilibria is sensitive to the choice of tie-breaking
rule, and when there are more than two slots, no (bid-independent) tie-breaking
rule yields the positive result. In light of this we suggest alternative
mechanisms that trade the simplicity of GSP for better equilibrium properties
when there are three or more slots.
\end{abstract}


\section{Introduction} \label{sec:intro}

The generalized second price (GSP) auction is the predominant auction for
sponsored search advertising today. The auction takes per-click bids and
proceeds as follows: a score is computed {\em independently} for each
advertiser, reflecting its bid and propensity to be clicked; ads are ranked
according to these scores, matched to slots accordingly, and finally charged
the minimum bid required to maintain their allocated slot (i.e., to stay above
the winner of the slot below). Fundamental to this procedure is the fact that the optimal
assignment can be computed based on a ranking of independently-computed scores,
which requires that: (a) the differences between slots affect all ads equally,
and (b) an ad's propensity to be clicked is unaffected by the other ads shown
around it.  Formally, this amounts to {\em separability} of
click-through-rates: any given ad $i$'s probability of being clicked when shown
in slot $j$ decomposes into two factors, $\mu_j$ (a ``slot-effect'') and $\b_i$
(an ``advertiser effect'').  The GSP auction, as well as the theory underlying
it (\cite{varian07,eos07}), all critically rely on this model.

Unfortunately, separability generally does not hold in practice \if For
instance, a number of works have explored how surrounding ads may positively or
negatively affect the likelihood that a user clicks (\fi (one recent work
challenging the model is \cite{jeziorski14}). Moreover, the inadequacy of the
model is becoming more acute as online advertising evolves to incorporate more
heterogeneous bidders and slots.  Instead of a uniform column of vanilla text
ads, it is now common for different ad formats (images, text with sitelinks,
etc.) to appear together on the same search results page. New ad marketplaces
with richer formats, such as Yahoo's ``native'' stream, have emerged. For
advertisers that are seeking clicks, {\em click-through-rate} is the relevant
metric, but for brand advertisers the ``view rate'' of a slot is more relevant
(see, e.g.,~\cite{hummel14}).
\if We will provide evidence that these two quantities do not decay at nearly
the same rate as one goes down the list of slots; and so separability is an
especially outlandish assumption in this context. \fi

%
In this paper we examine what happens if we move beyond the separable model:
besides assuming---as in the standard model---that click-through-rates can be
determined independent of context (i.e., surrounding ads), we make virtually no
structural assumptions and determine to what extent the most important
classical findings hold up.

Our main results in the two-slot setting show that efficiency is achievable but
revenue may suffer. For arbitrary click-through-rates and values, there exist
efficient equilibria.  However, for arbitrary click-through-rates, there exist
values such that the VCG outcome and payments are {\em not} achievable (in
equilibrium or otherwise). Put another way:
{\em all} click-through-rate profiles ensure existence of efficient equilibria,
but {\em no} non-separable click-through-rate profiles ensure the feasibility
of VCG payments.
We also show that the price of anarchy in a two-slot setting without the
separability assumption is 2 (\cite{caragiannis14} showed that it is at most
1.282 with separability).

When there are three or more slots, we show that efficient equilibria do not
always exist if the tie-breaking rule cannot be chosen dynamically in response
to bids. We present an alternate mechanism that restores efficient equilibria
by expanding the bid space so that agents can specify a bid for every slot,
with items left unallocated if there is not sufficient competition.

\subsection{Related work}

Besides providing one of the earliest models of the sponsored search setting,
\cite{eos07} proved that---in the complete-information model with separable
click-through-rates---GSP has an equilibrium that realizes the VCG result,
i.e., an efficient allocation with each winner paying a price equal to the
negative externality his presence exerts on the other advertisers. In another
important early paper, \cite{lahaie06} provides equilibrium analysis for GSP
(including for the version where advertiser effects are ignored) and
first-price variants, in both the complete and incomplete information settings.
A good early survey is \cite{lahaie07}.

In a recent paper, \cite{caragiannis14} examine the space of equilibria that
may exist under GSP with separable click-through-rates, and bound the
efficiency loss that can result in any of the sub-optimal equilibria. Part of
this work involves a straightforward price of anarchy analysis for the complete
information setting, \if with separable click-through-rates, \fi to which we
provide a counterpoint without separability in Section \ref{sec:poa}.

The prior literature contains some empirical evidence against the separability
assumption. For instance, \cite{craswell08} demonstrate clear violations of
separability for {\em organic} search results. \cite{gomes09} take three
prominent keywords and show that the separable model is a poorer fit to
observed clicks than an alternate ``ordered search'' model of
click-through-rates.
Most of the work that steps outside of the classic separable model is motivated
by externalities between advertisements
\citep{kempe08,ghosh08,giotis08,athey11,aggarwal08,gomes09,ghosh10}. The
context in which an ad is shown may matter: for instance, an ad may yield more
clicks if shown below poor ads than it would if shown below very compelling
competitors. Our model in the current paper removes the separability assumption
but does not capture externalities, as it assumes click-through-rates are
context-independent.

\cite{aggarwal06} show that in the absence of the separability assumption,
there are cases where truthful bidding under GSP will not lead to an efficient
allocation. The authors go on to design a truthful mechanism that implements
the allocation that would result under GSP (which is not truthful) with
truthful bidding. \cite{gonen08} extend this analysis in a setting with reserve
prices. 

Finally, without separability a set of agents could have arbitrary ``expected
values'' for each slot---no common structure is assumed. Though types in our
model are single-dimensional since click-through-rates are not private
knowledge, there is a connection to work that shows existence of efficient
equilibria when agents have a private value for each slot
\citep{leonard83,abrams07}.



\section{Preliminaries} \label{sec:gsp}

The basic sponsored search model can be described as follows: a set of $m$
slots are to be allocated among \if at most \fi $n\geq m$ advertisers. When ad
$i$ is shown in slot $j$, regardless of what is shown in other slots, a user
clicks on ad $i$ with probability (``click-through-rate'') $\a_{i,j}$,
generating value $v_i$ for the advertiser. We let $I$ denote the set of
advertisers, and assume throughout that lower slots yield weakly lower
click-through-rates, i.e., $\forall i \in I,\, \forall k \in
\{1,\ldots,m-1\},\, \a_{i,k} \geq \a_{i,k+1}$, and that $\forall i \in I,\,
\forall k \in \{1,\ldots,m\},\, \a_{i,k} > 0$. Our model places no further
assumptions on click-through-rates.

In the {\em separable} refinement of this model, click-through-rates
$\alpha_{i,j}$ can be decomposed multiplicatively into
$\alpha_{i,j}=\mu_j\beta_i$, where $\mu_j$ is the {\em slot effect} that
depends only on the position and $\beta_i$ is the {\em ad effect} that depends
only on the bidder. Slots are ordered so that
$\mu_1\geq\mu_2\geq\dots\geq\mu_m$.
In that  setting, the GSP auction can be defined like such:
\begin{definition}[GSP auction]
The {\em generalized second price (GSP)} auction proceeds as follows:
\begin{enumerate}[topsep=6pt,itemsep=-0.0ex,partopsep=1ex,parsep=1ex]
\item Each bidder $i \in I$ submits a per-click bid $b_i$.
\item Bidders are ranked by $\beta_i b_i$ and matched to slots according to their rank.
\item The bidder in position $j$ pays ``the ad-effect-adjusted bid of the bidder in position $j+1$'' when her ad is clicked; specifically, she pays the minimum amount required to be ranked in position $j$:
\[p_j = \frac{\beta_{j+1}b_{j+1}}{\beta_j}\]
\end{enumerate}
\end{definition}

To move beyond separable click-through-rates, we must generalize the GSP mechanism. We will work from a common observation (see, e.g., ~\cite{milgrom10}) that the GSP auction can be viewed as a special sequence of second-price auctions---each slot is sold in order as if it were the only slot for sale. This view allows us to naturally handle general click-through-rates.
\begin{definition}[Iterated second price auction]  \label{ispa}
An {\em iterated second price auction} for sponsored search proceeds as follows:
\begin{enumerate}[topsep=6pt,itemsep=-0.0ex,partopsep=1ex,parsep=1ex]
\item Each bidder $i \in I$ submits a per-click bid $b_i$.
\item An order-of-sale $\sigma$ is selected for the slots.
\item For $j$ from 1 to $m$, with slots indexed according to $\sigma$:
\begin{enumerate}[topsep=1pt,itemsep=-0.0ex,partopsep=1ex,parsep=1ex]
\item A second-price auction is used to sell slot $j$ as follows: let $i^*$ be the remaining bidder with the highest $\alpha_{i,j}b_i$ and let $i^+$ be the bidder with the second-highest $\alpha_{i,j}b_i$. $i^*$ wins the auction and pays $\frac{\alpha_{i^+,j}b_{i^+}}{\alpha_{i^*,j}}$ per click or $\alpha_{i^+,j}b_{i^+}$ per impression.
\item Bidder $i^*$ is removed from the auction and cannot win future slots.
\end{enumerate}
\end{enumerate}
\end{definition}

\noindent This auction, with ``best to worst'' as the order of sale adopted in
step 2, is the implicit context for all results in this paper, except where
stated otherwise.

Another important auction mechanism is the Vickrey-Clarke-Groves (VCG) mechanism, which yields tuthful bidding and an effcient allocation in dominant strategies.
\begin{definition}[VCG mechanism]
In sponsored search, the {\em Vickrey-Clarke-Groves (VCG) mechanism} proceeds as follows:
\begin{enumerate}[topsep=6pt,itemsep=-0.0ex,partopsep=1ex,parsep=1ex]
\item Each bidder $i \in I$ submits a per-click bid $b_i$ to the auction.
\item Bids are interpreted as values per-click, and a matching of bidders to slots $i(j)$ is chosen that maximizes welfare, i.e that maximizes $\sum_{j \in I}\alpha_{i(j),j}b_{i(j)}$.
\item Each bidder $i \in I$ is charged an amount equal to the welfare other bidders would gain, according to their reported bids, if $i$ were removed from the auction. 
\end{enumerate}
\end{definition}
In this paper we will compare the outcome of our GSP generalization (Definition \ref{ispa}) to that of VCG:
\begin{definition}[VCG result]
The {\em VCG result} refers to the allocation and payments realized by the VCG mechanism.
\end{definition}

Given a set of advertiser bids, if we say that an auction has ``realized the
VCG result'' we are saying that its allocation and payments match those of the
VCG mechanism.


\section{Two slots} \label{sec:2item}

We focus much of our analysis on the two-slot case, for a few reasons. First,
this is the simplest case in which GSP deviates from a straightforward Vickrey
auction (which it reduces to in the case of a single slot); second, with two
slots ``separability'' is cleanly and simply defined, holding whenever the
ratio of click-through-rate for the first slot to the click-through-rate for
the second (henceforth, the {\em click-ratio}) is the same across all agents
(i.e., $\forall i, j \in I,\,
\frac{\a_{i,1}}{\a_{i,2}}=\frac{\a_{j,1}}{\a_{j,2}}$);
and finally, we will be able to show important positive results for the
two-slot case that do not extend to larger numbers of slots.



\subsection{Efficient equilibria}  \label{sec:eff}

Among the first questions one might ask about an auction mechanism is: does it
yield {\em efficient} equilibria? The foundational work of \cite{eos07} and
\cite{varian07} demonstrated that efficient equilibria do exist under GSP in
the separable model, and we now ask whether the assumption of separability is
necessary. We resolve this in the negative.

To build intuition, we will start by considering an especially ``problematic''
example for the non-separable setting that reveals some of the challenges
that can arise.

\begin{table}[h!]
\begin{center}
\begin{tabular}{c|c|c|c}
bidder& value& $\a_{i,1}$& $\a_{i,2}$ \\
\hline 
$1$& $1$& $1$& $1$ \\
$2$& $1$& $1$& $1$ \\
$3$& $2$& $0.4$& $0.2$
\end{tabular}
\vspace{0mm}
\caption{\label{tab:e0} A two-slot, three-bidder example in which two bidders are indifferent between the two slots. There is no pure strategy equilibrium unless ties are broken in favor of bidder 3.}
\vspace{-0mm}
\end{center}
\end{table}

The first thing to notice about the example in Table~\ref{tab:e0} is that in
any pure strategy equilibrium bidders 1 and 2 win the slots and bidder 3 gets
nothing.\footnote{Assume otherwise. If bidder 3 were winning the first slot in
equilibrium, he must be paying less than his value in expectation (0.8), but in
that case the loser amongst bidders 1 and 2 could benefit by bidding between
0.8 and 1, winning a slot for at most 0.8. Likewise, if bidder 3 were winning
the second slot in equilibrium, he must be paying less than his value in
expectation (0.4), but in that case the loser amongst bidders 1 and 2 could
benefit by bidding between 0.4 and 1.} So now assume without loss of generality
that bidder 1 wins slot 1 and bidder 2 wins slot 2. Bidder 2 has the better
deal, since he'll have to pay at most half of what bidder 1 pays (since the bid
of bidder 3 will set the price for slot 2 and lower-bound the price for slot
1), and slot 2 is as good as slot 1 (in the eyes of bidders 1 and 2). Thus, in
order for bidder 1 to be best-responding, it must be impossible for him to bid
so as to win slot 2 (which would sell at a lower price) instead of slot 1. In
other words, if he were to underbid bidder 2, he must end up with nothing. This
can only be the case if $0.4 b_3 = b_2$ {\em and} a hypothetical tie between
bidders 2 and 3 for slot 1 is broken in favor of bidder 3.

Therefore, interestingly, in the above example there exists a pure strategy
equilibrium---efficient or otherwise---{\em only} if ties are broken in a
specific way. This would seem to bode very poorly for the prospects of a
general result establishing existence of efficient equilibria. However, we will
see in this section that, at least in the two-slot case, efficient equilibria
do in fact always exist (given the right tie-breaking rule).
%
Several of the proofs are somewhat painstaking and are deferred to the
Appendix, along with auxiliary lemmas, with proof sketches included in the main
text.

\begin{theorem}[Efficient equilibria exist]  \label{the:eff}
In a two-slot setting with any number of bidders, for arbitrary values and
click-through-rates, if there is a unique efficient allocation and ties are
broken in favor of an agent with highest click-ratio, then there is an
efficient equilibrium without overbidding.
\end{theorem}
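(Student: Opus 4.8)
The plan is to exhibit explicit no-overbidding bids that drive the iterated second-price auction to the efficient allocation, and then rule out every deviation. Write the unique efficient allocation as bidder $p$ in slot 1 and bidder $q$ in slot 2, with all others excluded. First I would record the structural consequences of optimality and uniqueness via elementary swap arguments: replacing $p$ by an excluded bidder $i$ in slot 1 (or $q$ by $i$ in slot 2) cannot help, so $\a_{i,1}v_i<\a_{p,1}v_p$ and $\a_{i,2}v_i<\a_{q,2}v_q$ (strict by uniqueness); comparing the efficient allocation to the one that swaps $p$ and $q$ gives $(\a_{p,1}-\a_{p,2})v_p\ge(\a_{q,1}-\a_{q,2})v_q$; and comparing it to ``$q$ in slot 1 alongside the best excluded bidder in slot 2'' gives $\a_{p,1}v_p+\a_{q,2}v_q>\a_{q,1}v_q+\max_{i\neq p,q}\a_{i,2}v_i$. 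These are the only facts about the instance I will use.

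Next I would propose the bids: every excluded bidder bids truthfully, $p$ bids $b_p=v_p$, and $q$ bids some $b_q\le v_q$ to be chosen; all bids are at most value, so there is no overbidding. With best-to-worst order of sale, slot 1 is sold first on scores $\a_{i,1}b_i$ and then slot 2 on scores $\a_{i,2}b_i$ among the losers of slot 1. I would pick $b_q$ inside an interval, described through $q$'s slot-1 score $t:=\a_{q,1}b_q$: the lower end makes $\a_{q,2}b_q$ strictly exceed every excluded bidder's slot-2 score (so $q$ still wins slot 2 and no excluded bidder can profitably take it), and the upper end keeps $t$ below $\a_{p,1}v_p$ (so $p$ wins slot 1) and small enough that $p$ gains nothing by dropping to slot 2.

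With these bids I would check each deviation. The routine ones follow from the structural facts and are insensitive to the exact $b_q$: an excluded bidder buying slot 1 pays $\a_{p,1}v_p$ and loses money since $\a_{i,1}v_i<\a_{p,1}v_p$; one buying slot 2 pays $\a_{q,2}b_q$ and loses money by the lower-end condition; $q$ raising its bid to seize slot 1 pays $\a_{p,1}v_p$, which the ``$q$ in slot 1'' inequality shows is not worthwhile; and no one benefits from dropping out, as all equilibrium utilities are nonnegative (here $\a_{q,2}v_q>\max_{i}\a_{i,2}v_i$ and $\a_{p,1}v_p$ dominates the slot-1 price). The single genuinely delicate deviation---exactly the phenomenon in Table~\ref{tab:e0}---is $p$ lowering its bid to try to capture the cheaper slot 2.

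The crux, and the step I expect to be the main obstacle, is bounding $p$'s payoff from that downward deviation and showing the required interval for $b_q$ is nonempty. When $p$ drops out of slot 1, the vacated slot is backfilled by the remaining bidder with the highest slot-1 score, and the price $p$ then faces for slot 2 is the top slot-2 score still in contention; whether the backfill is $q$ or an excluded bidder decides whether slot 2 turns cheap or stays expensive for $p$. This is precisely where the tie-breaking rule does real work: among the bidders tied for the top slot-1 score after $p$ drops, equal slot-1 scores force the \emph{highest-click-ratio} one to have the \emph{smallest} slot-2 score, so promoting exactly that bidder to slot 1 removes the least slot-2 competition and keeps slot 2 most expensive, deterring the drop (in Table~\ref{tab:e0}, sending slot 1 to the high-ratio bidder $3$ leaves bidder $2$ pricing slot 2). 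I would organize the verification by cases according to which bidder backfills slot 1, translate ``$p$ does not want to drop'' into an upper bound on $t$ in each case, and show this upper bound is at least the lower bound coming from ``$q$ keeps slot 2''; this last comparison is where the swap inequality $(\a_{p,1}-\a_{p,2})v_p\ge(\a_{q,1}-\a_{q,2})v_q$ and the strict score bounds combine, with uniqueness making the inequalities strict so that a valid $b_q$---and hence a genuine efficient equilibrium under the stated tie-breaking---exists.
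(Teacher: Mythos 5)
Your plan is correct and rests on exactly the same engine as the paper's proof: explicit no-overbidding bids, an exhaustive deviation check whose only delicate case is the slot-1 winner underbidding to grab the cheaper slot 2, and the observation that breaking ties by click-ratio promotes the tied bidder with the \emph{smallest} slot-2 score, leaving the most expensive slot-2 competition behind. Where you differ is in the parameterization of the bids. The paper zeroes out every excluded bidder except $3=\argmax_{j\neq 1,2}\a_{j,2}v_j$ and splits into two closed-form cases on whether $\frac{\a_{2,1}}{\a_{2,2}}\geq\frac{\a_{3,1}}{\a_{3,2}}$: in the first case $3$ bids truthfully and $b_2=\frac{\a_{3,2}v_3+(\a_{2,1}-\a_{2,2})v_2}{\a_{2,1}}$; in the second case $3$ \emph{underbids} to $\frac{\a_{2,1}}{\a_{2,2}}\frac{\a_{3,2}}{\a_{3,1}}v_3$ so that $2$ and $3$ tie exactly on slot-1 score and the tie-break hands the backfill to $3$. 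You instead keep all excluded bidders truthful and tune only $b_q$, which buys a single uniform construction but costs you a messier case analysis on who backfills (the price-setter for slot 1 need not be the bidder achieving $\max_{j\neq p,q}\a_{j,2}v_j$, and may have a truthful slot-1 score that $q$ cannot match without overbidding). Two points in your sketch need care when you execute it: (i) in the branch where an excluded bidder backfills, ``$p$ does not want to drop'' translates into a \emph{lower} bound $\a_{q,2}b_q\geq M-(\a_{p,1}-\a_{p,2})v_p$ (with $M$ the top excluded slot-1 score), not an upper bound on $t$; its compatibility with $b_q\leq v_q$ is exactly the efficiency comparison $w(p,q)\geq w(j^*,p)$, which you did not list among your structural facts and should add; and (ii) the nonemptiness of the interval at the boundary $t=M$ is where the strictness from uniqueness, via $(\a_{p,1}-\a_{p,2})v_p>(\a_{q,1}-\a_{q,2})v_q$, does the work, just as you anticipated. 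With those repairs the argument goes through and is a legitimate variant of the paper's proof rather than a different one.
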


\begin{sketch}
Consider arbitrary click-through-rates $\a$ and values $v$. Let 1 and 2 denote
the respective winners of slots 1 and 2 in the efficient allocation, and let 3
denote $\argmax_{j \in I\setminus\{1,2\}} \a_{j,2}v_j$. We dichotomize the set
of possible click-through-rate profiles into those in which
$\frac{\a_{2,1}}{\a_{2,2}} \geq \frac{\a_{3,1}}{\a_{3,2}}$ and those in which
$\frac{\a_{2,1}}{\a_{2,2}} < \frac{\a_{3,1}}{\a_{3,2}}$. In the former case,
the following bid profile is an efficient equilibrium: $b_1=v_1$,
$b_2=\frac{\a_{3,2}v_3+(\a_{2,1}-\a_{2,2})v_2}{\a_{2,1}}$, $b_3=v_3$, and
$b_i=0$, $\forall i \in I\setminus\{1,2,3\}$. In the latter, the following is:
$b_1=v_1$, $b_2=\frac{\a_{3,2}}{\a_{2,2}} v_3$,
$b_3=\frac{\a_{2,1}}{\a_{2,2}}\frac{\a_{3,2}}{\a_{3,1}}v_3$, and $b_i=0$,
$\forall i \in I\setminus\{1,2,3\}$. The proof verifies that an exhaustive set
of sufficient equilibrium conditions holds in each case.
\end{sketch}

It is interesting to see what the above tells us about the problematic example
of Table \ref{tab:e0}. To derive bids yielding an efficient equilibrium, we can
note the following about the example: an efficient allocation gives slots 1 and
2 to bidders 1 and 2, and thus the agents are labeled in a way consistent with
the convention of Theorem \ref{the:eff}. Now, since $\frac{\a_{2,1}}{\a_{2,2}}
< \frac{\a_{3,1}}{\a_{3,2}}$, the above proof indicates that the following
bids---combined with a tie-breaking rule that favors bidder 3 over bidder
2---yields an efficient equilibrium: $b_1 = v_1 = 1$, $b_2 =
\frac{\a_{3,2}}{\a_{2,2}} v_3 = 0.4$, $b_3 =
\frac{\a_{2,1}}{\a_{2,2}}\frac{\a_{3,2}}{\a_{3,1}}v_3 = 1$.


\subsection{Globally envy-free equilibria}  \label{sec:gef}

In the previous subsection we demonstrated that efficient equilibria always
exist. We proved this constructively, giving bid functions that yield efficiency
for all valuations. However, these bids do not generally lead to global
envy-freeness:

\begin{definition}[Globally envy-free outcome]
Consider an arbitrary allocation and prices. Let $k$ denote the winner of slot
$k$ and $p_k$ denote the price paid by $k$, for $k \in \{1,\ldots,n\}$; let
$n+1$ denote the agent that receives nothing; and let $p_{n+1}=0$. The
allocation and prices constitute a globally envy-free outcome if and only if,
for all $i,j \in \{1,\ldots,n+1\}$,
\[ \a_{i,i} v_i - p_i \geq \a_{i,j} v_i - p_j  \]
\end{definition}

Envy-freeness is a major focus of the classic work on GSP
\citep{eos07,varian07}, because of its relationship to VCG results, the
salience it arguably confers on equilibria, and perhaps most importantly, the
fact that envy-freeness implies that an equilibrium generates at least as much revenue as VCG.
Unfortunately, we will now see that this guarantee does not extend to our
setting, and an envy-free equilibrium is not guaranteed to exist.

We will give a necessary condition for global envy-freeness (Proposition
\ref{prop:gef0}), which will not always be satisfied. We will then show that
whenever the condition is satisfied, global envy-freeness {\em can} be
achieved, and moreover done so in the context of an efficient equilibrium
(Theorem \ref{the:gef}).

\begin{proposition}  \label{prop:gef0}
In a two-slot, three-bidder setting, for arbitrary values $v$ and click-through-rates
$\a$, there exist no bids yielding a globally envy-free outcome unless, letting
1 and 2 denote the respective winners of slots 1 and 2 in the efficient
allocation:
\begin{align*} &(\a_{3,1}-\a_{3,2})v_3 \leq (\a_{1,1}-\a_{1,2}) v_1
\end{align*}
\end{proposition}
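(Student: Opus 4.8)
The plan is to exploit the fact that in a globally envy-free outcome the prices are not free parameters but are pinned down by the bids through the iterated second-price mechanism; this is precisely what rules out the counterexamples one could build with arbitrary prices. Throughout, write $p_1,p_2$ for the per-impression prices of slots~1 and~2. First I would argue that any bid profile inducing a globally envy-free outcome induces an \emph{efficient} allocation, so that (up to relabeling of ties) the winners of slots~1 and~2 are the efficiently-labeled bidders~1 and~2 and the excluded bidder is bidder~3. This is the standard ``envy-free implies welfare-maximizing'' fact: summing the envy inequalities $\a_{i,s(i)}v_i - p_{s(i)} \geq \a_{i,s'(i)}v_i - p_{s'(i)}$ over all bidders $i$, for any alternative assignment $s'$, the price terms cancel (every assignment of three bidders to slot~1, slot~2, and the null slot uses each price exactly once), leaving $\sum_i \a_{i,s(i)}v_i \geq \sum_i \a_{i,s'(i)}v_i$. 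I would record this as an auxiliary lemma.

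The crux is then to read off the mechanism prices in the three-bidder case. Since bidder~3 is excluded, after bidder~1 takes slot~1 only bidders~2 and~3 remain for slot~2, so bidder~2 pays exactly $p_2 = \a_{3,2}b_3$. Moreover bidder~3 loses the slot-1 auction, so its score $\a_{3,1}b_3$ is at most the second-highest slot-1 score, which is the slot-1 price; hence $p_1 \geq \a_{3,1}b_3$. Now bidder~3's no-envy constraint against slot~2 reads $0 \geq \a_{3,2}v_3 - p_2$, i.e.\ $\a_{3,2}b_3 \geq \a_{3,2}v_3$, and dividing by $\a_{3,2}>0$ forces $b_3 \geq v_3$: envy-freeness prevents bidder~3 from shading below value, precisely because bidder~3 itself sets the price of the slot it is excluded from.

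Finally I would combine these with bidder~1's no-envy constraint against slot~2, namely $\a_{1,1}v_1 - p_1 \geq \a_{1,2}v_1 - p_2$, i.e.\ $p_1 - p_2 \leq (\a_{1,1}-\a_{1,2})v_1$, and chain:
\[ (\a_{1,1}-\a_{1,2})v_1 \;\geq\; p_1 - p_2 \;\geq\; \a_{3,1}b_3 - \a_{3,2}b_3 \;=\; (\a_{3,1}-\a_{3,2})b_3 \;\geq\; (\a_{3,1}-\a_{3,2})v_3, \]
where the last step uses $\a_{3,1}\geq\a_{3,2}$ together with $b_3\geq v_3$. This is the claimed inequality. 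I expect the main obstacle to be the bookkeeping around the first step: showing that the envy-free outcome really does exclude the efficient loser (bidder~3) rather than some other bidder, and handling the tie case where slots~1 and~2 are swapped among $\{1,2\}$ --- there the same chain yields $(\a_{2,1}-\a_{2,2})v_2 \geq (\a_{3,1}-\a_{3,2})v_3$, and one bridges to the stated form using the efficiency inequality $(\a_{1,1}-\a_{1,2})v_1 \geq (\a_{2,1}-\a_{2,2})v_2$.
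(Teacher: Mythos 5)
Your proof is correct and follows essentially the same route as the paper's: both rest on the observations that $p_2=\a_{3,2}b_3$, $p_1\geq\a_{3,1}b_3$, that bidder 3's no-envy toward slot 2 forces $b_3\geq v_3$, and that bidder 1's no-envy toward slot 2 caps $p_1-p_2$ by $(\a_{1,1}-\a_{1,2})v_1$, chained together exactly as you do. Your only additions are cosmetic: you make explicit the standard ``envy-free implies efficient'' step that the paper leaves implicit, and you collapse the paper's two-case analysis of who sets $p_1$ into the single bound $p_1\geq\a_{3,1}b_3$.
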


\begin{proof}
Take arbitrary values $v$, click-through-rates $\a$, and bids $b$.
First assume $b_2 \geq \frac{\a_{3,1}}{\a_{2,1}}b_3$ (i.e., 2 sets the price
for 1). For 1 to not be envious of 2, it must be the case that $b_2 \leq
\frac{\a_{3,2}b_3 + (\a_{1,1}-\a_{1,2})v_1}{\a_{2,1}}$. The combination of
these two constraints yields $(\a_{3,1}-\a_{3,2})b_3 \leq (\a_{1,1}-\a_{1,2})
v_1$. Now instead assume $b_2 \leq \frac{\a_{3,1}}{\a_{2,1}}b_3$ (i.e, 3 sets
the price for 1). 1 is not envious of 2 if and only if $\a_{1,1}v_1-\a_{3,1}b_3
\geq \a_{1,2}v_1-\a_{3,2}b_3$, i.e., $(\a_{3,1}-\a_{3,2})b_3 \leq
(\a_{1,1}-\a_{1,2}) v_1$, again. Finally, noting that envy-freeness for 3
requires that $b_3 \geq v_3$ (otherwise 3 would envy 2), global envy-freeness
requires:
\[ (\a_{3,1}-\a_{3,2})v_3 \leq (\a_{3,1}-\a_{3,2})b_3 \leq (\a_{1,1}-\a_{1,2})
v_1 \]
If $(\a_{3,1}-\a_{3,2})v_3 > (\a_{1,1}-\a_{1,2}) v_1$, this cannot be
satisfied.
\end{proof}

\begin{theorem}[GEF and efficient equilibria condition]  \label{the:gef}
In a two-slot, three-bidder setting, for arbitrary click-through-rates $\a$ and values
$v$, there exist bids yielding a globally envy-free and efficient equilibrium
if and only if, letting 1 and 2 denote the respective winners of slots 1 and 2
in the efficient allocation:
\begin{align*} &(\a_{3,1}-\a_{3,2})v_3 \leq (\a_{1,1}-\a_{1,2}) v_1
\end{align*}
If a globally envy-free and efficient equilibrium exists, one exists that
yields the VCG result and does not require overbidding.
\end{theorem}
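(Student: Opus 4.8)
The plan is to prove the two directions separately, aiming the ``if'' direction directly at the VCG outcome so that efficiency, global envy-freeness, no-overbidding, and the VCG-payment refinement all fall out of a single construction. The ``only if'' direction is immediate: any efficient allocation assigns slots 1 and 2 to bidders 1 and 2, so an efficient globally envy-free outcome is in particular a globally envy-free outcome, and Proposition \ref{prop:gef0} already shows that $(\a_{3,1}-\a_{3,2})v_3 \leq (\a_{1,1}-\a_{1,2})v_1$ is necessary for the latter.

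For the ``if'' direction I would first compute the VCG payments. Using the hypothesis to identify the efficient allocation after deleting a bidder (the hypothesis is exactly the statement that, without bidder 2, it remains efficient to keep bidder 1 in slot 1 and place bidder 3 in slot 2), one gets the per-impression payment $p_2 = \a_{3,2}v_3$, while deleting bidder 1 yields $p_1 = \max\bigl((\a_{2,1}-\a_{2,2})v_2 + \a_{3,2}v_3,\ \a_{3,1}v_3\bigr)$. This maximum induces a dichotomy according to whether $(\a_{2,1}-\a_{2,2})v_2 \geq (\a_{3,1}-\a_{3,2})v_3$ or not, and I would construct bids separately in each case: always $b_1 = v_1$ and $b_3 = v_3$, with $b_2 = \frac{(\a_{2,1}-\a_{2,2})v_2 + \a_{3,2}v_3}{\a_{2,1}}$ in the first case and $b_2$ chosen in the interval $\bigl(\frac{\a_{3,2}v_3}{\a_{2,2}},\, \min\{\frac{\a_{3,1}v_3}{\a_{2,1}}, v_2\}\bigr]$ in the second. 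A short argument shows that the defining inequality of the second case forces the click-ratio ordering $\frac{\a_{2,1}}{\a_{2,2}} < \frac{\a_{3,1}}{\a_{3,2}}$ (compare the two weighted gaps using $\a_{2,2}v_2 \geq \a_{3,2}v_3$), which is precisely what makes this interval nonempty. In each case I would check that the best-to-worst iterated second-price auction on these bids produces the VCG allocation and payments and that $b_i \leq v_i$ throughout.

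Next I would verify global envy-freeness of the resulting outcome by checking each of the six inequalities in the definition. The pleasant feature is that every constraint collapses to something already available: bidder 3's two constraints hold with equality or by the defining case inequality; bidder 1's and bidder 2's individual-rationality constraints are just VCG individual rationality; and the two ``swap'' constraints reduce, respectively, to the efficiency inequality $(\a_{1,1}-\a_{1,2})v_1 \geq (\a_{2,1}-\a_{2,2})v_2$ (which holds because assigning bidder 1 to slot 1 and bidder 2 to slot 2 beats the swapped assignment) and to the theorem's hypothesis $(\a_{3,1}-\a_{3,2})v_3 \leq (\a_{1,1}-\a_{1,2})v_1$.

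Finally I would upgrade ``globally envy-free outcome'' to ``equilibrium.'' The key observation is that in the best-to-worst iterated auction every unilateral deviation faces a price at least as high as the relevant envy-free price: to capture an earlier slot a bidder must outbid its current winner, so the price is at least that slot's current price (and bidder 1's individual rationality, $p_1 \leq \a_{1,1}v_1$, makes the slot-1 deviation price dominate $p_1$); to fall back to a later slot the only remaining competitor is bidder 2 or bidder 3, each of whom already prices slot 2 at least $p_2$. Hence global envy-freeness rules out every profitable deviation, and dropping to nothing is ruled out by individual rationality. I expect the main obstacle to be the bookkeeping around ties: since the second case has $\frac{\a_{2,1}}{\a_{2,2}} < \frac{\a_{3,1}}{\a_{3,2}}$, the tie-breaking rule of Theorem \ref{the:eff} favors bidder 3 for slot 2, so I must take $b_2$ strictly above $\frac{\a_{3,2}v_3}{\a_{2,2}}$ to let bidder 2 win slot 2 outright while keeping $b_3 = v_3$ the price-setter so that $p_2 = \a_{3,2}v_3$ is preserved; verifying that this strict choice stays compatible with $p_1 = \a_{3,1}v_3$ and with no-overbidding is the delicate part.
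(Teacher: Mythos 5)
Your proposal is correct and follows essentially the same route as the paper's proof: the ``only if'' direction via Proposition \ref{prop:gef0}, and for ``if'' the same VCG-targeted bid constructions --- your case split on $(\a_{2,1}-\a_{2,2})v_2 \gtrless (\a_{3,1}-\a_{3,2})v_3$ matches the paper's, and the right endpoint of your interval, $\min\{\a_{3,1}v_3/\a_{2,1},\, v_2\}$, reproduces exactly the two bids the paper uses in its subcases (ii) and (iii). The only difference is organizational: you verify envy-freeness and equilibrium directly from the VCG prices rather than through the paper's constraint lists (Lemmas \ref{lem:eff}, \ref{lem:gef0}, and \ref{lemE2}), which works equally well.
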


\begin{sketch}
The full proof follows similar lines to that of Theorem \ref{the:eff}, and is
again relegated to the Appendix along with auxiliary lemmas. The proof
considers three cases: (i) $(\a_{3,1}-\a_{3,2})v_3 \leq
(\a_{2,1}-\a_{2,2})v_2$; (ii) $(\a_{3,1}-\a_{3,2})v_3 > (\a_{2,1}-\a_{2,2})v_2$
and $\frac{\a_{3,1}}{\a_{2,1}}v_3 \leq v_2$; and (iii) $(\a_{3,1}-\a_{3,2})v_3
> (\a_{2,1}-\a_{2,2})v_2$ and $\frac{\a_{3,1}}{\a_{2,1}}v_3 > v_2$. In each
case it is assumed that $(\a_{3,1}-\a_{3,2})v_3 \leq (\a_{1,1}-\a_{1,2})v_1$.
We specify bids yielding efficient and globally envy-free equilibria: in case
(i) $b_1=v_1$, $b_2=\frac{\a_{3,2}v_3+(\a_{2,1}-\a_{2,2})v_2}{\a_{2,1}}$, and
$b_3=v_3$; in case (ii), $b_1=v_1$, $b_2=\frac{\a_{3,1}}{\a_{2,1}}v_3$, and
$b_3=v_3$; and in case (iii), $b_1=v_1$, $b_2=v_2$, and $b_3=v_3$.
\end{sketch}

For instance, consider a 3-agent example with $v_1=v_2=v_3=1$, $\a_{1,1}=0.9$,
$\a_{1,2}=0.5$, $\a_{2,1}=0.5$, $\a_{2,2}=0.4$, $\a_{3,1}=0.6$, and
$\a_{3,2}=0.1$. The unique efficient allocation gives slots 1 and 2 to bidders
1 and 2, respectively. But we have: $0.5 = (\a_{3,1}-\a_{3,2}) v_3 >
(\a_{1,1}-\a_{1,2}) v_1 = 0.4$. Thus Theorem \ref{the:gef} implies that there
can be no globally envy-free and efficient equilibrium.

A characterization for more than three bidders is harder to state in a concise
form, but the following theorem gives sufficient conditions for efficiency and
global envy-freeness.

\begin{theorem}  \label{the:gef2}
For arbitrary click-through-rates and values, letting 1 and 2 denote the respective
winners of slots 1 and 2 in the efficient allocation, if
$\frac{\a_{2,1}}{\a_{2,2}} \geq \frac{\a_{i,1}}{\a_{i,2}}$, $\forall i \in
I\setminus\{1,2\}$, there exists an efficient and globally envy-free
equilibrium without overbidding.
\end{theorem}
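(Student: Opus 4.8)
The plan is to exhibit an explicit efficient equilibrium that coincides with the one used in the ``former case'' of Theorem~\ref{the:eff} and in case~(i) of Theorem~\ref{the:gef}, and to verify that under the click-ratio hypothesis it is in addition globally envy-free. Relabel bidders so that $1$ and $2$ are the slot winners in the efficient allocation and set $3 = \argmax_{i \in I \setminus \{1,2\}} \a_{i,2} v_i$. I would propose the bids
\[ b_1 = v_1,\quad b_2 = \frac{\a_{3,2}v_3 + (\a_{2,1}-\a_{2,2})v_2}{\a_{2,1}},\quad b_3 = v_3,\quad b_i = 0 \text{ for } i \in I\setminus\{1,2,3\}. \]

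First I would record three consequences of $(1\text{ in slot }1,\,2\text{ in slot }2)$ being welfare-maximizing: (E1) $\a_{1,1}v_1 \geq \a_{i,1}v_i$ for every $i \neq 2$; (E2) $\a_{2,2}v_2 \geq \a_{i,2}v_i$ for every $i \neq 1$; and the swap inequality (E3) $(\a_{1,1}-\a_{1,2})v_1 \geq (\a_{2,1}-\a_{2,2})v_2$. (E2) immediately gives $b_2 \leq v_2$, so there is no overbidding, and combined with the hypothesis $\tfrac{\a_{2,1}}{\a_{2,2}} \geq \tfrac{\a_{i,1}}{\a_{i,2}}$ it also yields $\a_{2,2}b_2 \geq \a_{3,2}v_3 \geq \a_{i,2}v_i$ for every loser. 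Tracing the iterated second-price auction then shows $1$ takes slot~$1$ and $2$ takes slot~$2$ (breaking the only possible tie, which occurs iff $\a_{2,2}v_2 = \a_{3,2}v_3$, toward the higher-click-ratio bidder~$2$), and produces per-impression prices $p_2 = \a_{3,2}v_3$ and $p_1 = \a_{2,1}b_2 = \a_{3,2}v_3 + (\a_{2,1}-\a_{2,2})v_2$; here I would use the hypothesis applied to $i=3$ to confirm $\a_{2,1}b_2 \geq \a_{3,1}v_3$, so that $2$ rather than $3$ sets the price of slot~$1$.

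Next I would check the global-envy-freeness inequalities one pair at a time. Individual rationality of the two winners is free: a second-price payment never exceeds the winner's own weighted bid, so $p_1 \leq \a_{1,1}v_1$ and $p_2 \leq \a_{2,2}b_2 \leq \a_{2,2}v_2$. Winner~$1$ not envying~$2$ reduces, after cancelling $\a_{3,2}v_3$, exactly to the swap inequality (E3); winner~$2$ not envying~$1$ holds with equality because $p_1 - p_2 = (\a_{2,1}-\a_{2,2})v_2$ by construction; and no loser envies slot~$2$ because $p_2 = \a_{3,2}v_3 \geq \a_{i,2}v_i$ by the choice of $3$. The one substantive step---and the only place the theorem's hypothesis is indispensable---is showing no loser envies slot~$1$, i.e.\ $p_1 \geq \a_{i,1}v_i$ for every $i \in I\setminus\{1,2\}$. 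For this I would chain $\a_{i,1}v_i = \tfrac{\a_{i,1}}{\a_{i,2}}\,\a_{i,2}v_i \leq \tfrac{\a_{2,1}}{\a_{2,2}}\,\a_{3,2}v_3$ (using the hypothesis and the definition of $3$) and then verify $\tfrac{\a_{2,1}}{\a_{2,2}}\a_{3,2}v_3 \leq \a_{3,2}v_3 + (\a_{2,1}-\a_{2,2})v_2 = p_1$, which after clearing denominators is again just (E2); this is precisely the generalization that fails once some loser has a larger click-ratio than bidder~$2$, matching the necessary condition of Proposition~\ref{prop:gef0}.

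Finally I would confirm no profitable deviation, leaning on the envy inequalities already proved. Bidder~$1$'s only alternative to slot~$1$ is to drop out and contest slot~$2$, which it can win for no less than $p_2$, so its best deviation payoff is at most $\a_{1,2}v_1 - p_2 \leq \a_{1,1}v_1 - p_1$ by the ``$1$ not envying $2$'' inequality; bidder~$2$ can only seize slot~$1$ by outbidding $1$ and hence paying $\a_{1,1}v_1 \geq p_1$, so its gain is bounded by $\a_{2,1}v_2 - p_1 \leq \a_{2,2}v_2 - p_2$ via the ``$2$ not envying $1$'' inequality; and any loser $i$ can win slot~$1$ or slot~$2$ only at prices $\a_{1,1}v_1 \geq \a_{i,1}v_i$ or $\a_{2,2}b_2 \geq \a_{i,2}v_i$ respectively, leaving nonpositive surplus. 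Thus the profile is an efficient, non-overbidding, globally envy-free equilibrium. I expect the loser-versus-slot-$1$ inequality of the previous paragraph to be the crux; everything else is either a matching-optimality identity or a direct rewriting of an envy constraint, and as a byproduct the computed prices can be checked to coincide with the VCG payments.
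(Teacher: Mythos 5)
Your proposal is correct and takes essentially the same route as the paper: it uses the identical bid profile $b_1=v_1$, $b_2=\frac{\a_{3,2}v_3+(\a_{2,1}-\a_{2,2})v_2}{\a_{2,1}}$, $b_3=v_3$, $b_i=0$ otherwise, and your pairwise envy and deviation checks correspond one-to-one to the paper's conditions (A0)--(A6) and (A-GEF0)--(A-GEF6). The crux step (no loser envies slot 1, the paper's (A-GEF5)) is handled by the same combination of the click-ratio hypothesis, the definition of agent 3, and $\a_{2,2}v_2 \geq \a_{3,2}v_3$; the paper merely routes it through Lemma~\ref{lem2i} where you multiply the two inequalities directly.
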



\subsection{VCG results cannot always be achieved}  \label{sec:vcg}

In the results of \cite{eos07}, existence of an efficient equilibrium in the
separable setting is demonstrated via proof that an equilibrium realizing the
VCG result always exists. In some sense the VCG result is the most salient
kind of efficient equilibrium, and it would be surprising if efficient
equilibria exist generically but VCG equilibria do not. But that is exactly
what we now demonstrate. Whenever a set of click-through-rates violates
separability,\footnote{Interestingly, there is one very minor exception to this
``whenever'': if there is exactly one agent whose click-ratio is not equal to
the maximum across all bidders---i.e., click-through-rates are separable except in the
case of one bidder, and his click-ratio is {\em lower}---then the VCG result
will be supported.} one can never be assured that a VCG result is feasible, in
equilibrium or otherwise. That is, there always exist values that make it
impossible for the agents to bid in a way that yields an efficient allocation
and VCG prices.

\begin{theorem}[Always a bad value profile]  \label{the:vcg}
Assume strictly decreasing click-through-rates. In a two-slot setting with three
bidders, one of whom has a strictly higher click-ratio than the other two,
there always exist values such that the VCG result is not supported.
\end{theorem}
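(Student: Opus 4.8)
The plan is to reduce ``is the VCG result supportable?'' to a single inequality between click-ratios, and then to produce values that violate it.

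Label bidders by their role in the (generically unique) efficient allocation: bidder $1$ wins slot $1$, bidder $2$ wins slot $2$, bidder $3$ is excluded. Write $P_1,P_2$ for the per-impression VCG payments of bidders $1,2$ and $r_i=\a_{i,1}/\a_{i,2}$ for click-ratios. I would first characterize exactly which bid profiles reproduce this allocation-and-payments pair in the iterated second-price auction of Definition~\ref{ispa} under the best-to-worst order. Because slot $2$ is sold last between the only remaining bidders $2$ and $3$, bidder $2$'s price is $\a_{3,2}b_3$, so matching $P_2>0$ forces $b_3=P_2/\a_{3,2}$; and bidder $2$ must outbid $3$ for slot $2$, i.e. $\a_{2,2}b_2\ge\a_{3,2}b_3=P_2$. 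Carrying these into the slot-$1$ auction, bidder $1$'s price equals $\max(\a_{2,1}b_2,\a_{3,1}b_3)=\max(\a_{2,1}b_2,r_3P_2)$, while the constraint $\a_{2,2}b_2\ge P_2$ forces $\a_{2,1}b_2\ge r_2P_2$; hence slot $1$ cannot be sold for less than $\max(r_2,r_3)P_2$. Conversely $b_3=P_2/\a_{3,2}$, $b_2=P_1/\a_{2,1}$ and $b_1$ large reproduce the VCG result whenever $P_1\ge\max(r_2,r_3)P_2$. This gives the characterization I would record as a lemma: the VCG result is supportable if and only if $P_1\ge\max(r_2,r_3)P_2$.

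For the construction, take the strictly-highest-click-ratio bidder to be the excluded bidder $3$ (so $r_3=\max(r_2,r_3)$) and assign the other two as bidders $1,2$. Normalize $v_1=1$ and pick $v_3$ just above $\tfrac{\a_{1,1}-\a_{1,2}}{\a_{3,1}-\a_{3,2}}$ (and below $\tfrac{\a_{1,1}}{\a_{3,1}}$, possible since $r_3>r_1$). Then pick $v_2$ in $[\tfrac{\a_{3,2}v_3}{\a_{2,2}},\ \tfrac{\a_{1,1}-\a_{1,2}}{\a_{2,1}-\a_{2,2}}]$; this interval is nonempty for $v_3$ near the lower endpoint above, since clearing denominators reduces the condition to $r_2<r_3$. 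The upper bound on $v_3$ keeps bidder $1$ ahead of $3$ for slot $1$; the bounds on $v_2$ keep bidder $2$ ahead of $3$ for slot $2$ and keep bidders $1,2$ in the right order; with these one checks that the efficient allocation is indeed $1\to$ slot $1$, $2\to$ slot $2$, $3$ excluded.

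It remains to evaluate the payments and conclude. The choice of $v_3$ gives $(\a_{3,1}-\a_{3,2})v_3>(\a_{1,1}-\a_{1,2})v_1$, and together with the efficient order this also yields $(\a_{3,1}-\a_{3,2})v_3\ge(\a_{2,1}-\a_{2,2})v_2$; these two facts place both VCG payments on their non-separable branches, $P_1=\a_{3,1}v_3$ and $P_2=\a_{3,1}v_3-(\a_{1,1}-\a_{1,2})v_1>0$. Substituting into $P_1<r_3P_2$ and simplifying reproduces exactly $(\a_{1,1}-\a_{1,2})v_1<(\a_{3,1}-\a_{3,2})v_3$ --- the negation of the necessary condition of Proposition~\ref{prop:gef0} --- which holds by construction. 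Hence $P_1<\max(r_2,r_3)P_2$, and by the characterization no bids realize the VCG result. I expect the crux to be the reduction in the second paragraph: one must show that the two roles of bidder $2$'s bid (setting the slot-$1$ competition and securing slot $2$) cannot be decoupled, so that the slot-$1$ price is pinned below by both $r_2P_2$ and $r_3P_2$ simultaneously. Once this is in hand, checking that the two value-intervals are nonempty and that the payments sit on the claimed branches is routine, and is exactly where the hypothesis $r_3>r_1,r_2$ is used.
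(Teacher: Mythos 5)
Your overall route differs from the paper's in a useful way: you prove a direct characterization of supportability in the iterated auction ($P_1\geq\max(r_2,r_3)P_2$, which is correct, and in fact requires arguing as you do that bidder $2$'s bid cannot simultaneously stay low in the slot-$1$ auction and stay high enough to win slot $2$), whereas the paper never computes the VCG payments at all --- it invokes the fact that the VCG outcome is globally envy-free and then applies the necessary condition of Proposition~\ref{prop:gef0}, so that the whole theorem reduces to exhibiting values with $(\a_{3,1}-\a_{3,2})v_3>(\a_{1,1}-\a_{1,2})v_1$. Both endgames arrive at the same inequality, and your value construction is essentially the paper's ($v_1/v_3$ in the same interval, $v_2/v_3$ in a comparable one). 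The price you pay for the direct route is having to verify which ``branch'' each VCG payment sits on, which you do correctly.

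There is, however, one genuine gap: the claim that for \emph{any} $v_2$ in $[\a_{3,2}v_3/\a_{2,2},\ (\a_{1,1}-\a_{1,2})/(\a_{2,1}-\a_{2,2})]$ the allocation $(1,2)$ is efficient is false. The problematic comparison is $w(1,2)$ versus $w(3,1)$:
\begin{align*}
w(1,2)-w(3,1)&=(\a_{1,1}-\a_{1,2})v_1+\a_{2,2}v_2-\a_{3,1}v_3,
\end{align*}
and at the lower endpoint $\a_{2,2}v_2=\a_{3,2}v_3$ this equals $(\a_{1,1}-\a_{1,2})v_1-(\a_{3,1}-\a_{3,2})v_3$, which is \emph{negative} by your very choice of $v_3$ --- so $(3,1)$ beats $(1,2)$ and the whole labeling collapses. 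You need the additional constraint $\a_{2,2}v_2>\a_{3,1}v_3-(\a_{1,1}-\a_{1,2})v_1$, i.e., $v_2$ must sit far enough above the lower endpoint relative to how far $v_3$ sits above \emph{its} lower threshold; this is exactly the role of the paper's condition $\gamma_2>\frac{\a_{1,1}-\a_{1,2}}{\a_{2,2}}\gamma_1$ on the perturbation parameters. The fix is routine (as $v_3$ tends to its threshold the required lower bound on $v_2$ tends to your stated one, and it stays below your upper bound precisely because $r_2<r_3$), but as written the construction does not guarantee that $(1,2)$ is the efficient allocation, which is the premise of everything that follows. Note also that this same slack in $v_2$ is what guarantees $P_2>0$, which your characterization step quietly assumes.
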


\begin{proof}
Consider three agents with strictly decreasing click-through-rates $\a$ such
that one agent's click-ratio is strictly higher than that of the other two.
Label the three bidders in a non-decreasing order of $\a_{i,1}/\a_{i,2}$.
Strictly decreasing click-through-rates entails that $\a_{1,1}/\a_{1,2} > 1$,
and $\a_{1,1}/\a_{1,2} \leq \a_{2,1}/\a_{2,2} < \a_{3,1}/\a_{3,2}$ by
assumption. In other words, for some $\epsilon \geq 0$ and $\delta > 0$,
\begin{align}  1 < \frac{\a_{1,1}}{\a_{1,2}} = \frac{\a_{2,1}}{\a_{2,2}} -
\epsilon = \frac{\a_{3,1}}{\a_{3,2}} - \epsilon - \delta   \label{t0}
\end{align}

Fix arbitrary $v_3 > 0$. Let $\lambda_1 = \big(
\frac{\a_{3,1}-\a_{3,2}}{\a_{1,1}-\a_{1,2}} - \frac{\a_{3,1}}{\a_{1,1}} \big)
v_3$. Note that:
\begin{align*}
\frac{\a_{3,1}-\a_{3,2}}{\a_{1,1}-\a_{1,2}} > \frac{\a_{3,1}}{\a_{1,1}}
 \;\;\Leftrightarrow\;\; &1 - \frac{\a_{3,2}}{\a_{3,1}} > 1 -
\frac{\a_{1,2}}{\a_{1,1}}
 \\ \Leftrightarrow\;\; &\frac{\a_{1,1}}{\a_{1,2}} < \frac{\a_{3,1}}{\a_{3,2}}
\end{align*}
This holds by (\ref{t0}), and thus $\lambda_1 > 0$. Now let $\lambda_2 = \big(
\frac{\a_{3,1}-\a_{3,2}}{\a_{2,1}-\a_{2,2}} - \frac{\a_{3,2}}{\a_{2,2}} \big)
v_3$. Note that:

\begin{align*}
\frac{\a_{3,1}-\a_{3,2}}{\a_{2,1}-\a_{2,2}} > \frac{\a_{3,2}}{\a_{2,2}}
 \;\;\Leftrightarrow\;\; &\frac{\a_{3,1}}{\a_{3,2}} - 1 >
\frac{\a_{2,1}}{\a_{2,2}}
- 1
\\ \Leftrightarrow\;\; &\frac{\a_{2,1}}{\a_{2,2}} < \frac{\a_{3,1}}{\a_{3,2}}
\end{align*}
This also holds by (\ref{t0}), and thus $\lambda_2 > 0$. Now let $v_1 =
\frac{\a_{3,1}-\a_{3,2}}{\a_{1,1}-\a_{1,2}} v_3 - \gamma_1$, for some $\gamma_1
\in (0,\lambda_1)$. We have:
\begin{align} &\frac{\a_{3,1}}{\a_{1,1}} < \frac{v_1}{v_3} <
\frac{\a_{3,1}-\a_{3,2}}{\a_{1,1}-\a_{1,2}} \label{t1} \end{align}

\noindent And let $v_2 = \frac{\a_{3,2}}{\a_{2,2}}v_3 + \gamma_2$, for some
$\gamma_2 \in (0,\lambda_2)$. We have:
\begin{align} &\frac{\a_{3,2}}{\a_{2,2}} < \frac{v_2}{v_3} <
\frac{\a_{3,1}-\a_{3,2}}{\a_{2,1}-\a_{2,2}} \label{t2} \end{align}

\noindent We refine our specification of $\gamma_1$ and $\gamma_2$ such that:
\begin{align*} &\frac{1}{\a_{3,2}v_3}\big[ (\a_{1,1}-\a_{1,2})\gamma_1 +
(\a_{2,1}-\a_{2,2}) \gamma_2 \big] < \delta,  
\\ &\gamma_2 > \frac{\a_{1,1}-\a_{1,2}}{\a_{2,2}} \gamma_1 \text{, and}
%
\\ &\frac{\a_{1,1}-\a_{1,2}}{\a_{1,1}v_3} \Big( \frac{\a_{1,1}}{\a_{3,2}}
\gamma_1 + \frac{\a_{2,1}-\a_{2,2}}{\a_{3,2}} \gamma_2 \Big) < \delta +
\frac{\a_{1,2}}{\a_{1,1}}\epsilon  
\end{align*}
Note that such values can be chosen consistent with everything specified above,
for arbitrary $\delta > 0$.

Letting $(i,j)$ denote the allocation in which agent $i$ receives slot 1 and
agent $j$ receives slot 2,
\if , $(3,2)$ is an efficient allocation in the absence of agent 1, and $(3,1)$
is an efficient allocation in the absence of agent 2.  \fi if we can establish
that (1,2) is an efficient allocation, then the bidder labels here correspond
to those used in Proposition \ref{prop:gef0}.
Letting $w(i,j)$ denote $\a_{i,1}v_i + \a_{j,2} v_j$, i.e., the social value of
allocation $(i,j)$, this can be established by demonstrating that: $w(1,2) >
w(2,1)$, $w(1,2) > w(1,3)$, $w(1,2) > w(2,3)$, $w(1,2) > w(3,2)$, and $w(1,2) >
w(3,1)$.
Due to space constraints, we omit demonstration of these inequalities, which
are relatively straightforward.

\if 0  
We can now observe that $p_1 = \a_{3,1}v_3$ and $p_2 = \a_{3,1}v_3 -
(\a_{1,1}-\a_{1,2})v_1$. Therefore:
\begin{align*}
\frac{p_1}{p_2} = &\frac{\a_{3,1}v_3}{\a_{3,1}v_3 - (\a_{1,1}-\a_{1,2})v_1}
\\ =\; &\frac{\a_{3,1}v_3}{\a_{3,1}v_3 - (\a_{1,1}-\a_{1,2}) \Big(
\frac{\a_{3,1}-\a_{3,2}}{\a_{1,1}-\a_{1,2}} v_3 - \gamma_1 \Big)}
\\ =\; &\frac{\a_{3,1}v_3}{\a_{3,2}v_3 + (\a_{1,1}-\a_{1,2}) \gamma_1 }
\\ <\; &\frac{\a_{3,1}}{\a_{3,2}}
\end{align*}
The inequality holds because $\a_{1,1}-\a_{1,2} > 0$, by the strictly
decreasing click-through-rate assumption, and $\gamma_1 > 0$. In light of Corollary
\ref{cor0}, this completes the proof.
\fi   

Now, since a VCG result is always globally envy-free (see, e.g.,
\cite{leonard83}), in light of Proposition \ref{prop:gef0}, to complete the
proof it is sufficient to show that $(\a_{3,1}-\a_{3,2})v_3 >
(\a_{1,1}-\a_{1,2})v_1$. We have:
\begin{align*}
&(\a_{3,1}-\a_{3,2})v_3 - (\a_{1,1}-\a_{1,2})v_1
\\ =\; &(\a_{3,1}-\a_{3,2})v_3 - (\a_{1,1}-\a_{1,2})\bigg(\frac{\a_{3,1}-\a_{3,2}}{\a_{1,1}-\a_{1,2}}v_3 - \gamma_1 \bigg)
\\ =\; &(\a_{1,1}-\a_{1,2})\gamma_1 \\ >\; &0
\end{align*}
\end{proof}

The result extends almost immediately to the $n$-bidder case if we forbid
overbidding (note that overbidding is weakly dominated in GSP). We can fix the
values of all but three agents to 0; then the problem is equivalent to one in
which the 0-valued agents do not exist, since they can't bid anything other
than 0.

\begin{corollary}  \label{cor1}
Assume strictly decreasing click-through-rates. In a two-slot setting with any
number of bidders greater than two, if there exists a bidder with click-ratio
strictly greater than that of two other agents, there always exist values such
that the VCG result is not supported without overbidding.
\end{corollary}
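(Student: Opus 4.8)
The plan is to reduce the $n$-bidder claim to the three-bidder result already established in Theorem~\ref{the:vcg}. First I would select, among the $n$ bidders, the bidder whose click-ratio is strictly larger than that of two others, together with those two others; call this triple $\{a,b,c\}$. By construction this triple has strictly decreasing click-through-rates and contains one agent whose click-ratio strictly exceeds that of the other two, so it satisfies the hypotheses of Theorem~\ref{the:vcg}. Applying that theorem to the triple yields values $v_a,v_b,v_c$ for which the VCG result is not supported in the corresponding three-bidder instance. I would then extend this to a value profile for all $n$ bidders by setting $v_i=0$ for every $i \notin \{a,b,c\}$.

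The crux is to argue that introducing these zero-valued agents changes nothing. Two observations drive this. First, because overbidding is forbidden and bids are nonnegative, every zero-valued agent is pinned to the bid $b_i=0$; since $\a_{i,j}>0$, this makes the agent's score $\a_{i,j}b_i$ identically zero, so with three positive-value agents competing for two slots there is always enough competition that such an agent can neither win a slot nor set a price. The agent is therefore \emph{inert}: the allocation and prices among $a,b,c$ are determined solely by the bids of $a,b,c$. Second, since each zero-valued agent contributes zero to social value in any allocation, the efficient allocation continues to place the two slots among $a,b,c$, and the VCG externality payments---computed by removing a winner and re-optimizing over the remaining agents---are unaffected by the presence of agents that add nothing to welfare. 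Hence the VCG result of the full instance, restricted to $\{a,b,c\}$, coincides exactly with the VCG result of the three-bidder instance.

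Putting these together, in the $n$-bidder instance every admissible (non-overbidding) bid profile consists of arbitrary bids for $a,b,c$ together with forced zero bids for everyone else, and the induced allocation and payments depend only on the bids of $a,b,c$. Thus the set of outcomes achievable among the three relevant agents is exactly the set achievable in the three-bidder game. Since Theorem~\ref{the:vcg} guarantees that none of these outcomes matches the (identical) VCG result, no admissible bid profile supports the VCG result in the $n$-bidder instance, as claimed.

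The main obstacle I anticipate lies not in the logic of the reduction but in verifying that both the target VCG result and the space of achievable outcomes are genuinely unchanged by the padding agents. The first requires checking that zero-value agents never enter the efficient allocation or alter an externality computation; the second requires the no-overbidding assumption precisely because, absent it, a zero-value agent could overbid and potentially distort the allocation or set a price. These are exactly the points flagged in the remark preceding the corollary, and making them explicit is what turns the informal ``the 0-valued agents do not exist'' intuition into a rigorous argument.
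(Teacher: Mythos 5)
Your proposal is correct and follows the same route as the paper: the paper's argument is precisely the reduction to Theorem~\ref{the:vcg} by zeroing out the values of all but the three relevant bidders and observing that the no-overbidding constraint forces those agents to bid $0$, rendering them inert. You have simply spelled out the details (inertness of zero-bidders, invariance of the VCG allocation and externality payments) that the paper leaves implicit.
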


In light of this negative result, one might ask whether the VCG result can be
recovered if we are willing to experiment with different orders of sale. It
turns out this can never help in the two-slot case.

\begin{proposition}  \label{prop:inorder}
In settings with at most three bidders, if the VCG result is not supported
when selling slots in-order, it is not supported when selling slots in reverse
order.
\end{proposition}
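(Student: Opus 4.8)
The plan is to reduce the statement to a single inequality in the values and click-through-rates, and to use the fact that, in the two-slot three-bidder case, whether the \emph{in-order} auction can realize VCG is governed exactly by the condition of Proposition~\ref{prop:gef0}. Throughout I label the bidders so that in the (VCG) efficient allocation bidder~1 takes slot~1, bidder~2 takes slot~2, and bidder~3 is unallocated; the cases with fewer than three bidders are easily verified directly. Writing (C) for the condition $(\a_{3,1}-\a_{3,2})v_3 \leq (\a_{1,1}-\a_{1,2})v_1$, the first step is to record that the in-order auction supports the VCG result \emph{if and only if} (C) holds. The ``if'' direction is Theorem~\ref{the:gef}: when (C) holds, an efficient globally envy-free equilibrium yielding the VCG result exists. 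The ``only if'' direction follows because the VCG result is globally envy-free \citep{leonard83}, so any bids making the in-order auction realize it produce a globally envy-free outcome, whence Proposition~\ref{prop:gef0} forces (C). By contraposition, the proposition then becomes: \emph{if the reverse-order auction supports the VCG result, then (C) holds.}

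The second step is to write down the VCG payments and the constraints that selling slot~2 first imposes. Computing the externalities gives $p_1^{VCG}=\max\{(\a_{2,1}-\a_{2,2})v_2+\a_{3,2}v_3,\;\a_{3,1}v_3\}$ and $p_2^{VCG}=\max\{\a_{3,2}v_3,\;\a_{3,1}v_3-(\a_{1,1}-\a_{1,2})v_1\}$; in particular $p_1^{VCG}\geq \a_{3,1}v_3$, and (C) is exactly the statement that $p_2^{VCG}=\a_{3,2}v_3$ (the first term achieves the maximum). When slot~2 is sold first, realizing the efficient allocation forces bidder~2 to win slot~2 and then bidder~1 to win slot~1 against bidder~3 alone, so the price of slot~1 is pinned down entirely by bidder~3: $p_1^{VCG}=\a_{3,1}b_3$. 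Since bidder~1 must outbid bidder~3 for slot~1, $\a_{1,1}b_1\geq \a_{3,1}b_3=p_1^{VCG}$, and since bidder~1 is a loser of slot~2 its price satisfies $p_2^{VCG}\geq \a_{1,2}b_1$. Chaining these two facts yields the key inequality $\a_{1,1}\,p_2^{VCG}\geq \a_{1,2}\,p_1^{VCG}$.

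The third step is to derive (C) from this inequality. Suppose for contradiction that (C) fails, so $p_2^{VCG}=\a_{3,1}v_3-(\a_{1,1}-\a_{1,2})v_1$. Substituting this and $p_1^{VCG}\geq \a_{3,1}v_3$ into $\a_{1,1}p_2^{VCG}\geq \a_{1,2}p_1^{VCG}$ and simplifying (dividing through by $\a_{1,1}-\a_{1,2}>0$) collapses to $\a_{3,1}v_3\geq \a_{1,1}v_1$. But efficiency of allocation $(1,2)$ requires $w(1,2)\geq w(3,2)$, i.e.\ $\a_{1,1}v_1\geq \a_{3,1}v_3$, strictly when the efficient allocation is unique. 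This contradiction forces (C) to hold, which by the first step means the in-order auction supports VCG, completing the contrapositive.

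I expect the main obstacle to be bookkeeping rather than ideas: correctly deriving the two VCG payment expressions (each splits into two subcases according to the runner-up allocation) and confirming that reverse-order pricing is genuinely forced to $p_1^{VCG}=\a_{3,1}b_3$ independent of tie-breaking, since that is what over-constrains bidder~1's bid once (C) fails. Some care is also needed for the degenerate boundary cases---non-strict click-through-rates ($\a_{1,1}=\a_{1,2}$, where the final division is invalid) and non-unique efficient allocations (where the last inequality is only weak)---which, consistent with the strictly-decreasing and unique-allocation hypotheses used elsewhere in this section, I would dispatch by direct inspection.
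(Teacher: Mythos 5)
Your argument is correct in its essentials but routes the proof quite differently from the paper. Both proofs hinge on the same central observation about the reverse order: to realize the efficient allocation at VCG prices when slot~2 is sold first, bidder~1's bid is squeezed between $p_1/\a_{1,1} \leq b_1 \leq p_2/\a_{1,2}$, which forces $\a_{1,1}p_2 \geq \a_{1,2}p_1$. Where you diverge is in how the contradiction is extracted. The paper works structurally: it first shows that in-order feasibility comes down to whether $\a_{3,1}b_3 \leq p_1$ for $b_3 = p_2/\a_{3,2}$ (the other constraint, $\a_{2,2}b_2\geq p_2$, always holds by envy-freeness); when that fails, bidder~3 strictly prefers slot~1 to slot~2 at VCG prices, so by minimality of VCG prices \citep{leonard83} the bidder whose envy constraint is tight at slot~2 must be bidder~1 --- contradicting the strict preference for slot~1 that the squeeze on $b_1$ implies. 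You instead characterize in-order feasibility by the scalar condition $(\a_{3,1}-\a_{3,2})v_3 \leq (\a_{1,1}-\a_{1,2})v_1$ (importing Theorem~\ref{the:gef} for sufficiency and Proposition~\ref{prop:gef0} plus envy-freeness of VCG for necessity), write out the VCG payments in closed form, and grind the squeeze inequality down to $\a_{3,1}v_3 \geq \a_{1,1}v_1$, contradicting efficiency of $(1,2)$. Your approach buys explicitness --- everything reduces to one inequality in the primitives, and you avoid the paper's somewhat delicate step $\a_{1,1}(v_1-b_1) > \a_{1,2}(v_1-b_1)$, which silently needs $v_1 > b_1$ --- at the cost of invoking the heavier machinery of Theorem~\ref{the:gef} where the paper just exhibits supporting bids directly. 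The residual caveats you flag (non-unique efficient allocation, $\a_{1,1}=\a_{1,2}$) are real --- your final contradiction needs strictness in $\a_{1,1}v_1 > \a_{3,1}v_3$ and a positive divisor --- but the paper's own proof breaks at exactly the same degenerate points, so treating them under the section's standing strictness hypotheses is fair.
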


\begin{proof}
Let 1 and 2 denote the bidders that receive items 1 and 2, respectively, in a
VCG result, and let $p_1$ and $p_2$ denote the respective (per-impression) VCG
prices.
Suppose first that we sell the items in order to achieve the VCG result. Since bidder 3 will be the only competition for bidder 2, it must be that $\a_{3,2}b_2=p_2$. Moreover, we can suppose that $\a_{2,1}b_2=p_1$ (lowering $b_2$ cannot help, bidding higher will interfere with the auction for item 1 either by winning the item or by raising the price). Thus, suppose bidders bid as follows:
\[b_1=v_1\enspace,\quad b_2 = \frac{p_1}{\a_{2,1}}\enspace,\quad\mbox{and}\quad b_3=\frac{p_2}{\a_{3,2}}\enspace.\]
 
By construction, these bids will achieve the VCG result as long as two other conditions are met: $\alpha_{2,2}b_2\geq p_2$ so bidder 2 still wins item 2, and $\alpha_{3,1}b_3\leq p_1$ so bidder 3 does not interfere in the sale of item 1. The first condition is always true --- envy-freeness of VCG prices implies $\alpha_{2,1}v_2-p_1\leq\alpha_{2,2}v_2-p_2$ and so
\[\alpha_{2,2}(v_2-b_2)\leq\alpha_{2,1}(v_2-b_2)=\alpha_{2,1}v_2-p_1\leq\alpha_{2,2}v_2-p_2\]
\[\alpha_{2,1}b_2\geq p_2\]
as desired. The second condition may indeed be violated.
 
It remains to show that whenever $\a_{3,1}b_3>p_1$, then selling items in reverse order cannot achieve the VCG result. Suppose we find bids that support the VCG result selling out of order. Then bidder 1 must choose a bid $b_1$ that wins item 1 without interfering in the auction for item 2, i.e., a bid $b_1$ such that $\a_{1,2}b_1\leq p_2$ and $\a_{1,1}b_1\geq p_1$. We thus get
\[\a_{1,1}v_1-p_1\geq\a_{1,1}(v_1-b_1)>\a_{1,2}(v_1-b_1)\geq\a_{1,2}v_1-p_2\]
\[\a_{1,1}v_1-p_1>\a_{1,2}v_1-p_2\]
Now, since VCG prices are the minimal envy-free prices (see \cite{leonard83}), some bidder's envy constraint must be tight for item 2 (otherwise we could lower the price of item 2 while preserving envy-freeness). It cannot be bidder 3 because, when $\a_{3,1}b_3>p_1$, bidder 3 strictly prefers item 1 at VCG prices:
\[\a_{3,2}v_3-p_2=\a_{3,2}(v_3-b_3)\leq\a_{3,1}(v_3-b_3)<\a_{3,1}v_3-p_1\enspace.\]
The only remaining bidder who can be indifferent is 1, so we can conclude that $\a_{1,1}v_1-p_1=\a_{1,2}v_1-p_2$, which contradicts the prior statement that $\a_{1,1}v_1-p_1>\a_{1,2}v_1-p_2$. Thus, when $\a_{3,1}b_3>p_1$, selling items in reverse order cannot support the VCG result either.
\end{proof}


\subsection{Price of anarchy}  \label{sec:poa}

We established in Section \ref{sec:eff} that our generalization of GSP will
always have an efficient equilibrium, but there may be many inefficient
equilibria as well. In this section we consider how much efficiency may be lost
if one of those other equilibria occurs. We will make the natural assumption
that agents don't bid more than their value; this is standard in the
literature---overbidding is a weakly dominated strategy, and with overbidding
very strange equilibria can be constructed.

We find that the efficient equilibrium is never more than twice as good as the
worst equilibrium, and this bound is tight. This result stands in contrast to
the results of \cite{caragiannis14}, who showed that in the separable setting
with two slots, the efficient equilibrium is never more than 28.2 percent
better than (i.e., yields no more than 1.282 times the social welfare of) the
worst.  One could thus say there is a significant added ``efficiency risk'' in
a setting without separability.

\begin{definition}[Price of anarchy]
Given click-through-rates $\a$ and values $v$, the price of anarchy is the ratio of the
social welfare in the efficient (best) equilibrium to that in the worst
equilibrium; i.e., letting 1 and 2 denote the respective winners of slots 1 and
2 in the efficient allocation, letting $A$ denote the set of equilibrium
allocations, and letting $a_1$ and $a_2$ denote the respective winners of slots
1 and 2 in allocation $a \in A$,
\[ \frac{\a_{1,1}v_1 + \a_{2,2}v_2}{\min_{a \in A} \big[
\a_{a_1,1}v_{a_1}+\a_{a_2,2}v_{a_2} \big]}  \]
\end{definition}

The following lemma, and especially its corollary, will be critical for the
proof bounding price of anarchy in our setting (these proofs are in the
Appendix).

\begin{lemma}  \label{lem:poa}
Let $(i,j)$ denote an allocation in which $i$ receives slot 1 and $j$ receives
slot 2. For arbitrary click-through-rates $\a$ and values $v$, letting 1 and 2 denote
the respective winners of slots 1 and 2 in the efficient allocation, the only
possible inefficient equilibria are: $(\argmax_{i \in I\setminus\{1\}}
\a_{i,1}v_i,\,1)$ and $(2,\,\argmax_{i \in I\setminus\{2\}} \a_{i,2}v_i)$.
\end{lemma}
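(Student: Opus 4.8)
The plan is to fix an arbitrary inefficient pure-strategy equilibrium with allocation $(a_1,a_2)\neq(1,2)$ (assuming, as throughout the price-of-anarchy analysis, that no one overbids, so $b_i\le v_i$) and show its allocation must take one of the two stated forms. Two facts drive everything. First, because $(1,2)$ is the efficient allocation, comparing it against $(i,2)$ and against $(1,i)$ yields $\a_{1,1}v_1\ge\a_{i,1}v_i$ for every $i\neq2$ and $\a_{2,2}v_2\ge\a_{i,2}v_i$ for every $i\neq1$; these inequalities are strict under the convention (as in Theorem~\ref{the:eff}) that the efficient allocation is unique, and in particular $1=\argmax_{i\neq2}\a_{i,1}v_i$ and $2=\argmax_{i\neq1}\a_{i,2}v_i$. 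Second, and crucially, since slots are sold best-to-worst, slot~1 is auctioned first, so the current slot-1 winner $s$ attains the \emph{largest} quantity $\a_{i,1}b_i$ over all $i$.

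Before the case analysis I would isolate two deviation facts for an unallocated bidder $d$. (L1, a ``slot-1 grab'') If $\a_{d,1}v_d>\a_{s,1}b_s$ for the current slot-1 winner $s$, then $d$ can bid $v_d$, which—since $\a_{s,1}b_s$ upper-bounds every slot-1 quantity—wins slot~1 at per-impression price at most $\a_{s,1}b_s<\a_{d,1}v_d$, a strict improvement over getting nothing. (L2, a ``coupled grab'') If $\a_{d,2}v_d>\a_{a_2,2}v_{a_2}$ for the current slot-2 winner $a_2$, then $d$ profitably deviates despite sharing one bid across both auctions: either the interval $\big(\tfrac{\a_{a_2,2}b_{a_2}}{\a_{d,2}},\tfrac{\a_{s,1}b_s}{\a_{d,1}}\big)$ is nonempty, and bidding inside it captures slot~2 without disturbing slot~1 at price $\a_{a_2,2}b_{a_2}<\a_{d,2}v_d$; or the interval is empty, in which case $\a_{s,1}b_s\le\a_{d,1}\tfrac{\a_{a_2,2}b_{a_2}}{\a_{d,2}}<\a_{d,1}v_d$ (using $\a_{a_2,2}b_{a_2}\le\a_{a_2,2}v_{a_2}<\a_{d,2}v_d$), so the hypothesis of (L1) holds and $d$ instead grabs slot~1.

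With these in hand I would split on the positions of bidders $1$ and $2$ in $(a_1,a_2)$. If $a_1=1$ then $a_2\neq2$ leaves bidder~2 unallocated with $\a_{2,2}v_2>\a_{a_2,2}v_{a_2}$, so (L2) with $d=2$ gives a contradiction. If bidder~1 is unallocated while either $a_2=2$ or both $1,2$ are unallocated, then the slot-1 winner $a_1\notin\{1,2\}$ satisfies $\a_{1,1}v_1>\a_{a_1,1}v_{a_1}\ge\a_{a_1,1}b_{a_1}$, so (L1) with $d=1$ gives a contradiction. The two surviving cases produce the two forms: if bidder~1 sits in slot~2 ($a_2=1$) and $a_1\neq\argmax_{i\neq1}\a_{i,1}v_i$, that maximizer is unallocated and (L1) applies, so $a_1$ must equal $\argmax_{i\neq1}\a_{i,1}v_i$, giving the first allocation; and if bidder~2 sits in slot~1 ($a_1=2$) with bidder~1 unallocated and $a_2\neq\argmax_{i\neq2}\a_{i,2}v_i$, that maximizer is unallocated and (L2) applies, so $a_2$ must equal $\argmax_{i\neq2}\a_{i,2}v_i$, giving the second. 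The swap allocation $(2,1)$ is subsumed by the first of these cases.

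I expect the main obstacle to be precisely the single-dimensional bid coupling addressed in (L2): a would-be slot-2 grabber cannot set its slot-1 bid independently, so raising its bid enough to capture slot~2 may instead force it to win slot~1. The resolution is the dichotomy above—whenever the ``grab slot~2 quietly'' interval is empty, the very inequality that empties it certifies that grabbing slot~1 is itself profitable—and this is exactly where the best-to-worst order of sale is essential, since selling slot~1 first makes the current slot-1 winner's quantity dominate all others and thereby keeps the slot-1 fallback always available.
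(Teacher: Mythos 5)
Your proof is correct and follows essentially the same route as the paper's: a case analysis on where bidders $1$ and $2$ sit in the candidate equilibrium, with each non-conforming case killed by an unallocated bidder's profitable deviation to win one of the slots. The only substantive difference is that your (L2) dichotomy explicitly resolves the single-bid coupling (a deviator aiming at slot 2 might instead win slot 1), a step the paper's proof passes over with ``bid truthfully and win the slot for a profit''; your version is the more careful one, and the fallback-to-slot-1 argument is exactly the right fix.
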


\begin{proof}
First note that $\forall i \neq 1$, $\a_{2,2}v_2 > \a_{i,2}v_i$, by efficiency.
If 2 is not allocated a slot and slot 2 is allocated to some $j \neq 1$, then
$\a_{2,2}v_2 > \a_{j,2}v_j$, and $b_j \leq v_j$ by assumption, and thus 2 has a
profitable deviation to bid high enough to win slot 2. Thus the only candidates
for equilibria involve 2 receiving a slot or 1 receiving slot 2.

If 1 receives slot 2 in equilibrium, then slot 1 must go to $i = \argmax_{j \in
I\setminus\{1\}} \a_{j,1}v_j$. Otherwise, since bids don't exceed values, $i$
could bid truthfully and win slot 1 for a profit).
If 2 receives slot 1 in equilibrium, then slot 2 must go to $i = \argmax_{j \in
I\setminus\{2\}} \a_{j,2}v_j$. Again, since bids don't exceed values, this
holds because otherwise $i$ could bid truthfully and win slot 2 for a profit.
Finally, if 2 receives slot 2 in equilibrium, then slot 1 must go to $i =
\argmax_{j \in I\setminus\{2\}} \a_{j,1}v_j$. Yet again this holds because
otherwise $i$ could bid truthfully and win slot 1 for a profit. In this case
$i$ is 1, and so $(1,2)$---the efficient allocation---is the only equilibrium
with 2 receiving slot 2.
\end{proof}

\begin{corollary}  \label{corr2}
Given click-through-rates $\a$ and values $v$, letting 1 and 2 denote the respective
winners of slots 1 and 2 in the efficient allocation, letting $j$ denote
$\argmax_{i \in I\setminus\{1\}} \a_{i,1}v_i$ and $k$ denote $\argmax_{i \in
I\setminus\{2\}} \a_{i,2}v_i$, the price of anarchy is:
\[ \max\bigg\{ e(1,2) \cdot 1,\; e(j,1) \cdot \frac{\a_{1,1}v_1 +
\a_{2,2}v_2}{\a_{j,1}v_j+\a_{1,2}v_1},\; e(2,k) \cdot \frac{\a_{1,1}v_1 +
\a_{2,2}v_2}{\a_{2,1}v_2+\a_{k,2}v_k} \bigg\},  \]
where $e(i,j) = 1$ if allocation $(i,j)$ is attainable in
equilibrium\footnote{Note that Theorem \ref{the:eff} entails that $e(1,2)=1$ in
all cases.} and 0 otherwise.
\end{corollary}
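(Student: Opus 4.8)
The plan is to reduce the price-of-anarchy ratio to a finite maximum over the three candidate allocations identified in Lemma~\ref{lem:poa}. Write $W^\ast = \a_{1,1}v_1 + \a_{2,2}v_2$ for the welfare of the efficient allocation $(1,2)$, and for an allocation $a=(a_1,a_2)$ write $W(a) = \a_{a_1,1}v_{a_1} + \a_{a_2,2}v_{a_2}$. Since $W^\ast$ is by definition the maximum attainable welfare, $W(a)\le W^\ast$ for every allocation, so each ratio $W^\ast/W(a)\ge 1$; this normalization is what makes the indicator bookkeeping below work.

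First I would rewrite the price of anarchy itself. By the definition in the statement it is $W^\ast/\min_{a\in A}W(a)$, where $A$ is the set of equilibrium allocations. Because $W^\ast$ is a fixed positive constant, minimizing the denominator is the same as maximizing the ratio, so the price of anarchy equals $\max_{a\in A}W^\ast/W(a)$.

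Next I would invoke Lemma~\ref{lem:poa}, which states that the only equilibrium allocations are $(1,2)$, $(j,1)$, and $(2,k)$ with $j$ and $k$ as defined in the statement. Hence $A\subseteq\{(1,2),(j,1),(2,k)\}$, and the maximization ranges over at most these three allocations. The efficient allocation $(1,2)$ is always attainable in equilibrium, by Theorem~\ref{the:eff}, so $e(1,2)=1$ and its contribution is $W^\ast/W^\ast=1$; the other two contribute $W^\ast/W(j,1)$ and $W^\ast/W(2,k)$ precisely when $e(j,1)=1$ and $e(2,k)=1$, respectively.

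Finally I would package these contributions using the indicator $e$. For an attainable allocation the term $e(a)\cdot W^\ast/W(a)$ equals the true ratio $W^\ast/W(a)\ge 1$, while for an unattainable one it equals $0$. Because the always-present $(1,2)$ term contributes $1$, any zeroed-out term is harmless, so the outer maximum selects exactly the largest ratio among attainable allocations, matching $\max_{a\in A}W^\ast/W(a)$ and yielding the claimed expression. The only step requiring care is this last bookkeeping: confirming that multiplying by the indicator and taking the maximum never discards a genuine equilibrium ratio nor wrongly includes an unattainable one. This is settled by the two observations that every real ratio is at least $1$ and that the efficient term is exactly $1$, so the maximum is insensitive to the zeroed terms and faithful to the true minimum-welfare equilibrium.
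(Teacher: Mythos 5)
Your argument is correct and is exactly the reasoning the paper intends: the corollary is presented as an immediate consequence of Lemma~\ref{lem:poa} (restricting equilibrium allocations to the three candidates) together with Theorem~\ref{the:eff} (guaranteeing $e(1,2)=1$), and your indicator bookkeeping, justified by the observation that every genuine ratio is at least $1$, is the right way to make that precise. No gaps.
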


\begin{proposition}  \label{prop:poa1}
For the two-slot, $n$-bidder setting, for any $n \geq 2$, for arbitrary
click-through-rates and values, the price of anarchy is at most 2.
\end{proposition}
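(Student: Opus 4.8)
The plan is to invoke Corollary~\ref{corr2}, which already reduces the price of anarchy to the maximum of three explicit terms, and to bound each of them by $2$. The first term equals $1$ and needs no work. The remaining two terms are the ratios associated with the only two candidate inefficient allocations identified in Lemma~\ref{lem:poa}, namely $(j,1)$ with $j=\argmax_{i\in I\setminus\{1\}}\a_{i,1}v_i$ and $(2,k)$ with $k=\argmax_{i\in I\setminus\{2\}}\a_{i,2}v_i$. Each of these terms contributes only when the corresponding allocation is actually attainable in equilibrium, so throughout I may assume the relevant allocation is an equilibrium under the no-overbidding convention.

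For the term coming from $(j,1)$, I would show the numerator $\a_{1,1}v_1+\a_{2,2}v_2$ is at most twice the denominator $\a_{j,1}v_j+\a_{1,2}v_1$ by establishing two inequalities and adding them. The first, $\a_{2,2}v_2\le\a_{2,1}v_2\le\a_{j,1}v_j$, is immediate from monotonicity of click-through-rates in the slot index together with the definition of $j$ as an $\argmax$ over $i\neq1$. The second, $\a_{1,1}v_1\le\a_{1,2}v_1+\a_{j,1}v_j$, is the crux and comes from the equilibrium condition: in allocation $(j,1)$ bidder~$1$ occupies slot~$2$, and since $j$ is the top bidder for slot~$1$, bidder~$1$ could deviate to capture slot~$1$ at per-impression price $\a_{j,1}b_j\le\a_{j,1}v_j$ (no overbidding), so requiring that this deviation not be profitable, namely $\a_{1,2}v_1-p_2\ge\a_{1,1}v_1-\a_{j,1}b_j$ with $p_2\ge0$ the payment of bidder~$1$ for slot~$2$, yields $\a_{1,1}v_1-\a_{j,1}v_j\le\a_{1,2}v_1$. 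Adding the two inequalities gives numerator $\le 2\a_{j,1}v_j+\a_{1,2}v_1\le 2(\a_{j,1}v_j+\a_{1,2}v_1)$, as desired.

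The term coming from $(2,k)$ is handled symmetrically, bounding the numerator by twice the denominator $\a_{2,1}v_2+\a_{k,2}v_k$. Again $\a_{2,2}v_2\le\a_{2,1}v_2\le\a_{2,1}v_2+\a_{k,2}v_k$ bounds one summand by the denominator. For the other summand I would use that bidder~$1$ can deviate to win slot~$1$ at price $\a_{2,1}b_2\le\a_{2,1}v_2$, since bidder~$2$ is the top bidder for slot~$1$ in this allocation; the no-profitable-deviation condition then yields $\a_{1,1}v_1\le\a_{2,1}v_2+\a_{k,2}v_k$ in all cases. Here one must treat two subcases according to bidder~$1$'s role: if $k\neq1$ then bidder~$1$ gets nothing, its current utility is $0$, and the deviation bound reads $\a_{1,1}v_1\le\a_{2,1}v_2$; if $k=1$ then bidder~$1$ sits in slot~$2$ and the current-utility term $\a_{1,2}v_1-p_2$ enters, giving $\a_{1,1}v_1\le\a_{1,2}v_1+\a_{2,1}v_2=\a_{2,1}v_2+\a_{k,2}v_k$. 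Either way both summands of the numerator are at most the denominator, so the ratio is at most $2$.

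The main obstacle is the equilibrium deviation argument used to control $\a_{1,1}v_1$: one must correctly identify, for each candidate inefficient allocation, the per-impression price bidder~$1$ would face upon raising its bid to capture slot~$1$, namely the standing winner's bid scaled by that winner's slot-$1$ click-through-rate, and then invoke no overbidding to replace that bid by the opponent's value. The only real bookkeeping subtlety is the degenerate case $k=1$ in the third term, where bidder~$1$ is already allocated slot~$2$ and its positive current utility must be carried through the inequality; once that is handled, summing the two per-term inequalities immediately yields the factor of $2$ for every term, and hence for their maximum.
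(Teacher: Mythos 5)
Your proposal is correct and follows essentially the same route as the paper's proof: reduce to the two candidate inefficient allocations via Corollary~\ref{corr2}, bound $\a_{2,2}v_2$ by $\a_{j,1}v_j$ (resp.\ $\a_{2,1}v_2$) using monotonicity, and bound $\a_{1,1}v_1$ via bidder~1's no-profitable-deviation condition for slot~1 combined with no overbidding. Your explicit treatment of the $k=1$ subcase in the $(2,k)$ analysis is in fact slightly more careful than the paper, which asserts $\a_{1,1}v_1\leq p_{1,1}$ without separating the case where bidder~1 already holds slot~2.
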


\begin{proof}
Let 1 and 2 denote the respective winners of slots 1 and 2 in the efficient
allocation, $j$ denote $\argmax_{i \in I\setminus\{1\}} \a_{i,1}v_i$, and $k$
denote $\argmax_{i \in I\setminus\{2\}} \a_{i,2}v_i$. Take arbitrary bids $b$
that realize allocation $(j,1)$ in equilibrium, if any exist.
Let $p_2$ denote the price paid by 1, and let $p_{1,1}$ denote the price 1
would have to pay were he to deviate from the equilibrium in a way that leads
him to win slot 1.
Since $b$ forms an equilibrium, $\a_{1,1}v_1-p_{1,1} \leq \a_{1,2}v_1-p_2$, and
noting that $p_{1,1} \leq \a_{j,1}v_j$, we have:
\[  \a_{1,1}v_1 - \a_{1,2}v_1 \leq p_{1,1} - p_2 \leq \a_{j,1}v_j - p_2  \]

\noindent Adding $\a_{2,2}v_2$ to both sides of this inequality and rearranging
yields:
\[ \a_{1,1}v_1 + \a_{2,2}v_2 \leq \a_{j,1}v_j + \a_{1,2}v_1 + \a_{2,2}v_2 - p_2 \]

\noindent This implies that:
\[ \frac{\a_{1,1}v_1 + \a_{2,2}v_2}{\a_{j,1}v_j+\a_{1,2}v_1}
        \leq \frac{\a_{j,1}v_j + \a_{1,2}v_1 + \a_{2,2}v_2 - p_2}{\a_{j,1}v_j+\a_{1,2}v_1}
        = 1 + \frac{\a_{2,2}v_2 - p_2}{\a_{j,1}v_j+\a_{1,2}v_1}
\]

\noindent Now noting that $\a_{2,2}v_2 \leq \a_{2,1}v_2 \leq \a_{j,1}v_j$ (by
non-decreasing click-through-rates plus the definition of $j$), we have:
\[ 1 + \frac{\a_{2,2}v_2 - p_2}{\a_{j,1}v_j+\a_{1,2}v_1}
        \leq 1 + \frac{\a_{j,1}v_j - p_2}{\a_{j,1}v_j+\a_{1,2}v_1}
        \leq 1 + 1 = 2
\]

Now take arbitrary bids $b$ that realize allocation $(2,k)$ in equilibrium, if
any exist.
Let $p_{1,1}$ denote the price 1 would have to pay were he to deviate from the
equilibrium in a way that leads him to win slot 1, and $p_{1,2}$ the price he'd
have to pay were he to deviate in a way that yields him slot 2.
Since $b$ forms an equilibrium, $\a_{1,2}v_1 \leq p_2 \leq \a_{j,2}v_j$
(otherwise $i$ could bid truthfully and win slot 2 for a profit, and $b_j \leq
v_j$). Similarly, $\a_{1,1}v_1 \leq p_{1,1} \leq \a_{2,1}v_2$ (using $b_2 \leq
v_2$). This implies that:
\begin{align*}
\frac{\a_{1,1}v_1 + \a_{2,2}v_2}{\a_{2,1}v_2+\a_{j,2}v_j}
        \leq\; &\frac{\a_{2,1}v_2 + \a_{2,2}v_2}{\a_{2,1}v_2+\a_{1,2}v_1}
\\	\leq\; &\frac{\a_{2,1}v_2+\a_{2,1}v_2}{\a_{2,1}v_2+\a_{1,2}v_1}
\\	\leq\; &\frac{\a_{2,1}v_2+\a_{2,1}v_2}{\a_{2,1}v_2} = 2
\end{align*}

\noindent We use weakly-decreasing click-through-rates in the second inequality and
non-negativity of values and click-through-rates in the third. By Corollary
\ref{corr2}, this is sufficient to establish the claim.
\end{proof}

We now show that this bound is tight by way of an example.

\begin{proposition}  \label{prop:poa2}
For the two-slot, $n$-bidder setting, for any $n \geq 2$, for arbitrary
$\epsilon > 0$, there exist click-through-rates and values such that the price of
anarchy is at least $2-\epsilon$.
\end{proposition}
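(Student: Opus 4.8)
The plan is to prove tightness by exhibiting an explicit one-parameter family of instances, indexed by a small $t > 0$, in which one of the two inefficient allocations identified in Lemma \ref{lem:poa} is sustained as a no-overbidding equilibrium while carrying only about half the efficient welfare. Corollary \ref{corr2} then evaluates the price of anarchy directly as the corresponding welfare ratio, and sending $t \to 0$ drives it to $2$.

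To design the instance I would read off from the proof of Proposition \ref{prop:poa1} which inequalities must be (nearly) tight in, say, the $(2,1)$ branch: bidder $2$ must be essentially indifferent between the slots ($\a_{2,1} \approx \a_{2,2}$), bidder $1$ must have a near-zero value for slot $2$ (a very high click-ratio, so $\a_{1,2}v_1 \approx 0$), and the two bidders must nearly tie for slot $1$ ($\a_{1,1}v_1 \approx \a_{2,1}v_2$). A concrete realization with $n=2$ is $v_1 = v_2 = 1$, $\a_{1,1} = 1-t$, $\a_{1,2} = t$, $\a_{2,1} = 1$, $\a_{2,2} = 1-t$, for $t$ small. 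Here the unique efficient allocation is $(1,2)$ with welfare $2-2t$ (bidder $1$ takes the top slot despite bidder $2$'s slightly larger slot-$1$ value, precisely because bidder $1$'s slot-$2$ value is negligible), whereas the allocation $(2,1)$ has welfare $1+t$, so the ratio is $\tfrac{2-2t}{1+t} \to 2$. For $n > 2$ I would simply append bidders of value $0$, which bid $0$ and cannot alter any outcome.

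The heart of the argument, and the step I expect to be the main obstacle, is verifying that $(2,1)$ really is a no-overbidding equilibrium \emph{despite} bidder $1$'s strong preference for slot $1$. I would exhibit the bids $b_1 = 0$, $b_2 = 1$ and check the only two relevant deviations. Bidder $1$ cannot win slot $1$: doing so requires $\a_{1,1}b_1 > \a_{2,1}b_2 = 1$, i.e.\ $b_1 > \tfrac{1}{1-t} > v_1$, which is overbidding; hence bidder $1$ is confined to slot $2$ at utility $\a_{1,2}v_1 = t$. Bidder $2$, winning slot $1$ at the (zero) second price, earns $\a_{2,1}v_2 = 1$, which strictly exceeds the at-most $\a_{2,2}v_2 = 1-t$ it could obtain by conceding slot $1$ and dropping to the free second slot. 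Pinning $b_2$ at bidder $2$'s value is exactly what raises the slot-$1$ price enough to block bidder $1$ without overbidding; this is the same tension the upper-bound proof exploited in the opposite direction.

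Finally, since Lemma \ref{lem:poa} shows $(1,2)$ and $(2,1)$ are the only candidate equilibrium allocations here (the lemma's two inefficient candidates coincide when $n = 2$), and Theorem \ref{the:eff} guarantees the efficient allocation is itself attainable, the worst equilibrium is $(2,1)$ and Corollary \ref{corr2} gives price of anarchy exactly $\tfrac{2-2t}{1+t}$. Choosing $t$ small enough that $\tfrac{2-2t}{1+t} \geq 2-\epsilon$ completes the proof.
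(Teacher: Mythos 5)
Your construction is identical to the paper's own proof: the same click-through-rates ($\a_{1,1}=1-\delta$, $\a_{1,2}=\delta$, $\a_{2,1}=1$, $\a_{2,2}=1-\delta$ with unit values), the same supporting bids $b_1=0$, $b_2=1$ for the inefficient equilibrium, the same padding with zero-value bidders for $n>2$, and the same ratio $\frac{2-2\delta}{1+\delta}\to 2$. The argument is correct; your explicit verification that bidder 1 cannot capture slot 1 without overbidding is the key step and matches what the paper relies on implicitly.
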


\begin{proof}
Consider a setting with $n$ bidders, for arbitrary $n \geq 2$. Consider the
case where two bidders, which we'll call 1 and 2, have value 1 and all other
bidders (if there are any) have value 0. Take $\a_{1,1}=1-\delta$,
$\a_{1,2}=\delta$, $\a_{2,1}=1$, and $\a_{2,2}=1-\delta$, for arbitrary $\delta
\in (0,\frac{1}{3})$. The efficient allocation is $(1,2)$, and this is
supported, e.g., by equilibrium bids $b_1=1$ and $b_2 = \delta$. But allocation
$(2,1)$ is also supported as an equilibrium, e.g., by bids $b_1=0$ and $b_2=1$.
The price of anarchy is thus:
\[ \frac{\a_{1,1}v_1 + \a_{2,2}v_2}{\a_{2,1}v_2+\a_{1,2}v_1} = \frac{(1-\delta)
+ (1-\delta)}{1 + \delta} = \frac{2-2\delta}{1+\delta} \]
For any $\epsilon > 0$,  if $\delta < \frac{\epsilon}{4-\epsilon}$ then
$\frac{2-2\delta}{1+\delta} > 2-\epsilon$. Therefore, for any $\epsilon > 0$,
we can choose $\delta \in (0,\min\{\frac{1}{3},\frac{\epsilon}{4-\epsilon}\})$,
in which case the price of anarchy will exceed $2-\epsilon$.
\end{proof}

This also shows that equilibrium revenue, as a fraction of the VCG revenue, may
be arbitrarily bad. In the example above, the $b_1=0$, $b_2=1$ equilibrium
yields 0 revenue, while the $b_1=1$, $b_2=\delta$ equilibrium yields the VCG
outcome, with revenue $\delta$.


\section{Three or more slots} \label{sec:3ormore}

So far, we have seen that many of the important properties of the GSP auction break in a two-slot setting. In this section, we will explore additional complexities that arise with more than two slots. Notably, we will see that the order in which slots are sold becomes critical --- it will no-longer be sufficient to sell slots from ``best to worst'' as in a standard GSP auction.

\subsection{Absence of equilibrium}
 
First, we show that even the existence of equilibrium is in doubt. The following example with 4 bidders and 3 slots illustrates that no bid-independent tie-breaking rule can guarantee the existence of an equilibrium for every set of valuations:
\begin{table}[h]\center
\begin{tabular}{c|c|c|c|c}
bidder&value&$\alpha_{i,1}$&$\alpha_{i,2}$&$\alpha_{i,3}$\\
\hline
1&$v_1$&1&1&0\\
2&$v_2$&1&1&0\\
3&$v_3$&1&0.5&0.5\\
4&$v_4$&1&0.5&0.5
\end{tabular}
\vspace{0mm}
\caption{\label{tab:e4x3}An example in which no pure-strategy equilibrium exists for all $v$ with a fixed, bid-independent tie-breaking rule.}
\vspace{-0mm}
\end{table}

The example in Table~\ref{tab:e4x3} uses similar techniques to the simpler one in Section~\ref{sec:eff}, so we will only sketch the reasoning here. It is straightforward to argue that any equilibrium must achieve the efficient allocation, otherwise some bidder could deviate and benefit. In Section 3, we saw that it was important to break ties in favor of the bidder who had a greater incremental value for slot 1 over slot 2. In this example, if the efficient allocation chooses bidders 1 and 2 (as well as either bidder 3 or 4), then we see the same structure replicated here --- it will be important to break ties in favor of bidder 3 and/or 4. On the other hand, if the efficient allocation chooses bidders 3 and 4, with one of bidder 1 or 2, then the same structure arises across slots 1 and 3. However, bidders 1 and 2 have a greater incremental value for slot 1 over slot 3 and therefore it is important to break ties in their favor. Thus, any tie-breaking rule that does not depend on bids will necessarily fail for at least one of these scenarios.

\if 0
\subsection{VCG result supported in 3x3 case}

Next, we revisit our result from Section~\ref{sec:vcg} with three slots. With only two slots, we saw that there were always values that would preclude the VCG result from being feasible. With a third slot, we show that this no longer is true and that, for some click-through-rates $\alpha$, the VCG result will always be feasible:

\begin{lemma}
With three slots and three bidders, there exist click-through-rates $\alpha$ such that the VCG result is always feasible.
\end{lemma}

\begin{proof}
Consider the following example:

\begin{table}[h!]
\begin{center}
\begin{tabular}{c|c|c|c|c}
$\,$bidder$\,$& $\,$value$\,$& $\,\alpha_{i,1}\,$& $\,\alpha_{i,2}\,$& $\,\alpha_{i,3}\,$ \\
\hline 
?& ?& $1$& $0.99$ & $0.98$\\
?& ?& $1$& $0.98$ & $0.97$\\
?& ?& $1$& $0.97$ & $0.96$\\
\end{tabular}
\end{center}
\end{table}

Suppose we fix $v$ and then label bidders according to the optimal allocation and let $p_i$ be the VCG payment of bidder $i$. Consider bids $b_1=v_1$, $b_2=\frac{p_1}{\alpha_{2,1}}$, and $b_3=\frac{p_2}{\alpha_{3,2}}$ where bidder 2 sets the price for item 1 and bidder 3 sets the price for item 2. These bids will generate the VCG result as long as $\alpha_{2,2}b_2\geq p_2$ (bidder 2 still bids high enough to win item 2) and $\alpha_{3,1}b_3\leq p_1$ (bidder 3 does not bid so high as to interfere in the auction for item 1).

To show that bidder 2 still wins item 2, since VCG prices are envy free, we know that
\[\alpha_{2,1}v_2-p_1\leq\alpha_{2,2}v_2-p_2\]
and so
\[\alpha_{2,2}(v_2-b_2)\leq\alpha_{2,1}(v_2-b_2)=\alpha_{2,1}v_2-p_1\leq\alpha_{2,2}v_2-p_2\]
\[\alpha_{2,1}b_2\geq p_2\]
as desired.

Now we consider the auction for item 1. For contradiction, suppose $\alpha_{3,1}b_3> p_1$. This implies
\[\frac{\alpha_{3,1}}{\alpha_{3,2}}p_2>p_1\enspace.\]
Note that the largest possible value of $\frac{\alpha{3,1}}{\alpha_{3,2}}$ is $\frac1{0.97}\approx1.03$, so we must have
\[1.03p_2>p_1\enspace.\]
On the other hand,
\[p_1= (\alpha_{2,1}-\alpha_{2,2})v_2+(\alpha_{3,2}-\alpha_{3,3})v_3\]
or
\[p_1= (\alpha_{3,1}-\alpha_{3,3})v_3\enspace,\]
and
\[p_2= (\alpha_{3,1}-\alpha_{3,3})v_3+(\alpha_{1,2}-\alpha_{1,1})v_1\]
or
\[p_2= (\alpha_{3,2}-\alpha_{3,3})v_3\]
...
\textbf{FINISH.}
\end{proof}
\fi

\subsection{The importance of the order of sale}

We just saw that selling slots in a different order can be beneficial, but is it ever necessary? In fact, we show that it is.

\begin{observation}
With four bidders and three slots, there exist values and click-through-rates such that the VCG result can be achieved, but not by selling slots in order.
\end{observation}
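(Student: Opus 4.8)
The plan is to prove the Observation by exhibiting a single explicit instance and checking it directly. I would use four bidders with the values and click-through-rates in the table below, engineered so that bidder $4$ is the unique loser, has a very large top-to-bottom click-ratio $\a_{4,1}/\a_{4,3}$, and the efficient allocation is the unique assignment of bidders $1,2,3$ to slots $1,2,3$:

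\begin{center}
\begin{tabular}{c|c|c|c|c}
bidder & value & $\a_{i,1}$ & $\a_{i,2}$ & $\a_{i,3}$ \\
\hline
$1$ & $10$ & $1$ & $0.4$ & $0.25$ \\
$2$ & $8$ & $1$ & $0.8$ & $0.5$ \\
$3$ & $7$ & $1$ & $0.6$ & $0.5$ \\
$4$ & $6$ & $1$ & $0.7$ & $0.1$
\end{tabular}
\end{center}

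First I would verify that $(1,2,3)$ is the unique welfare-maximizer and compute the per-impression VCG prices as externalities; for this instance they come out to $p_1=6$, $p_2=4.2$, $p_3=1.8$. The crucial structural feature to arrange is that deleting the slot-$3$ winner rearranges the residual economy---bidder $4$ moves up into slot $2$ and bidder $2$ is pushed down to slot $3$---so that $p_3$ strictly exceeds bidder $4$'s own slot-$3$ value $\a_{4,3}v_4=0.6$.

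Next I would argue the in-order (best-to-worst) sale cannot realize this result. When slots are sold $1,2,3$, by the time slot $3$ is auctioned the winners of slots $1$ and $2$ have been removed, so bidder $4$ is the only remaining competitor and must single-handedly set slot $3$'s price, forcing $\a_{4,3}b_4=p_3$. Since $p_3>\a_{4,3}v_4$ and $\a_{4,1}/\a_{4,3}$ is large, the forced bid $b_4=p_3/\a_{4,3}=18$ yields $\a_{4,1}b_4=18>\a_{1,1}v_1=10$, so bidder $4$ would seize slot $1$ and the allocation is destroyed; any smaller bid makes slot $3$'s price too low. This is the three-slot analogue of the in-order obstruction in Proposition~\ref{prop:inorder}, and it rules out the VCG result under in-order sale for all bids (overbidding included).

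For the positive direction I would sell in the order $(2,3,1)$ and supply explicit bids. Here bidder $4$'s top-slot preference works in its favor: bidding its value $b_4=6$ it sets the price of the last slot (slot $1$) while contributing only $\a_{4,3}b_4=0.6<p_3$ to slot $3$, so it never interferes; slot $3$'s price is set by the eventual slot-$1$ winner (bidder $1$, bidding $b_1=p_3/\a_{1,3}=7.2$) while still present, and slot $2$'s price is set at the first sale. I would check step by step that bidder $2$ wins slot $2$ at $4.2$, bidder $3$ wins slot $3$ at $1.8$, and bidder $1$ wins slot $1$ at $6$, all without overbidding.

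The hard part is that the VCG prices are not free parameters but are forced by the instance, so the rates and values must be tuned so that the \emph{same} prices both defeat the in-order sale and satisfy the chain of click-ratio inequalities that let order $(2,3,1)$ hit them exactly---each winner that sets a lower slot's price must have the relevant click-ratio exceed the corresponding price ratio $p_k/p_j$. These requirements pull against each other (enlarging $p_3$ to trip the in-order sale strains the cascade conditions), and not every reordering works: the naive reverse order $(3,2,1)$ fails, since it would require $\a_{2,2}/\a_{2,3}>p_2/p_3$, which is incompatible with bidder $2$ winning slot $2$ efficiently at the minimal envy-free price. Identifying an order whose induced sequence of price-setters is consistent with the VCG prices---here $(2,3,1)$, which removes the slot-$2$ winner before the worse slot $3$ is sold---is the crux.
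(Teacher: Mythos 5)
Your proposal is correct and follows essentially the same route as the paper's proof: an explicit four-bidder, three-slot instance in which the lone losing bidder must set the slot-3 price under in-order sale but is thereby forced to interfere in the auction for slot 1, while a different order of sale (yours is $2,3,1$; the paper's is $1,3,2$) realizes the VCG result exactly. Your numbers check out --- the efficient allocation is $(1,2,3)$ with VCG prices $(6,\,4.2,\,1.8)$, and the bids $b=(7.2,\,8,\,7,\,6)$ under order $2,3,1$ produce precisely those prices without overbidding --- and your instance has the minor advantage of keeping every click-through-rate strictly positive, which the paper's Table~\ref{tab3} example (with $\a_{4,3}=0$) does not.
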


\begin{table}[h!]\center
\begin{tabular}{c|c|c|c|c}
bidder&value&$\alpha_{i,1}$&$\alpha_{i,2}$&$\alpha_{i,3}$\\
\hline
1&10&1&0.4&0.4\\
2&8&1&0.75&$\frac17$\\
3&8&1&0.5&0.5\\
4&5&1&1&0
\end{tabular}
\caption{\label{tab3} A four-bidder, three-slot example demonstrating that selling items out of order may facilitate VCG results.}
\end{table}

\begin{proof}
Consider the four-bidder, three-slot example depicted in Table \ref{tab3}.  One
can check that the optimal assignment is (1,2,3) and VCG prices for the slots
are $p=[7,5,1]$. If slots are sold in order, then bidder 4 must set $p_3$.
Thus, bidder 4 must be bidding such that $\alpha_{4,3}b_4=p_3$, which implies
$0\times b_4 = 1$. Clearly, this is not possible, and there will be similar
problems even if we require that $\alpha_{4,3}$ is strictly positive.

However, VCG prices can be achieved by selling slots in the order 1,3,2. One can check that the bids $b=[10,7,7,5]$ achieve VCG prices.

{\em Remark:} Note that bidder 1 is indifferent between slots 1 and 3 at VCG prices while bidder 2 strictly prefers slot 2 to 3. Thus, it might seem more natural to sell slots in the order 3,1,2 and have bidder 1 set the price for slot 3. However, one can check that this will fail because  we cannot sell slot 1 after slot 3. Instead, the example is constructed carefully so that bidder 2 can also set the price of slot 3 despite her strict preference for slot 2 at VCG prices.
\end{proof}

\subsection{An auction with expressive bidding}

Finally, we show how we can build an auction that always yields the VCG result as an equilibrium by selling slots in a different order. For this mechanism, we will need bidders to place a distinct bid $b_{i,j}$ for each slot (WLOG we ignore $\alpha$ values here). First, we need to argue that an appropriate ordering exists, then we will construct a mechanism that exploits this ordering.

\subsubsection{Price support orderings and forests}

We first establish that the VCG result is a feasible outcome of an iterated
auction with expressive bidding. If $i$ is paying price $p_i$, then some other
bidder who has not already been allocated is bidding $p_i$ for the slot $i$
wins. It is not a priori clear that this is possible without requiring some
bidder to overbid her true value. We call an ordering that achieves this a {\em
price support ordering} (PSO).

Our first lemma shows that a price support ordering always exists for VCG prices More specifically, we show that a {\em price support forest} (PSF) exists --- a price support forest is a directed forest that captures the ability of bidders to support prices:
\begin{definition}
A {\em price support forest} (PSF) for prices $p_j$ with $n$ slots and bidders is a graph $F$ on $n$ nodes with the following properties:
\begin{itemize}[topsep=4pt,itemsep=-0.0ex,partopsep=1ex,parsep=1ex]
\item $F$ is a directed forest with edges pointing away from the roots.
\item Root nodes (nodes with no incoming edges) have price $p_j=0$.
\item Edge $(i,j)$ in $F$ implies that bidder $i$ can set the price for slot $j$ without overbidding.
\end{itemize}
\end{definition}
\vspace{1mm}

We will formalize ``$i$ can set the price for slot $j$'' below.

Assume that the VCG mechanism assigns bidder $i$ to slot $i$, and let $p_j$ denote the minimum Walrasian equilibrium price for slot $j$ (the VCG price of bidder $j$). The following lemma says that VCG prices always admit a PSF in which edges capture indifferences. A precisely equivalent lemma appears in~\cite{mehta13}, but we include our own version for completeness.

\begin{lemma}[VCG Price Support Lemma]\label{lem:psf}
There exists a directed forest $F$ with the following property: for any slot $j$,  either $p_j=0$, or there is an edge $(i,j)$ corresponding to a bidder who is indifferent between getting slot $i$ at price $p_{i}$ and getting slot $j$ at price $p_j$, ergo $i$ is happy to bid $b_{i,j}=p_j$ for slot $j$ and thereby set its price. Thus, $F$ is a price support forest.
\end{lemma}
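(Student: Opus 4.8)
The plan is to prove existence of the price support forest by induction on the number of slots/bidders, peeling off one slot at a time. The key structural fact I would exploit is that the VCG prices are the minimal Walrasian (envy-free) equilibrium prices (see \cite{leonard83}), so at these prices the bipartite ``indifference graph'' of the Walrasian equilibrium is rich enough to support the forest structure. Concretely, I would define the candidate edge set: an edge $(i,j)$ is admissible whenever bidder $i$ is indifferent between her own slot $i$ at price $p_i$ and slot $j$ at price $p_j$, i.e. $\a_{i,i}v_i - p_i = \a_{i,j}v_i - p_j$. The goal is to show these admissible edges, together with the zero-price roots, can be assembled into a directed forest with edges pointing away from the roots and covering every slot with $p_j > 0$.

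The heart of the argument is a minimality-of-prices exchange argument. First I would observe that any slot $j$ with $p_j = 0$ can be declared a root, so it suffices to handle slots with $p_j > 0$. For such a slot, I claim some bidder $i \neq j$ must be indifferent between slot $i$ and slot $j$ at VCG prices: if no bidder had a binding envy constraint toward slot $j$, then one could lower $p_j$ by a small amount while preserving envy-freeness (no bidder newly prefers $j$, and bidder $j$'s own utility only rises), contradicting minimality of the Walrasian prices. This is exactly the style of tightness argument already used in the proof of Proposition \ref{prop:inorder}. I would then pick, for each such slot $j$, a supporting bidder $i$ and orient the edge $i \to j$, noting that $b_{i,j} = p_j$ is weakly below $i$'s valuation-derived bid because indifference guarantees $\a_{i,j}v_i - p_j = \a_{i,i}v_i - p_i \geq 0$ under no overbidding, so no overbidding is required.

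The main obstacle is ruling out cycles so that the chosen edges genuinely form a forest rather than an arbitrary functional graph. Each slot with positive price receives exactly one incoming edge (its chosen supporter), and roots have none, so the only way to fail is a directed cycle among positive-price slots. To kill cycles I would use a potential/ordering argument: along any admissible edge $(i,j)$, combine the indifference equation with the envy-freeness inequalities to show that prices (or utilities, or a lexicographic tie-break) strictly decrease, so no directed cycle can close up. Equivalently, since VCG prices can be built up by the standard Hungarian/auction ascending process, one can order slots by the stage at which their price was last raised and orient each support edge from a later-finalized bidder to an earlier one, which is acyclic by construction. Once acyclicity is established, the graph with in-degree at most one and no cycles is precisely a directed forest with edges pointing away from the roots, and the three defining properties of a PSF hold, completing the proof. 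I would remark that this matches the equivalent statement in \cite{mehta13}.
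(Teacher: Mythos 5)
Your construction of the indifference graph and your appeal to minimality of VCG prices are the right ingredients, but the way you deploy them leaves a genuine gap. Your tightness argument is \emph{local}: for each slot $j$ with $p_j>0$ you lower $p_j$ alone and conclude that some bidder $i$ has a binding envy constraint toward $j$. That only shows every positive-price node has in-degree at least one in the indifference graph; it does not show that every such node is \emph{reachable from a zero-price root}, which is what the forest structure actually requires. The problematic configuration is exactly the one your cycle-elimination step is supposed to exclude: two (or more) positive-price slots whose only incoming indifference edges come from each other. Your local argument cannot rule this out, because lowering either price individually would break the other bidder's tight constraint. And your proposed potential argument does not rescue you: along an indifference edge $(i,j)$ we have $p_j - p_i = (\a_{i,j}-\a_{i,i})v_i$, which has no fixed sign, and a symmetric pair of bidders indifferent between two equally-priced slots gives a legitimate $2$-cycle in the indifference graph with no price decrease anywhere. (Cycles in the indifference graph are not the enemy and need not be excluded; the issue is reachability.) The Hungarian-algorithm ordering you mention as an alternative is asserted rather than proved and would need real work to connect to the indifference edges at the final prices.

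The paper closes this gap by making the minimality argument \emph{global}: let $S$ be the set of slots not reachable from any zero-price node, and lower all prices in $S$ uniformly by a small $\delta$. Constraints with both endpoints in $S$ shift by $\delta$ on both sides and are preserved; constraints from outside $S$ into $S$ are strict (otherwise the target would be reachable), so a small enough $\delta$ preserves them; all other constraints only get easier. This contradicts minimality of VCG prices unless $S$ is empty, i.e., every node is reachable from a root. A spanning forest of this reachability structure is then automatically acyclic with edges pointing away from the roots --- no separate cycle-killing argument is needed. If you replace your per-slot perturbation with this set-wise perturbation and drop the potential argument, your proof goes through.
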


\begin{proof} We show how to construct a PSF. WLOG, suppose there are $n$ advertisers and $n$ slots. Construct a directed graph $G$ with $n$ nodes in which there is an edge from node $i$ to node $j$ if advertiser $i$ is indifferent between getting node $i$ at price $p_i$ and getting node $j$ at price $p_j$, that is, $v_{i,i}-p_i=v_{i,j}-p_j$. Note that envy-freeness implies $v_{i,i}-p_i\geq v_{i,j}-p_j$ for all $(i,j)$, so the absence of an edge in $G$ means $v_{i,i}-p_i> v_{i,j}-p_j$.

{\em Claim: Every node $i$ in the graph is reachable from some node $j$ with price $p_j=0$.} Proof by contradiction. If not, then let $S\subseteq[n]$ be the set of nodes that are not reachable from a node with price zero. Let $\delta>0$ be a constant sufficiently small that it has the following properties:
\begin{itemize}
\item $v_{i,i}-p_i\geq\delta+ v_{i,j}-p_j$ for any $(i,j)$ where $v_{i,i}-p_i>v_{i,j}-p_j$ (note that this is any $(i,j)$ that is {\em not} an edge in the graph), and
\item $\delta\leq\min_{j\in S}p_j$.
\end{itemize}Now, consider prices $p'$ that uniformly lower prices for slots in $S$ by $\delta$, keeping other prices fixed:
\[p_j'=\begin{cases}p_j-\delta,&j\in S\\ p_j&otherwise.\end{cases}\]
By construction, we still have $v_{i,i}-p_i'\geq v_{i,j}-p_j'$ for every $(i,j)$, hence these prices $p'$ are envy-free. Since $p_j'\leq p_j$ for all $j$, envy-freeness of $p'$ contradicts the fact that VCG prices are the minimum envy-free prices, proving the claim.

From $G$, compute a PSF $F$ by computing a spanning forest of $G$.
\end{proof}

\begin{corollary}\label{cor:pso}
There exists an ordering $\sigma$ of slots with the following property: for any slot $j$,  either $p_j=0$, or there is some bidder with $i>j$ who is indifferent between getting slot $i$ at price $p_{i}$ and getting slot $j$ at price $p_j$, ergo $i$ is happy to bid $b_{i,j}=p_j$ for slot $j$ and thereby set its price.
\end{corollary}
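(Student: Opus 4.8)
The plan is to read the ordering directly off the price support forest $F$ produced by Lemma~\ref{lem:psf}. The corollary asks for a sale order in which every slot's price is set by a bidder who has not yet been allocated, i.e.\ whose own slot is sold strictly later. Since an edge $(i,j)$ of $F$ records exactly that bidder $i$ can set the price of slot $j$ through indifference, the desired ordering is any linear extension of $F$ that places each child before its parent.

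Concretely, I would proceed as follows. First, invoke Lemma~\ref{lem:psf} to obtain the directed forest $F$ on the $n$ slots: edges point away from the roots, roots carry price $0$, and each edge $(i,j)$ witnesses a bidder $i$ who is indifferent between slot $i$ at price $p_i$ and slot $j$ at price $p_j$. Second, because $F$ is acyclic, it admits a topological order; I want the one in which every node precedes its parent (equivalently, the reverse of a root-first ordering). I relabel the slots by this order, so that whenever $(i,j)$ is an edge of $F$ we have $i>j$. Third, I verify the claimed property: fix any slot $j$ with $p_j\neq 0$. Then $j$ is not a root of $F$, so it has a unique parent $i$, giving an edge $(i,j)$ with $i>j$; by the meaning of that edge, bidder $i$ is indifferent between slot $i$ at price $p_i$ and slot $j$ at price $p_j$, hence is happy to bid $b_{i,j}=p_j$ and set the price of slot $j$ without overbidding. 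Since $i>j$, bidder $i$ is still unallocated when slot $j$ is sold, exactly as required.

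The only real subtlety is getting the direction of the ordering right. In the iterated auction a slot must be sold \emph{before} its price-setter is allocated, so the price-setter must receive a later index; relative to $F$, whose edges point from a parent price-setter $i$ to the priced child slot $j$, this forces me to sort children ahead of parents (roots last), rather than the usual root-first topological order. Existence of such an order is immediate since $F$ is a forest and hence a DAG, so no cycle obstructs the linear extension. All the genuine work---showing that indifference edges suffice to reach every slot from a zero-price root---was already discharged in Lemma~\ref{lem:psf}, so the corollary amounts to a relabeling argument layered on top of it.
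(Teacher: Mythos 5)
Your proposal is correct and follows essentially the same route as the paper: invoke Lemma~\ref{lem:psf} to get the price support forest and then take a linear extension in which every child precedes its parent (roots last), so each non-root slot's price-setter is sold later. The paper's own proof is just a terser statement of exactly this ordering argument.
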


\begin{proof}
By Lemma~\ref{lem:psf}, we know that a price support forest $F$ exists. Compute an ordering $\sigma$ such that any parent in $F$ comes after all its children (e.g.) by a breadth-first traversal of $F$.
\end{proof}

\if 0
\begin{observation} A price-support ordering may sell ``less preferred'' slots first. For example, even when slots are ordered, a price-support ordering may not sell slots according to that order.
\end{observation}

\begin{proof} Consider the following example with 3 bidders (A, B, and C) and 3 slots (1, 2, and 3):

\begin{center}
\begin{tabular}{c|c|c|c}
Bidder&Value&$\alpha_{i,1}$&$\alpha_{i,2}$\\
\hline
1&2&1&1\\
2&2&1&1\\
3&1&1&0\\
\end{tabular}
\end{center}

The optimal assignment gives slots 1 and 2 to bidders A and B, while C gets slot 3 (i.e. nothing). If either A or B is removed, the optimal assignment shows C in the top slot and A/B in the second slot. As a result, the VCG prices for A and B are both 1. However, since C has no value for slot 2, her bid cannot set the price for slot 2 without bidding above her value. Consequently, the only ordering that could possibly support VCG prices without overbidding is (2,1,3). This is the ordering given by Lemma~\ref{lem:pso}.

\end{proof}
\fi

\subsubsection{Auctions leveraging price support}

Finally, we show how the existence of a PSF can be used to construct an auction
that supports the VCG result as an equilibrium:

\begin{definition}[Auction with a Price Support Order]
An iterated second-price auction can be implemented leveraging a price support order as follows:

\begin{enumerate}[topsep=2pt,itemsep=-0.0ex,partopsep=1ex,parsep=1ex]
\item Choose an order of sale $\sigma$ and tie-breaking rules that maximize seller revenue given bids. If a slot has only one nonzero bid, it does not get sold.
\item Run an iterated second-price auction according to the order $\sigma$ and rules selected in (1).
\end{enumerate}
\end{definition}

\begin{theorem}[Equilibrium] The iterated second-price auction with unit-demand bidders and expressive bids has an efficient equilibrium in which bidders pay VCG prices.

\end{theorem}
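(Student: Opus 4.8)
The plan is to exhibit an explicit bid profile, built directly from the price support forest, and show it is an equilibrium whose outcome is the VCG result. Assume the VCG mechanism assigns bidder $i$ to slot $i$, let $p_j$ be the VCG (minimum Walrasian) price of slot $j$, and write $u_i = v_{i,i} - p_i \geq 0$ for bidder $i$'s VCG utility. By Corollary~\ref{cor:pso} I fix an order of sale $\sigma$ together with, for each slot $j$ with $p_j > 0$, a price supporter $s(j)$: a bidder who is indifferent between her own slot and slot $j$ at VCG prices and whose own slot is sold strictly after slot $j$. I would then have every bidder submit the ``equal-utility'' bids $b_{i,i} = v_{i,i}$ on her own slot and $b_{i,j} = \max\{0,\, v_{i,j} - u_i\}$ on every other slot $j$. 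Envy-freeness of VCG prices gives $v_{i,j} - u_i \leq p_j$ for all $j \neq i$, with equality exactly when $i$ is indifferent between slots $i$ and $j$; in particular $b_{s(j),j} = p_j$, while $b_{i,i} = v_{i,i} \geq p_i$.

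First I would compute the outcome these bids induce. The crucial observation is that for every slot $j$ only bidder $j$ can bid strictly above $p_j$ (namely $v_{j,j}$), so in any second-price auction for slot $j$, over any subset of remaining bidders, the clearing price (the second-highest bid) is at most $p_j$; hence no order of sale and no tie-breaking rule can extract total revenue exceeding $\sum_j p_j$. On the other hand, the order $\sigma$ from Corollary~\ref{cor:pso} attains exactly $\sum_j p_j$: when a slot $j$ with $p_j>0$ is reached, bidder $j$ is still unallocated (she never outbids the winner of any earlier slot $k$, since $b_{j,k} \leq p_k \leq b_{k,k}$) and so is $s(j)$ (her slot is sold later), so bidder $j$ wins at the supporter's price $p_j$. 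Because $\sum_j p_j$ is simultaneously an upper bound and an achievable value, the revenue-maximizing seller must realize it, which forces every sold slot to clear at exactly $p_j$ and, since bidder $j$ is the unique bidder above $p_j$ whenever $u_j > 0$, forces the VCG allocation and VCG payments. Slots with $p_j = 0$ carry a single nonzero bid and are left unsold, which is precisely the ``insufficient competition'' case the mechanism is designed to drop; this affects neither the revenue nor the claimed prices.

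Next I would rule out profitable deviations. Fix bidder $i$ and any alternative bids; the goal is to show her resulting utility is at most $u_i$. The bids of the other players are unchanged, and for any slot $k$ that $i$ might win there is always a competitor other than $i$ bidding at least $p_k$: bidder $k$ herself when $k$ is not $i$'s slot (she bids $v_{k,k} \geq p_k$), and otherwise the supporter of $i$'s own slot (who bids $p_i$). I would then argue that a revenue-maximizing seller never schedules slot $k$ so as to let $i$ win it below $p_k$: any order that first removes every $\geq p_k$ competitor and then hands slot $k$ to $i$ at a price $q < p_k$ can be beaten, in revenue, by selling slot $k$ while such a competitor is still present, raising its clearing price to at least $p_k$. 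Consequently $i$ pays at least $p_k$ for any slot $k$ she obtains, so her utility is at most $v_{i,k} - p_k \leq u_i$ by envy-freeness (with $v_{i,i} - p_i = u_i$ for her own slot). Hence no deviation helps, and the profile is an equilibrium yielding the VCG result.

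I expect the main obstacle to be making the deviation argument fully rigorous against the seller's \emph{joint} re-optimization of the order of sale and the tie-breaking rule. The delicate point is the exchange step: showing that the revenue-maximizing seller cannot ``help'' a deviator by reordering to eliminate her high-bidding competitors before her target slot is sold, because doing so strictly lowers the revenue extracted on that slot. Carefully formalizing this monotonicity, together with the boundary cases $u_j = 0$ (where bidder $j$ ties her supporter and the tie-breaking must be pinned down) and $p_j = 0$ (unsold roots), is the part that needs the most care; the conceptual core is simply that VCG prices are competitive-equilibrium prices and that the ordering from Corollary~\ref{cor:pso} guarantees, for every priced slot, a price supporter who is still in the running.
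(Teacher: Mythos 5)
Your overall strategy matches the paper's: build a bid profile from the price support forest, use envy-freeness of VCG prices to cap the value of any deviation, and lean on the ``unsold if only one nonzero bid'' rule to prevent a deviator from winning a slot for free. Your treatment of the on-path outcome (revenue is bounded above by $\sum_j p_j$ and the PSO ordering attains it) is actually more explicit than the paper's. But your choice of bid profile opens a gap that the paper's choice deliberately closes. The paper bids $b_{i,j}=v_{i,j}$ for $i=j$, $b_{i,j}=p_j$ if $i$ is the designated supporter of $j$ in the PSF, and $b_{i,j}=0$ otherwise. With that sparse profile, \emph{every} nonzero bid placed on slot $j$ by a non-deviator is at least $p_j$, so if a deviator wins slot $j$ at a price below $p_j$, the second-highest remaining bid must be $0$, the slot has only one nonzero bid, and the rule refuses to sell it. The deviation argument is one line and is completely insensitive to how the seller reorders the sale or breaks ties.

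Your ``equal-utility'' bids $b_{i,j}=\max\{0,\,v_{i,j}-u_i\}$ destroy exactly this property: a bidder $i$ who strictly prefers her own slot to $j$ but still has $v_{i,j}>u_i$ places a bid strictly between $0$ and $p_j$ on slot $j$. A deviator who targets slot $j$ after the $\geq p_j$ competitors (bidder $j$ and the indifferent bidders) have been allocated elsewhere then finds a legitimate second-highest bid strictly below $p_j$; the slot \emph{is} sold, at a price below $p_j$, and envy-freeness no longer caps her utility at $u_k$. To rule this out you invoke a revenue-monotonicity exchange --- that the revenue-maximizing seller would never schedule things this way --- but as you yourself flag, that exchange is local (it raises the clearing price of slot $j$) while the seller's objective is global (reordering changes which bidders remain for every subsequent slot), and you have not shown the swap cannot lose more revenue elsewhere than it gains on slot $j$. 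This is the load-bearing step of the whole proof and it is left open. The clean repair is not to complete the exchange argument but to zero out all non-supporter bids, i.e., to adopt the paper's profile, after which the problematic intermediate prices simply do not exist. (Separately, your observation that achieving revenue $\sum_j p_j$ ``forces the VCG allocation'' is slightly too strong --- it only forces an allocation that permutes bidders within indifference classes at VCG prices, which is still efficient --- but that is a cosmetic point compared with the deviation gap.)
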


\begin{proof} Choose an arbitrary PSF and define bids as follows:
\[b_{i,j} = \begin{cases}v_{i,j}&i=j\\ p_j&\mbox{$i$ supports the price of $j$ in the PSF, or}\\0 &\mbox{otherwise.}\end{cases}\]

Consider deviations by a particular bidder $k$. Notice that no slot $j$ can have a bid less than $p_j$ unless it was placed by bidder $k$. We can thus conclude that if $k$ wins any slot for less than the VCG price through this defection (which is necessary for it to be profitable), then $k$ must have won the slot for free. However, our auction rules stipulate that slots will not be sold if they only have one nonzero bid, so this is impossible. 
\end{proof}


\section{Conclusion} \label{sec:conc}

The primary theoretical justification for GSP builds on the analyses of
\cite{varian07} and \cite{eos07} to argue that GSP will perform at least as
well as VCG.  Unfortunately, our results demonstrate that this is a very
fragile phenomenon---when GSP is naturally generalized as an iterated second
price auction, these performance guarantees fall apart even with small
deviations from GSP's separable model. Our work suggests a few techniques for
recovering desirable performance guarantees, such as varying the order of sale
and allowing expressive bidding, but perhaps even more importantly it points to
significant open questions that might suggest new mechanisms and principles for
implementing auctions:

\begin{itemize}[topsep=4pt,itemsep=-0.0ex,partopsep=1ex,parsep=1ex]
\item Is there a better way to generalize GSP that would preserve the performance guarantees of \cite{varian07} and \cite{eos07}?
\item What are the key principles that define GSP in theory?
\item What are the properties that capture GSP's practical popularity?
\end{itemize}

That said, our results also include a surprising positive result: {\em all}
click-through-rate profiles ensure existence of efficient equilibria in the
two-slot setting, given a specific bid-independent tie-breaking rule. We proved
that this result does not generalize to the case with more slots, but whether
bid-{\em dependent} tie-breaking rules could yield generic existence of
efficient equilibria remains an open question.
And even if no meaningful extension beyond the two-slot setting is possible,
the positive result we have may turn out to be relevant in a world of mobile
devices where only a small number of slots can be shown per page.

\bibliographystyle{named}
\bibliography{cw-gsp1}

\pagebreak

\if 0
\section*{Appendix A: A solution for CPM and CPC bidders}


A special case that admits an interesting mechanism is the extreme combination of CPC bidders who have standard separable click-through-rates and CPM bidders who do not care which slot they get. That is, CPC bidders have $\alpha_1^{CPC}>\alpha_2^{CPC}\dots$ while CPM bidders have $\alpha_1^{CPM}=\alpha_2^{CPM}=\dots$. Notice that an efficient allocation will assign CPM bidders to the slots that CPC bidders want least, i.e. to ``later'' slots in the ordering. CPM bidders who obtain more expensive slots may reduce their bids in an effort to win a later slot at a lower price. It is this effect that we exploited in Section~\ref{sec:eff} to create an example without an equilibrium.

To design a mechanism that has a good equilibrium, we explicitly assign CPM bidders to low-ranked slots as follows:
\begin{enumerate}
\item Sort bidders in decreasing order of bid in two queues, one containing CPC bidders and one containing CPM bidders.
\item While unassigned slots remain, let $\alpha_+$ be the click-through-rate of the best unassigned slot. Let $b_+^{CPC}$ be the highest remaining CPC bid and $b_+^{CPM}$ the highest remaining CPM bid.
\begin{itemize}
\item If $\alpha_+\times b_+^{CPC}\geq b_+^{CPM}$, then sell the highest remaining slot to the CPC bidder.
\item Otherwise, sell the lowest remaining slot to the CPM bidder.
\end{itemize}
\end{enumerate}
It is straightforward to compute VCG prices and argue that they can be achieved by an equilibrium in which each CPC (resp. CPM) bidder sets the price for the CPC (resp. CPM) bidder above her in order, and the last bidder allocated has its price set by the highest unallocated bidder of any type. We omit a formal argument.
\fi

\section*{Appendix}

\subsection*{Proofs for Section \ref{sec:eff}}

The proof of Theorem \ref{the:eff} makes use of two important lemmas, the first
giving sufficient conditions for existence of an efficient equilibrium, and the
second demonstrating a relationship between event-ratios and efficient
allocations among pairs of agents.

\begin{lemma}   \label{lem:eff}
Consider a two-slot setting with $n \geq 3$ bidders, arbitrary weakly
decreasing\footnote{I.e., $\a_{i,1} \geq \a_{i,2}$, $\forall i \in I$.} and
click-through-rates, arbitrary values, and an arbitrary efficient allocation, letting 1
and 2 denote the respective winners of slots 1 and 2. Bids $b$ yield this
allocation in equilibrium if (A0)--(A6) hold or (B1)--(B6) and the following
condition on tie-breaking holds: if $\a_{2,1}b_2=\max_{j \in I\setminus\{1,2\}}
\a_{j,1} b_j$, if 1 were hypothetically to underbid this value, slot 1 would be
allocated to $\argmax_{j \in I\setminus\{1,2\}} \a_{j,1} b_j$.

\hspace{-12mm}
\begin{minipage}{.5\linewidth}
\begin{align*}
&\a_{2,1} b_2 \geq \max_{j \in I\setminus\{1,2\}} \a_{j,1} b_j  \tag{A0}
\\ &\a_{2,1} b_2 < \a_{1,1} v_1  \tag{A1}
\\ &\a_{2,2} b_2 > \max_{i \in I\setminus\{1,2\}} \a_{i,2} b_i  \tag{A2}
\\ &\max_{i \in I\setminus\{1,2\}} \a_{i,2} b_i \geq \a_{2,1}b_2 - (\a_{1,1}-\a_{1,2})v_1  \tag{A3}
\\ &\a_{2,2} b_2 \geq \max_{k \in I\setminus\{1,2\}} \a_{k,2} v_k  \tag{A4}
\\ &\max_{i \in I\setminus\{1,2\}} \a_{i,2} b_i \leq \a_{1,1}v_1 - (\a_{2,1}-\a_{2,2}) v_2  \tag{A5}
\\ &\max_{i \in I\setminus\{1,2\}} \a_{i,2} b_i \leq \a_{2,2} v_2  \tag{A6}
\end{align*}
\end{minipage} \hspace{5mm}
\begin{minipage}{.5\linewidth}
\begin{align*}
&\max_{j \in I\setminus\{1,2\}} \a_{j,1} b_j \geq \a_{2,1} b_2  \tag{B0}
\\ &\max_{j \in I\setminus\{1,2\}} \a_{j,1} b_j < \a_{1,1} v_1  \tag{B1}
\\ &\a_{2,2} b_2 > \max_{i \in I\setminus\{1,2\}} \a_{i,2} b_i  \tag{B2}
\\ &\max_{j \in I\setminus\{1,2\}} \a_{j,1} b_j \leq \a_{2,2}b_2 + (\a_{1,1}-\a_{1,2})v_1  \tag{B3}
\\ &\a_{2,2} b_2 \geq \max_{k \in I\setminus\{1,2\}} \a_{k,2} v_k  \tag{B4}
\\ &\max_{i \in I\setminus\{1,2\}} \a_{i,2} b_i \leq \a_{1,1}v_1 - (\a_{2,1}-\a_{2,2}) v_2  \tag{B5}
\\ &\max_{i \in I\setminus\{1,2\}} \a_{i,2} b_i \leq \a_{2,2} v_2  \tag{B6}
\end{align*}
\end{minipage}

\end{lemma}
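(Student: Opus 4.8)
The plan is to treat the statement as a pure verification. I fix the bid profile $b$ (read alongside $b_1=v_1$, which the appearance of $\alpha_{1,1}v_1$ in (A1), (A3), (A5) and the no-overbidding efficient construction of Theorem~\ref{the:eff} both signal), determine the allocation and per-impression prices produced by the iterated second-price auction selling slots best-to-worst, and then rule out every unilateral deviation. Because each bidder submits a single per-click bid, any deviation merely changes which of three outcomes a bidder obtains---win slot 1, win slot 2, or win nothing---so every bidder has at most two alternatives to examine, and the argument reduces to a finite case analysis that pairs each deviation with one listed inequality.

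First I would pin down the realized outcome. Selling slot 1 first, bidder 1 has the top slot-1 score and wins at the second-highest such score; (A0) (resp.\ (B0)) identifies that price as $\alpha_{2,1}b_2$ (resp.\ $\max_{j\neq 1,2}\alpha_{j,1}b_j$), and (A1)/(B1) force $\alpha_{1,1}v_1$ above it so bidder 1 both wins and profits. After removing bidder 1, (A2)/(B2) give bidder 2 the top remaining slot-2 score, so bidder 2 wins slot 2 at price $\max_{i\neq 1,2}\alpha_{i,2}b_i$, realizing the target allocation $(1,2)$. I would then dispatch deviations bidder by bidder. For $k\notin\{1,2\}$: moving to slot 2 costs $\alpha_{2,2}b_2$ and is unprofitable by (A4)/(B4), while moving to slot 1 costs $\alpha_{1,1}v_1$ and gives utility $\alpha_{k,1}v_k-\alpha_{1,1}v_1\leq 0$ directly from efficiency of $(1,2)$, so no extra condition is needed. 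For bidder 2: moving to slot 1 costs $\alpha_{1,1}v_1$ and is ruled out by (A5)/(B5), and staying in slot 2 is individually rational by (A6)/(B6).

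The delicate deviation is bidder 1 dropping to slot 2, because when bidder 1 underbids for slot 1 the identity of the new slot-1 winner fixes the price bidder 1 faces for slot 2. In case A bidder 2 inherits slot 1, so bidder 1 faces only $\max_{i\neq 1,2}\alpha_{i,2}b_i$, and (A3) is exactly the inequality making this unprofitable; in case B the runner-up $\argmax_{j\neq 1,2}\alpha_{j,1}b_j$ inherits slot 1, bidder 2 remains and sets the higher price $\alpha_{2,2}b_2$, and (B3) is the matching inequality.

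The step I expect to be the main obstacle is this slot-2 deviation at the boundary $\alpha_{2,1}b_2=\max_{j\neq 1,2}\alpha_{j,1}b_j$, where bidder 2 and the runner-up tie for slot 1 once bidder 1 underbids. Broken toward bidder 2, the tie would hand bidder 1 the cheaper slot-2 price and a possible gain; the stated tie-breaking hypothesis instead forces the runner-up to take slot 1, restoring the price $\alpha_{2,2}b_2$ and keeping (B3) the operative constraint. I would therefore fold this boundary explicitly into case B, and invoke the weakly-decreasing assumption $\alpha_{i,1}\geq\alpha_{i,2}$ wherever it is needed to keep the deviation prices ordered; the remaining checks are routine rearrangements of the displayed inequalities.
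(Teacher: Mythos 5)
Your proposal is correct and follows essentially the same route as the paper's proof: fix $b_1=v_1$, use (A0)/(B0) to identify who sets bidder 1's price, pair each remaining condition with exactly the deviation it rules out (including dispatching losers' deviations to slot 1 via efficiency alone), and invoke the tie-breaking hypothesis precisely to keep $\a_{2,2}b_2$ as the price bidder 1 would face for slot 2 in the B case. The only cosmetic difference is that the paper additionally remarks that (A3) remains sufficient in case A even if a hypothetical tie for slot 1 is broken toward the runner-up, because (A2) makes the resulting slot-2 price only higher.
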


\begin{proof}
Take $b_1=v_1$.
The (A0)/(B0) condition dichotomizes the set of possible bids into those where
2 is setting the price for 1 (A) and those where some other agent is.

(A1) and (B1)---in their respective contexts of (A0) and (B0)---imply that 1
wins slot 1 and has no incentive to deviate in a way that gives him no slot
(for price 0).

(A2) and (B2) entail that 2 receives slot 2.

(A3) and (B3) entail that 1 has no incentive to deviate in a way that gives him
slot 2, for price $\a_{3,2}b_3$ in the A case and $\a_{2,2}b_2$ in the B case.
(A3) is sufficient even if a hypothetical tie for slot 1 between 2 and 3 is
broken in favor of 3, due to (A2). In the case of (B3) we are using the
tie-breaking assumption in the lemma statement.

(A4) and (B4) entail that no losing agent has an incentive to deviate in a way
that yields him slot 2. (A5) and (B5) entail that 2 has no incentive to bid in
a way that instead yields him slot 1; (A6) and (B6) entail that 2 has no
incentive to bid in a way that instead yields him no slot (for price 0). We
know by efficiency and the fact that $b_1=v_1$ that no losing agent can benefit
by bidding to receive slot 1 (for price $\a_{1,1}v_1$).
\end{proof}

\begin{lemma}  \label{lem2i}
For arbitrary values $v$ and click-through-rates $\a$, letting 1 and 2 denote
the respective winners of slots 1 and 2 in the efficient allocation, $\forall i
\in I\setminus\{1,2\}$,
%
\[ \frac{\a_{2,1}}{\a_{2,2}} \geq \frac{\a_{i,1}}{\a_{i,2}} \Rightarrow
(\a_{2,1}-\a_{2,2})v_2 \geq (\a_{i,1}-\a_{i,2})v_i  \]
\end{lemma}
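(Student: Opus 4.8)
The plan is to reduce both sides of the desired inequality to products of two nonnegative factors and then compare those products factor-by-factor. First I would factor each side by pulling out the slot-2 value: write $(\a_{2,1}-\a_{2,2})v_2 = \a_{2,2}v_2\bigl(\tfrac{\a_{2,1}}{\a_{2,2}}-1\bigr)$ and likewise $(\a_{i,1}-\a_{i,2})v_i = \a_{i,2}v_i\bigl(\tfrac{\a_{i,1}}{\a_{i,2}}-1\bigr)$. This is legitimate since all click-through-rates are assumed strictly positive, so dividing by $\a_{2,2}$ and $\a_{i,2}$ is harmless. The claim then becomes a comparison of the two products $\a_{2,2}v_2\cdot(\tfrac{\a_{2,1}}{\a_{2,2}}-1)$ and $\a_{i,2}v_i\cdot(\tfrac{\a_{i,1}}{\a_{i,2}}-1)$.

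The key input is an efficiency inequality. Since $(1,2)$ is the efficient allocation, the feasible alternative that keeps bidder $1$ in slot $1$ but puts bidder $i$ in slot $2$ cannot have higher welfare, i.e. $\a_{1,1}v_1+\a_{2,2}v_2 \geq \a_{1,1}v_1+\a_{i,2}v_i$, which gives
\[ \a_{2,2}v_2 \geq \a_{i,2}v_i \geq 0, \]
the second inequality holding by nonnegativity of values and positivity of click-through-rates. This controls the ``base'' factors. For the ``click-ratio surplus'' factors, the hypothesis $\tfrac{\a_{2,1}}{\a_{2,2}}\geq\tfrac{\a_{i,1}}{\a_{i,2}}$ gives directly $\tfrac{\a_{2,1}}{\a_{2,2}}-1 \geq \tfrac{\a_{i,1}}{\a_{i,2}}-1$, and weakly decreasing click-through-rates ($\a_{i,1}\geq\a_{i,2}$) ensure $\tfrac{\a_{i,1}}{\a_{i,2}}-1\geq 0$, so both surplus factors are nonnegative as well.

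Finally I would combine these two comparisons using the elementary fact that $A\geq B\geq 0$ and $C\geq D\geq 0$ imply $AC\geq BC\geq BD$, instantiated with $A=\a_{2,2}v_2$, $B=\a_{i,2}v_i$, $C=\tfrac{\a_{2,1}}{\a_{2,2}}-1$, and $D=\tfrac{\a_{i,1}}{\a_{i,2}}-1$. Unwinding the factorization then yields $(\a_{2,1}-\a_{2,2})v_2\geq(\a_{i,1}-\a_{i,2})v_i$ as required. I do not anticipate a genuine obstacle here; the only point needing care is to confirm that all four factors are nonnegative before multiplying the inequalities, since the monotonicity-under-multiplication step fails without sign control. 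The efficiency inequality $\a_{2,2}v_2\geq\a_{i,2}v_i$ and the weakly-decreasing-rate assumption are exactly what supply that sign control.
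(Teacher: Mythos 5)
Your proof is correct. It rests on exactly the same two ingredients as the paper's proof---the efficiency comparison $\a_{2,2}v_2 \geq \a_{i,2}v_i$ (obtained by swapping $i$ into slot 2) and the click-ratio hypothesis---but it combines them differently. The paper argues the contrapositive: starting from $\a_{2,2}v_2 > \a_{i,2}v_i$ it derives, through a chain of algebraic equivalences, that $(\a_{i,1}-\a_{i,2})v_i > (\a_{2,1}-\a_{2,2})v_2$ would force $\frac{\a_{i,1}}{\a_{i,2}} > \frac{\a_{2,1}}{\a_{2,2}}$. You instead prove the implication directly by factoring each side as $(\text{slot-2 welfare})\times(\text{click-ratio surplus})$ and multiplying the two resulting inequalities, with the sign checks ($\a_{i,2}v_i \geq 0$ from nonnegativity, $\frac{\a_{i,1}}{\a_{i,2}}-1 \geq 0$ from weakly decreasing rates) supplying the hypotheses of the product rule $A\geq B\geq 0,\ C\geq D\geq 0 \Rightarrow AC\geq BD$. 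Your version is arguably more transparent---it makes explicit why the efficiency of bidder 2's claim to slot 2 and the click-ratio ordering are each separately sufficient for their respective factors, and it handles the degenerate case $v_i=0$ without comment---while the paper's version keeps everything in the form of ratio comparisons that it reuses verbatim in the surrounding equilibrium conditions. Both are complete; there is no gap in yours.
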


\begin{proof}
Consider arbitrary $i \in I\setminus\{1,2\}$. We have:
\begin{align*}
\a_{2,2} v_2 > \a_{i,2} v_i
 \;\Rightarrow\; &\frac{\a_{2,1}-\a_{2,2}}{\a_{i,2}} v_2 \geq \frac{\a_{2,1}-\a_{2,2}}{\a_{2,2}} v_i
\\ \Leftrightarrow\; &\frac{\a_{i,2}v_i + (\a_{2,1}-\a_{2,2})v_2}{\a_{i,2}} \geq \frac{\a_{2,1}}{\a_{2,2}} v_i
\\ \Leftrightarrow\; &\frac{\a_{i,2}v_i + (\a_{2,1}-\a_{2,2})v_2}{\a_{2,1}} \geq \frac{\a_{i,2}}{\a_{2,2}} v_i
\end{align*}

\noindent Since the first inequality holds by efficiency, the last inequality holds too.
Now note that:
\begin{align*}
&(\a_{i,1}-\a_{i,2})v_i > (\a_{2,1}-\a_{2,2})v_2
 \;\;\Leftrightarrow\;\; \frac{\a_{i,1}}{\a_{2,1}} v_i > \frac{\a_{i,2}v_i + (\a_{2,1}-\a_{2,2})v_2}{\a_{2,1}}
\end{align*}

\noindent Therefore $(\a_{i,1}-\a_{i,2})v_i > (\a_{2,1}-\a_{2,2})v_2
\Rightarrow \frac{\a_{i,1}}{\a_{2,1}} v_i > \frac{\a_{i,2}}{\a_{2,2}} v_i$,
i.e., $(\a_{i,1}-\a_{i,2})v_i > (\a_{2,1}-\a_{2,2})v_2 \Rightarrow
\frac{\a_{i,1}}{\a_{i,2}} > \frac{\a_{2,1}}{\a_{2,2}}$. We state the lemma in
the form of the contrapositive, for more direct application in the results that
follow.
\end{proof}

\begin{theore}[\ref{the:eff}]
For a two-slot setting with any number of bidders, for arbitrary values and
click-through-rates, if there is a unique efficient allocation and ties are broken in
favor of an agent with highest click-ratio, then GSP has an efficient
equilibrium without overbidding.
\end{theore}

\begin{proof}
If there is one bidder the result holds trivially. If there are two or more
bidders, let 1 and 2 denote the respective winners of slots 1 and 2 in the
efficient allocation. If there are exactly two bidders, it is easy to verify
that $b_1=v_1$ and $b_2=0$ is an equilibrium.
So assume there are $n \geq 3$ bidders, and let 3 denote $\argmax_{j \in
I\setminus\{1,2\}} \a_{j,2}v_j$.

First assume $\frac{\a_{2,1}}{\a_{2,2}} \geq \frac{\a_{3,1}}{\a_{3,2}}$, and
consider the following bid profile: $b_1=v_1$,
$b_2=\frac{\a_{3,2}v_3+(\a_{2,1}-\a_{2,2})v_2}{\a_{2,1}}$, $b_3=v_3$, and
$b_i=0$, $\forall i \in I\setminus\{1,2,3\}$. Note that no agent overbids in
this profile, and all bids are non-negative.\footnote{To see that $b_2 \geq 0$,
note that $\frac{\a_{2,1}}{\a_{2,2}} \geq \frac{\a_{3,1}}{\a_{3,2}}$ implies
$(\a_{3,1}-\a_{3,2}) v_3 \leq (\a_{2,1}-\a_{2,2})v_2$, by Lemma \ref{lem2i}.
This in turn implies $\a_{2,2}v_2 \leq \a_{2,1}v_2 + \a_{3,2}v_3$, since
click-through-rates and values are non-negative.} We will show that (A0)--(A6) hold.

(A0) reduces to $(\a_{3,1}-\a_{3,2}) v_3 \leq (\a_{2,1}-\a_{2,2})v_2$, which
holds by Lemma \ref{lem2i}.

(A1) reduces to $\a_{2,1}v_2+\a_{3,2}v_3 < \a_{1,1}v_1+\a_{2,2}v_2$, which
holds by efficiency.

(A2) reduces to $\a_{3,2}v_3 < \a_{2,2}v_2$, which holds by efficiency.

(A3) reduces to $(\a_{2,1}-\a_{2,2})v_2 \leq (\a_{1,1}-\a_{1,2})v_1$, which
holds by efficiency.

(A4) reduces to $\a_{3,2}v_3 \leq \a_{2,2}v_2$, which, like (A2), holds by
efficiency.

(A5) reduces to $\a_{2,1}v_2+\a_{3,2}v_3 \leq \a_{1,1}v_1+\a_{2,2}v2$, which holds by efficiency.

(A6) reduces to $\a_{3,2}v_3 \leq \a_{2,2}v_2$, which, like (A2) and (A4),
holds by efficiency.

Now, to prove the theorem it is sufficient to show that if
$\frac{\a_{2,1}}{\a_{2,2}} < \frac{\a_{3,1}}{\a_{3,2}}$, there exists a set of
bids satisfying (B0)--(B6).
Assume $\frac{\a_{2,1}}{\a_{2,2}} < \frac{\a_{3,1}}{\a_{3,2}}$ and take
$b_1=v_1$, $b_2=\frac{\a_{3,2}}{\a_{2,2}} v_3$,
$b_3=\frac{\a_{2,1}}{\a_{2,2}}\frac{\a_{3,2}}{\a_{3,1}}v_3$, and $b_i=0$,
$\forall i \in I\setminus\{1,2,3\}$. Again, no agent overbids in this
profile---in fact, 2 is bidding the minimum he would need to bid if 3 were to
bid truthfully, and 3 is underbidding (since
$\frac{\a_{2,1}}{\a_{2,2}}\frac{\a_{3,2}}{\a_{3,1}}<1$). Moreover, if ties are
broken by click-ratio, then any hypothetical ties between 2 and 3 will be
broken in favor of 3, which satisfies the conditions of Lemma \ref{lem:eff}.

(B0) holds with equality.

(B1) follows from efficiency, since 3 is not overbidding.

(B2) reduces to $1 > \frac{\a_{2,1}}{\a_{2,2}}\frac{\a_{3,2}}{\a_{3,1}}$, which
holds if $\frac{\a_{2,1}}{\a_{2,2}} < \frac{\a_{3,1}}{\a_{3,2}}$.

(B3) reduces to $\a_{3,2}(\a_{2,1}-\a_{2,2})v_3 \leq \a_{2,2}
(\a_{1,1}-\a_{1,2})v_1$. To see that this holds, note that efficiency entails
that $\a_{3,2} v_3 \leq \a_{2,2}v_2$ and $(\a_{2,1}-\a_{2,2})v_2 \leq
(\a_{1,1}-\a_{1,2})v_1$.

(B4) holds with equality.

(B5) reduces to $\a_{2,1}v_2 +
\frac{\a_{2,1}}{\a_{2,2}}\frac{\a_{3,2}}{\a_{3,1}}\a_{3,2}v_3 \leq
\a_{1,1}v_1+\a_{2,2}v_2$, which follows from efficiency and the fact that
$\frac{\a_{2,1}}{\a_{2,2}}\frac{\a_{3,2}}{\a_{3,1}} < 1$.

(B6) follows from (B2) and the fact that $b_2 \leq v_2$.
\end{proof}


\subsection*{Proofs for Section \ref{sec:gef}}

The missing proof from Section \ref{sec:gef} is that of Theorem \ref{the:gef}.
On our way to proving that, we establish the following two lemmas,
characterizing global envy-freeness for the two-item, $n$-bidder case (Lemma
\ref{lem:gef0}), and then consolidating those constraints with efficiency
constraints for the three-bidder case (Lemma \ref{lemE2}). Lemma \ref{lem:gef0}
will be important for the proof of Theorem \ref{the:gef2}, omitted from the main
text and included below.

\begin{lemma}  \label{lem:gef0}
For arbitrary click-through-rates $\a$, values $v$, and bids $b$, the resulting outcome
is globally envy-free if and only if, letting 1 denote $\argmax_{i\in I}
\a_{i,1}b_i$, 2 denote $\argmax_{i\in I\setminus\{1\}} \a_{i,2}b_i$, 3 denote
$\argmax_{i\in I\setminus\{1,2\}} \a_{i,2}b_i$, and 4 denote $\argmax_{i\in
I\setminus\{1,2\}} \a_{i,1}b_i$: (A-GEF0)--(A-GEF6) $\vee$ (B-GEF0)--(B-GEF6)
$\vee $ (C-GEF0)--(C-GEF6).

\hspace{-15mm}
\begin{minipage}{.5\linewidth}
\begin{align*}
&b_2 \geq \frac{\a_{4,1}}{\a_{2,1}}b_4  \tag{A-GEF0}
\\&b_2 \leq \frac{\a_{3,2}b_3 + (\a_{1,1}-\a_{1,2})v_1}{\a_{2,1}}  \tag{A-GEF1}
\\ &b_2 \leq \frac{\a_{1,1}}{\a_{2,1}}v_1  \tag{A-GEF2}
\\ &b_2 \geq \frac{\a_{3,2}b_3 + (\a_{2,1}-\a_{2,2})v_2}{\a_{2,1}}  \tag{A-GEF3}
\\ &b_3 \leq \frac{\a_{2,2}}{\a_{3,2}}v_2  \tag{A-GEF4}
\\ &b_2 \geq \frac{1}{\a_{2,1}}\max_{i \in I\setminus\{1,2\}} \a_{i,1} v_i  \tag{A-GEF5}
\\ &b_3 \geq \frac{1}{\a_{3,2}}\max_{i \in I\setminus\{1,2\}} \a_{i,2} v_i  \tag{A-GEF6}
\end{align*}
\end{minipage} \hspace{5mm}
\begin{minipage}{.5\linewidth}
\begin{align*}
&b_2 \leq \frac{\a_{3,1}}{\a_{2,1}}b_3,\; b_3 = \frac{\a_{4,1}}{\a_{3,1}}b_4  \tag{B-GEF0}
\\ &b_3 \leq \frac{\a_{1,1}-\a_{1,2}}{\a_{3,1}-\a_{3,2}}v_1  \tag{B-GEF1}
\\ &b_3 \leq \frac{\a_{1,1}}{\a_{3,1}}v_1  \tag{B-GEF2}
\\ &b_3 \geq \frac{\a_{2,1}-\a_{2,2}}{\a_{3,1}-\a_{3,2}}v_2  \tag{B-GEF3}
\\ &b_3 \leq \frac{\a_{2,2}}{\a_{3,2}}v_2  \tag{B-GEF4}
\\ &b_3 \geq \frac{1}{\a_{3,1}}\max_{i \in I\setminus\{1,2\}} \a_{i,1} v_i  \tag{B-GEF5}
\\ &b_3 \geq \frac{1}{\a_{3,2}}\max_{i \in I\setminus\{1,2\}} \a_{i,2} v_i  \tag{B-GEF6}
\end{align*}
\end{minipage}

\hspace{-25mm} \begin{minipage}{1.0\linewidth} \begin{align*}
&b_2 \leq \frac{\a_{4,1}}{\a_{2,1}}b_4,\; b_3 < \frac{\a_{4,1}}{\a_{3,1}}b_4  \tag{C-GEF0}
\\ &b_4 \leq \frac{\a_{3,2}b_3+(\a_{1,1}-\a_{1,2})v_1}{\a_{4,1}}  \tag{C-GEF1}
\\ &b_4 \leq \frac{\a_{1,1}}{\a_{4,1}}v_1  \tag{C-GEF2}
\\ &b_4 \geq \frac{\a_{3,2}b_3+(\a_{2,1}-\a_{2,2})v_2}{\a_{4,1}}  \tag{C-GEF3}
\\ &b_3 \leq \frac{\a_{2,2}}{\a_{3,2}}v_2  \tag{C-GEF4}
\\ &b_4 \geq \frac{1}{\a_{4,1}}\max_{i \in I\setminus\{1,2\}} \a_{i,1} v_i  \tag{C-GEF5}
\\ &b_3 \geq \frac{1}{\a_{3,2}}\max_{i \in I\setminus\{1,2\}} \a_{i,2} v_i  \tag{C-GEF6}
\end{align*} \end{minipage}
\end{lemma}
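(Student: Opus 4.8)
The plan is to compute the realized outcome of the iterated second-price auction directly from the bids, write global envy-freeness as a fixed short list of pairwise inequalities, and then substitute the auction prices into that list; the three disjuncts will correspond to the three possible forms the slot-1 price can take. First I would identify the allocation and the per-impression prices produced by $b$. Under the best-to-worst iterated second-price rule, slot~1 goes to $1=\argmax_{i\in I}\a_{i,1}b_i$ and, after $1$ is removed, slot~2 goes to $2=\argmax_{i\in I\setminus\{1\}}\a_{i,2}b_i$, so the relabeling in the statement is exactly the realized allocation. The price of slot~2 is the second-highest slot-2 adjusted bid among the remaining bidders, namely $p_2=\a_{3,2}b_3$. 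The price of slot~1 is the second-highest slot-1 adjusted bid overall; since $1$ is the overall maximizer, this equals $\max\{\a_{2,1}b_2,\a_{4,1}b_4\}$, the larger of bidder $2$'s adjusted bid and the best slot-1 adjusted bid among $I\setminus\{1,2\}$.

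Next I would write out the full set of envy constraints. Because only two agents receive a slot, global envy-freeness reduces to six inequalities: $1$ does not envy slot~$2$ or the empty allocation ($\a_{1,1}v_1-p_1\ge\a_{1,2}v_1-p_2$ and $\a_{1,1}v_1-p_1\ge 0$); $2$ does not envy slot~$1$ or the empty allocation ($\a_{2,2}v_2-p_2\ge\a_{2,1}v_2-p_1$ and $\a_{2,2}v_2-p_2\ge 0$); and every losing bidder $i\in I\setminus\{1,2\}$ does not envy either slot ($\a_{i,1}v_i\le p_1$ and $\a_{i,2}v_i\le p_2$). All remaining pairs are redundant: loser-versus-loser comparisons are $0\ge 0$, and any winner-versus-loser comparison coincides with a winner-versus-empty comparison, since losers hold the empty allocation at price $0$. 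The two losing-bidder families collapse into the single pair $\max_{i\in I\setminus\{1,2\}}\a_{i,1}v_i\le p_1$ and $\max_{i\in I\setminus\{1,2\}}\a_{i,2}v_i\le p_2$.

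Finally I would split on the form of $p_1$, which is precisely what the three (GEF0) conditions encode. In case~A, $\a_{2,1}b_2\ge\a_{4,1}b_4$, so $p_1=\a_{2,1}b_2$; in case~B, $\a_{2,1}b_2\le\a_{4,1}b_4=\a_{3,1}b_3$, so the single bidder $3$ (tied with $4$) sets both prices and $p_1=\a_{3,1}b_3$; in case~C, $\a_{2,1}b_2\le\a_{4,1}b_4$ but $\a_{3,1}b_3<\a_{4,1}b_4$, so $p_1=\a_{4,1}b_4$ with $4\neq 3$. These cases are exhaustive. Substituting the case-specific expression for $p_1$ together with $p_2=\a_{3,2}b_3$ into the six inequalities above, I would verify that each becomes, respectively, one of (GEF1)--(GEF6) for that case; for instance ``$1$ does not envy slot~$2$'' becomes $\a_{2,1}b_2\le\a_{3,2}b_3+(\a_{1,1}-\a_{1,2})v_1$ in case~A and $(\a_{3,1}-\a_{3,2})b_3\le(\a_{1,1}-\a_{1,2})v_1$ in case~B. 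Both directions then follow at once: every bid profile falls into at least one case, and within a case the listed conditions are literally the six envy constraints re-expressed.

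I expect the main difficulty to be organizational rather than deep: correctly pinning down $p_1$ as the second-highest slot-1 adjusted bid and tracking which physical bidder realizes it, while keeping the boundary behavior consistent across the overlapping cases (e.g.\ when $\a_{2,1}b_2=\a_{4,1}b_4$). A secondary point to confirm is that the consolidated loser conditions (GEF5)/(GEF6) correctly subsume the constraint for bidder $3$ itself; indeed, applying (GEF5) with $i=3$ in cases~B and~C forces $b_3\ge v_3$, recovering the overbidding requirement already observed in Proposition~\ref{prop:gef0}.
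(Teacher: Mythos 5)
Your proposal is correct and follows essentially the same route as the paper: compute $p_2=\a_{3,2}b_3$ and case-split on whether $p_1$ is set by bidder $2$, by bidder $3$, or by some other bidder $4$, then transcribe the six pairwise envy constraints in each case. (One trivial slip in your closing aside: the constraint forcing $b_3\ge v_3$ is the slot-2 loser condition (GEF6) applied to $i=3$, not (GEF5).)
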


\begin{proof}
Consider arbitrary click-through-rates $\a$, values $v$, and bids $b$ without
overbidding. Let $p_1$ and $p_2$ denote the prices paid by the winners of slots
1 and 2, respectively. Since no agent overbids, $p_2 \leq \a_{3,2}v_3$. If
$b_3< v_3$ then $p_2 < \a_{3,2}v_3$ and agent 3 is envious. Therefore, in any
globally envy-free outcome, $b_3=v_3$; or, to be precise, if there is a tie for
definition of agent 3, one such agent must bid his value, and we will call him
agent 3 in what follows.

Now there are three possibilities: (i) 2 sets $p_1$ and 3 sets $p_2$, (ii) 3
sets both $p_1$ and $p_2$, or (iii) 3 sets $p_2$ and some other agent 4 sets
$p_1$. (A-GEF0)--(A-GEF6) encode exactly the envy constraints for possibility
(i): 2 sets $p_1$ (A-GEF0), 1 doesn't envy 2 (A-GEF1), 1 doesn't envy
an unallocated agent (A-GEF2), 2 doesn't envy 1 (A-GEF3), 2 doesn't envy an
unallocated agent (A-GEF4), no unallocated agent envies 1 (A-GEF5), and no
unallocated agent envies 2 (A-GEF6).

(B-GEF0)--(B-GEF6) encode exactly the envy constraints for possibility (ii): 3
sets $p_1$ and $p_2$ (B-GEF0), 1 doesn't envy 2 (B-GEF1), 1 doesn't envy an
unallocated agent (B-GEF2), 2 doesn't envy 1 (B-GEF3), 2 doesn't envy an
unallocated agent (B-GEF4), no unallocated agent envies 1 (B-GEF5), and no
unallocated agent envies 2 (B-GEF6).

Finally, (C-GEF0)--(C-GEF6) encode exactly the envy constraints for possibility
(iii): 4 sets $p_1$ (C-GEF0), 1 doesn't envy 2 (C-GEF1), 1 doesn't envy an
unallocated agent (C-GEF2), 2 doesn't envy 1 (C-GEF3), 2 doesn't envy an
unallocated agent (C-GEF4), and no unallocated agent envies 1 (C-GEF5), and no
unallocated agent envies 2 (C-GEF6).
\end{proof}

\begin{lemma}  \label{lemE2}
In a two-slot, three-bidder setting, for arbitrary click-through-rates $\a$ and values
$v$, letting 1 and 2 denote the respective winners of slots 1 and 2 in the
efficient allocation and 3 the other bidder, if ties are broken in favor of an
agent with highest click-ratio: arbitrary bids $b$ with $\a_{1,1}b_1 \geq
\max\{\a_{2,1}b_2,\,\a_{3,1}b_3\}$ constitute an efficient and globally
envy-free equilibrium if (D0)--(D7) or (E0)--(E8).

\hspace{-18mm}
\begin{minipage}{.5\linewidth}
\begin{align*}
&b_2 \geq \frac{\a_{3,1}}{\a_{2,1}}b_3  \tag{D0}
\\&b_2 \leq \frac{\a_{3,2}b_3 + (\a_{1,1}-\a_{1,2})v_1}{\a_{2,1}}  \tag{D1}
\\ &b_2 < \frac{\a_{1,1}}{\a_{2,1}}v_1  \tag{D2}
\\ &b_2 \geq \frac{\a_{3,2}b_3 + (\a_{2,1}-\a_{2,2})v_2}{\a_{2,1}}  \tag{D3}
\\ &b_3 \leq \frac{\a_{2,2}}{\a_{3,2}}v_2  \tag{D4}
\\ &b_3 \geq v_3  \tag{D5}
\\ &b_2 > \frac{\a_{3,2}}{\a_{2,2}} b_3  \tag{D6}
\\ &b_3 \leq \frac{\a_{1,1}v_1 - (\a_{2,1}-\a_{2,2}) v_2}{\a_{3,2}}  \tag{D7}
\end{align*}
\end{minipage} \hspace{3mm}
\begin{minipage}{.5\linewidth}
\begin{align*}
&b_2 \leq \frac{\a_{3,1}}{\a_{2,1}}b_3  \tag{E0}
\\ &b_3 \leq \frac{\a_{1,1}-\a_{1,2}}{\a_{3,1}-\a_{3,2}}v_1  \tag{E1}
\\ &b_3 \geq \frac{\a_{2,1}-\a_{2,2}}{\a_{3,1}-\a_{3,2}}v_2  \tag{E2}
\\ &b_3 \leq \frac{\a_{2,2}}{\a_{3,2}}v_2  \tag{E3}
\\ &b_3 \geq v_3  \tag{E4}
\\ &b_3 < \frac{\a_{1,1}}{\a_{3,1}} v_1  \tag{E5}
\\ &b_2 > \frac{\a_{3,2}}{\a_{2,2}} b_3  \tag{E6}
\\ &b_3 \leq \frac{\a_{2,2}b_2 + (\a_{1,1}-\a_{1,2})v_1}{\a_{3,1}}  \tag{E7}
\\ &b_3 \leq \frac{\a_{1,1}v_1 - (\a_{2,1}-\a_{2,2}) v_2}{\a_{3,2}}  \tag{E8}
\end{align*}
\end{minipage}

\noindent $b$ does not constitute an efficient and globally envy-free
equilibrium unless (D0)--(D7) or (E0)--(E8) hold, replacing the strict
inequalities with weak inequalities.
\end{lemma}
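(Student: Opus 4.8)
The plan is to prove Lemma~\ref{lemE2} by superimposing the equilibrium (no-deviation) constraints onto the global-envy-freeness characterization already established in Lemma~\ref{lem:gef0}, specialized to three bidders. First I would pin down the outcome: the precondition $\a_{1,1}b_1 \geq \max\{\a_{2,1}b_2,\a_{3,1}b_3\}$ forces bidder 1 to win slot 1, and efficiency together with the labelling convention forces bidder 2 into slot 2 and leaves bidder 3 unallocated. I would also take $b_1=v_1$ as a harmless normalization, since neither $p_1$ nor $p_2$ depends on $b_1$ (both are set by bidders 2 and/or 3); $b_1$ only fixes the price a deviating bidder 2 or 3 would face to seize slot 1, which the no-overbidding convention pins at $\a_{1,1}v_1$. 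With the outcome fixed, envy-freeness and the absence of profitable deviations are all that remain to encode.

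Second, I would split on who sets the price for slot 1. Because there are only three bidders, the two non-slot-1 bidders of Lemma~\ref{lem:gef0} (its ``agent 3'' and ``agent 4'') both coincide with bidder 3, so possibility (iii) of that lemma collapses: its defining inequality (C-GEF0) becomes $b_3 < b_3$, which is vacuous, and it merges into possibility (ii). This leaves exactly two cases: bidder 2 sets $p_1$ (case D, i.e.\ $\a_{2,1}b_2\geq\a_{3,1}b_3$, condition D0, giving $p_1=\a_{2,1}b_2$, $p_2=\a_{3,2}b_3$) or bidder 3 sets $p_1$ (case E, condition E0, giving $p_1=\a_{3,1}b_3$, $p_2=\a_{3,2}b_3$). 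Substituting agent 3 $=$ agent 4 $=$ bidder 3 into Lemma~\ref{lem:gef0}, and noting that the maxima $\max_{i\in I\setminus\{1,2\}}$ range over bidder 3 alone, I would read off that (A-GEF0)--(A-GEF6) becomes D0--D4 together with ``no unallocated agent envies a winner'', which reduces to $b_3\geq v_3$ (D5); symmetrically, (B-GEF0)--(B-GEF6) becomes E0--E5. This accounts for all envy constraints.

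Third, I would add the equilibrium constraints and identify which are genuinely new. Most deviations are already blocked: bidder 1 dropping to nothing is excluded by D2/E5 ($p_1<\a_{1,1}v_1$); bidder 2 dropping to nothing by D4/E3; and bidder 3 grabbing either slot by efficiency—which yields $\a_{1,1}v_1>\a_{3,1}v_3$, so the slot-1 price exceeds $\a_{3,1}v_3$—together with $b_3\geq v_3$ and the slot-2 allocation condition D6/E6 giving $\a_{2,2}b_2>\a_{3,2}b_3\geq\a_{3,2}v_3$. The crux is bookkeeping the \emph{deviation prices}: when one bidder vacates a slot, the price a second bidder faces is set by whoever moves up, which need not be the current price-setter of the target slot. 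Bidder 2's upgrade to slot 1 costs $\a_{1,1}v_1$ (outbidding bidder 1), yielding the new constraint $\a_{3,2}b_3\leq\a_{1,1}v_1-(\a_{2,1}-\a_{2,2})v_2$ (D7/E8). For bidder 1 downgrading to slot 2: in case D bidder 2 moves up to slot 1 while bidder 3 still sets $p_2$, so the deviation price equals the current $p_2$ and the constraint is exactly the envy constraint D1; in case E bidder 3 moves up to slot 1 and bidder 2 then sets $p_2$, so the deviation price is $\a_{2,2}b_2\neq\a_{3,2}b_3$, producing the separate constraint E7. This asymmetry is precisely why case E carries one more condition than case D.

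Finally, for sufficiency I would check that each listed strict inequality blocks its corresponding deviation and rules out the matching envy, and for necessity I would run each argument in reverse, obtaining the weak versions of (D0)--(D7) or (E0)--(E8) from any efficient, globally-envy-free equilibrium. The tie-breaking rule (favoring the highest click-ratio) enters only in case E, where bidder 3 is the slot-1 price-setter and may tie bidder 2, guaranteeing that bidder 1 can be held in slot 1 at the boundary. I expect the main obstacle to be exactly the deviation-price bookkeeping above—tracking who moves up in each case and verifying that the resulting equilibrium constraints either coincide with the envy constraints (case D, bidder 1's downgrade) or are strict/separate (D7/E8, E7)—together with getting the strict-versus-weak boundary right in the converse direction.
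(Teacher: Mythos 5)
Your proposal is correct and follows essentially the same route as the paper, whose proof is a one-line instruction to combine the equilibrium conditions of Lemma~\ref{lem:eff} with the envy-freeness characterization of Lemma~\ref{lem:gef0} and specialize to three bidders; you carry out exactly that combination, correctly collapsing case (iii) of Lemma~\ref{lem:gef0} (agent 4 coincides with bidder 3, so (C-GEF0) is unsatisfiable rather than merely ``vacuous'') and correctly identifying the deviation-price bookkeeping that makes (D1) double as bidder 1's downgrade constraint in case D while case E needs the separate (E7). The only caveat, inherited from the paper itself, is the implicit normalization $b_1=v_1$ needed for the $\a_{1,1}v_1$ deviation price in (D7)/(E8).
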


\begin{proof}
The lemma follows directly from combining the efficiency equilibrium constraints
and the envy-freeness constraints, found in Lemmas \ref{lem:eff} and
\ref{lem:gef0}, reducing the constraints as the 3-agent case allows.
\end{proof}

\if 0
\begin{propositio}[\ref{prop:gef0}]
In a two-slot, three-bidder setting, for arbitrary values $v$ and click-through-rates
$\a$, there exist no bids yielding a globally envy-free outcome unless, letting
1 and 2 denote the respective winners of slots 1 and 2 in the efficient
allocation:
\begin{align*} &(\a_{3,1}-\a_{3,2})v_3 \leq (\a_{1,1}-\a_{1,2}) v_1
\end{align*}
\end{propositio}

\begin{proof}
Take arbitrary values $v$, click-through-rates $\a$, and bids $b$.
First assume $b_2 \geq \frac{\a_{3,1}}{\a_{2,1}}b_3$ (i.e., 2 sets the price
for 1). For 1 to not be envious of 2, it must be the case that $b_2 \leq
\frac{\a_{3,2}b_3 + (\a_{1,1}-\a_{1,2})v_1}{\a_{2,1}}$. The combination of
these two constraints yields $(\a_{3,1}-\a_{3,2})b_3 \leq (\a_{1,1}-\a_{1,2})
v_1$. Now instead assume $b_2 \leq \frac{\a_{3,1}}{\a_{2,1}}b_3$ (i.e, 3 sets
the price for 1). 1 is not envious of 2 if and only if $\a_{1,1}v_1-\a_{3,1}b_3
\geq \a_{1,2}v_1-\a_{3,2}b_3$, i.e., $(\a_{3,1}-\a_{3,2})b_3 \leq
(\a_{1,1}-\a_{1,2}) v_1$, again. Finally, noting that envy-freeness for 3
requires that $b_3 \geq v_3$ (otherwise 3 would envy 2), global envy-freeness
requires:
\[ (\a_{3,1}-\a_{3,2})v_3 \leq (\a_{3,1}-\a_{3,2})b_3 \leq (\a_{1,1}-\a_{1,2})
v_1 \]
If $(\a_{3,1}-\a_{3,2})v_3 > (\a_{1,1}-\a_{1,2}) v_1$, this cannot be
satisfied.
\end{proof}
\fi

\begin{theore}[\ref{the:gef}]
In a two-slot, three-bidder setting, for arbitrary click-through-rates $\a$ and values
$v$, there exist bids---without overbidding---yielding a globally envy-free and
efficient equilibrium if and only if, letting 1 and 2 denote the respective
winners of slots 1 and 2 in the efficient allocation:
\begin{align*} &(\a_{3,1}-\a_{3,2})v_3 \leq (\a_{1,1}-\a_{1,2}) v_1
\end{align*}
If a globally envy-free and efficient equilibrium exists, one exists that
yields the VCG result and does not require overbidding.
\end{theore}

\begin{proof}
Take arbitrary click-through-rates $\a$ and values $v$. The ``only if'' direction holds
by Proposition \ref{prop:gef0}.  \if 0 (D0) and (D1) combined (i.e., 2 setting
the price for 1, with 1 not envious of 2) entail that $(\a_{3,1}-\a_{3,2})b_3
\leq (\a_{1,1}-\a_{1,2}) v_1$. This is also exactly the (E1) constraint (i.e.,
1 not envious of 2 in a context where 3 is setting both prices). Adding the
$b_3 \geq v_3$ constraint (D5, E4), if $(\a_{3,1}-\a_{3,2})v_3 >
(\a_{1,1}-\a_{1,2}) v_1$ we can see that neither the D nor the E sets of
constraints can be satisfied. \fi
So assume $(\a_{3,1}-\a_{3,2})v_3 \leq (\a_{1,1}-\a_{1,2})v_1$. First consider
the case where $(\a_{3,1}-\a_{3,2})v_3 \leq (\a_{2,1}-\a_{2,2})v_2$.  Let
$b_1=v_1$, $b_2=\frac{\a_{3,2}v_3+(\a_{2,1}-\a_{2,2})v_2}{\a_{2,1}}$, and
$b_3=v_3$.
(D0) reduces to $(\a_{3,1}-\a_{3,2})v_3 \leq (\a_{2,1}-\a_{2,2})v_2$, which
holds by assumption. (D1) reduces to $(\a_{3,1}-\a_{3,2})v_3 \leq
(\a_{1,1}-\a_{1,2})v_1$, which also holds by assumption. (D2) reduces to
$\a_{2,1}v_2+\a_{3,2}v_3 < \a_{1,1}v_1+\a_{2,2}v_2$, which holds by efficiency.
(D3) holds with equality. (D4) reduces to $\a_{3,2}v_3 \leq \a_{2,2}v_2$, which
holds by efficiency. (D5) holds with equality. (D6) reduces to $\a_{3,2}v_3 >
\a_{2,2}v_2$, which holds by efficiency. (D7) reduces to
$\a_{2,1}v_2+\a_{3,2}v_3 \leq \a_{1,1}v_1+\a_{2,2}v_2$, which holds by
efficiency.
Note that these bids yield the VCG result, since when $(\a_{3,1}-\a_{3,2})v_3
\leq (\a_{2,1}-\a_{2,2})v_2  \leq (\a_{1,1}-\a_{1,2})v_1$, the externality
imposed by 1 is $\a_{3,2}v_3+(\a_{2,1}-\a_{2,2})v_2$ and that imposed by 2 is
$\a_{3,2}v_3$. Moreover,
$b_2=\frac{\a_{3,2}v_3+(\a_{2,1}-\a_{2,2})v_2}{\a_{2,1}} = v_2 -
\frac{\a_{2,2}v_2-\a_{3,2}v_3}{\a_{2,1}} \leq v_2$, and so no overbidding is
required.

Now consider the case where $(\a_{3,1}-\a_{3,2})v_3 > (\a_{2,1}-\a_{2,2})v_2$.
We will have to consider one further conditional. Assume first that
$\frac{\a_{3,1}}{\a_{2,1}}v_3 \leq v_2$. In this case, take $b_1=v_1$,
$b_2=\frac{\a_{3,1}}{\a_{2,1}}v_3$, and $b_3=v_3$.
(D0) holds with equality. (D1) reduces to $\a_{3,1}v_3+\a_{1,2}v_1 \leq
\a_{1,1}v_1+\a_{2,2})v_2$, which holds by efficiency. (D2) reduces to
$\a_{3,1}v_3 < \a_{1,1}v_1$, which holds by efficiency.  (D3) reduces to
$(\a_{3,1}-\a_{3,2})v_3 \geq (\a_{2,1}-\a_{2,2})v_2$, which holds by
assumption.  (D4) reduces to $\a_{3,2}v_3 \leq \a_{2,2}v_2$, which holds by
efficiency. (D5) holds with equality. (D6) reduces to
$\frac{\a_{2,1}}{\a_{2,2}} < \frac{\a_{3,1}}{\a_{3,2}}$, which holds by Lemma
\ref{lem2i} (taking the contrapositive). (D7) reduces to
$\a_{2,1}v_2+\a_{3,2}v_3 \leq \a_{1,1}v_1+\a_{2,2}v_2$, which holds by
efficiency. So this is an efficient and GEF equilibrium.
These bids also yield the VCG result, since the externality imposed by 1 is
$\a_{3,1}v_3$, a price which is set by 3 in this case, and that imposed by 2 is
$\a_{3,2}v_3$. Moreover, since $b_2=\frac{\a_{3,1}}{\a_{2,1}}v_3 \leq v_2$ by
assumption, no overbidding is required.

Now assume instead that $\frac{\a_{3,1}}{\a_{2,1}}v_3 > v_2$. In this case,
take $b_1=v_1$, $b_2=v_2$, and $b_3=v_3$.
(E0) holds by assumption. (E1) reduces to $(\a_{3,1}-\a_{3,2})v_3 \leq
(\a_{1,1}-\a_{1,2})v_1$, which also holds by assumption.  (E2) reduces to
$(\a_{3,1}-\a_{3,2})v_3 \leq (\a_{2,1}-\a_{2,2})v_2$, which again holds by
assumption. (E3) reduces to $\a_{3,2}v_3 \leq \a_{2,2}v_2$, which holds by
efficiency. (E4) holds with equality. (E5) reduces to $\a_{3,1}v_3 <
\a_{1,1}v_1$, which holds by efficiency. (E6) reduces to $\a_{2,2}v_2 >
\a_{3,2}v_3$, which holds by efficiency. (E7) reduces to
$\a_{3,1}v_3+\a_{1,2}v_1 \leq \a_{1,1}v_1+\a_{2,2}v_2$, which holds by
efficiency. Finally, (E8) reduces to $\a_{2,1}v_2+\a_{3,2}v_3 \leq
\a_{1,1}v_1+\a_{2,2}v_2$, which holds by efficiency. Thus this is an efficient
and GEF equilibrium.
These bids also yield the VCG result (with 2 setting the price for 1 in this
case), and since all agents are bidding their true values, clearly no
overbidding is required.

In each of the three conditional cases considered above, an equilibrium was
established with bids that yield the VCG result and do not exceed true
valuations, and so the theorem is proved.
\end{proof}

\begin{theore}[\ref{the:gef2}]
For arbitrary click-through-rates $\a$ and values $v$, letting 1 and 2 denote the
respective winners of slots 1 and 2 in the efficient allocation, if
$\frac{\a_{2,1}}{\a_{2,2}} \geq \frac{\a_{i,1}}{\a_{i,2}}$, $\forall i \in
I\setminus\{1,2\}$, there exists an efficient and globally envy-free
equilibrium without overbidding.
\end{theore}

\begin{proof}
Assume that $\frac{\a_{i,1}}{\a_{i,2}} \leq \frac{\a_{2,1}}{\a_{2,2}}$,
$\forall i \in I\setminus\{1,2\}$. Let 1 and 2 denote the respective winners of
slots 1 and 2 in the efficient allocation, and let 3 denote $\argmax_{i \in
I\setminus\{1,2\}} \a_{i,2}v_i$. Let $b_1=v_1$,
$b_2=\frac{\a_{3,2}v_3+(\a_{2,1}-\a_{2,2})v_2}{\a_{2,1}}$, $b_3=v_3$, and
$b_i=0$, $\forall i \in I\setminus\{1,2,3\}$. The proof of Theorem
\ref{the:eff} demonstrated that these bids satisfy (A0)--(A6) and yield an
efficient equilibrium, without overbidding. Therefore, to prove this theorem it
is sufficient to show that these bids satisfy (A-GEF0)--(A-GEF6).

Noting that the ``agent 4'' of (A-GEF0) is our agent 3, given the above bids,
(A-GEF0) reduces to $(\a_{3,1}-\a_{3,2})v_3 \leq (\a_{2,1}-\a_{2,2})v_2$, and
this holds by Lemma \ref{lem2i}.
(A-GEF1) reduces to $(\a_{3,1}-\a_{3,2})v_3 \leq (\a_{1,1}-\a_{1,2})v_1$, which
holds by $(\a_{3,1}-\a_{3,2})v_3 \leq (\a_{2,1}-\a_{2,2})v_2$ (which we just
demonstrated) combined with $(\a_{2,1}-\a_{2,2})v_2 \leq
(\a_{1,1}-\a_{1,2})v_1$ (which is entailed by efficiency). (A-GEF2) reduces to
$\a_{2,1}v_2+\a_{3,2}v_3 < \a_{1,1}v_1+\a_{2,2}v_2$, which holds by efficiency.
(A-GEF3) holds with equality. (A-GEF4) reduces to $\a_{3,2}v_3 \leq
\a_{2,2}v_2$, which holds by efficiency. (A-GEF5) reduces to
$(\a_{2,1}-\a_{2,2})v_2 \geq \max_{i \in I\setminus\{1,2\}}
\a_{i,2}v_i-\a_{3,2}v_3$. Note that, $\forall i \in I\setminus\{1,2\}$, Lemma
\ref{lem2i} entails that $(\a_{2,1}-\a_{2,2}) v_2 \geq (\a_{i,1}-\a_{i,2})
v_i$. Thus, also using the definition of agent 3, we have:
\[  (\a_{2,1}-\a_{2,2})v_2 \geq (\a_{i,1}-\a_{i,2}) v_i \geq \a_{i,1}v_i -
\a_{3,2}v_3,  \]
and so (A-GEF5) holds. Finally, (A-GEF6) holds with equality by the definition
of agent 3.
\end{proof}

\if 0
\subsection*{Proofs for Section \ref{sec:vcg}}

\if 0
\begin{lemma}
Consider a two-slot setting with three bidders. Fix arbitrary click-through-rates $\a$
and arbitrary values $v$. Let $p_1$ and $p_2$ denote VCG prices for the two
slots, and let $i$ denote the agent that receives slot $i$ in a VCG result.
The VCG result is supported if and only if:
\begin{align} \frac{p_1}{p_2} \geq \max \Big\{ \frac{\a_{2,1}}{\a_{2,2}},\, \frac{\a_{3,1}}{\a_{3,2}} \Big\}
\;\;\;\vee\;\;\;
\frac{\a_{2,1}}{\a_{2,2}} \leq \frac{p_1}{p_2} = \frac{\a_{3,1}}{\a_{3,2}}  \label{eVCG}  \end{align}
\end{lemma}

\begin{proof}
Fix arbitrary click-through-rates $\a$ and values $v$. Let 1 denote the winner of slot
1, 2 the winner of slot 2, and 3 the ``loser'' agent in a VCG result; and let
$p_1$ and $p_2$ denote VCG prices, i.e., the prices paid under VCG by the
winners of slots 1 and 2, respectively. A VCG result is ``supported'' by GSP
if and only if there exists a set of bids that yields the associated allocation
and VCG prices.

The VCG result is supported by GSP if and only if bids $b_1$, $b_2$, and $b_3$
exist such that: $\a_{1,1}b_1 \geq p_1 = \max \{ \a_{2,1}b_2, \a_{3,1}b_3 \}$
and $\a_{2,2}b_2 \geq p_2 = \a_{3,2}b_3$. There are two possibilities: either
$p_1$ and $p_2$ are ``set'' by agents 2 and 3, respectively (i.e., $\a_{2,1}b_2
\geq \a_{3,1}b_3$), or 3 sets both prices.  We'll first consider the former
case.
Since 2 sets the price for 1,
\[ b_2 = p_1/\a_{2,1} \]

\noindent And since 3 sets the price for 2,
\[ b_3 = p_2/\a_{3,2} \]

\noindent $b_1$ can unproblematically be set as high as necessary such that 1
wins slot 1, since $b_1$ is otherwise unconstrained. Agent 2 winning slot 2
entails that: $\a_{2,2} b_2 \geq \a_{3,2} b_3$, i.e.,
$\a_{2,2} \cdot p_1/\a_{2,1} \geq \a_{3,2} \cdot p_2/\a_{3,2} = p_2$, i.e.
\begin{align}   p_1/p_2 \geq \a_{2,1}/\a_{2,2}  \label{e1} \end{align}

\noindent Agent 3 not setting the price of slot 1 higher than $p_1$ entails
that: $p_1 \geq \a_{3,1} b_3 = \a_{3,1} \cdot p_2/\a_{3,2}$, i.e.,
\begin{align} p_1/p_2 \geq \a_{3,1}/\a_{3,2}  \label{e2} \end{align}

\noindent (\ref{e1}) and (\ref{e2}) together give us the first disjunct of
(\ref{eVCG}), above.

Now consider the case where 3 sets the price for both 1 and 2. We have:
$b_3 = p_1/\a_{3,1} = p_2/\a_{3,2}$, which entails:
\[ p_1/p_2 = \a_{3,1}/\a_{3,2} \]

\noindent The VCG result is supported if and only if we can specify a bid
$b_2$ for agent 2 such that he does not set the price for 1 (i.e., $\a_{2,1}
b_2 \leq p_1 = (\a_{3,1}/\a_{3,2})p_2$) and yet he wins slot 2 (i.e., $b_2 \geq
p_2/\a_{2,2}$). In other words we must be able to choose a $b_2$ such that:
\[  p_2/\a_{2,2} \leq b_2 \leq (1/\a_{2,1})\cdot(\a_{3,1}/\a_{3,2}) \cdot p_2
\]
This is possible if and only if: $\a_{2,1}/\a_{2,2} \leq \a_{3,1}/\a_{3,2}$,
which gives us the second case in (\ref{eVCG}), above.
\end{proof}

\begin{corollary}  \label{cor0}
For a two-slot, three-bidder setting, for arbitrary click-through-rates and values,
letting $i$ denote the winner of slot $i$ (where 3 denotes the non-winner), a
VCG result is not supported if
$\frac{p_1}{p_2} < \frac{\a_{3,1}}{\a_{3,2}}$.
\end{corollary}
\fi

\begin{theore}[\ref{the:vcg}]
Assume strictly decreasing click-through-rates. In a two-slot setting in which there
are three bidders, one of whom has a strictly higher click-ratio than the other
two, there always exist values such that the VCG result is not supported.
\end{theore}

\begin{proof}
Consider an arbitrary set of three agents with arbitrary strictly decreasing
click-through-rates $\a$ such that one agent's click-ratio is strictly higher than that
of the other two. Label the slots such that each agent's click-through-rate for slot 1
is greater than that for slot 2. Label the three (potentially) non-zero valued
agents in a non-decreasing order of $\a_{i,1}/\a_{i,2}$. Strictly decreasing
click-through-rates entails that $\a_{1,1}/\a_{1,2} > 1$, and non-separability entails
that $\a_{1,1}/\a_{1,2} < \a_{2,1}/\a_{2,2} < \a_{3,1}/\a_{3,2}$. In other
words, for some $\epsilon \geq 0$ and $\delta > 0$,
\begin{align}  1 < \frac{\a_{1,1}}{\a_{1,2}} = \frac{\a_{2,1}}{\a_{2,2}} -
\epsilon = \frac{\a_{3,1}}{\a_{3,2}} - \epsilon - \delta   \label{t0b}
\end{align}

Fix arbitrary $v_3 > 0$. Let $\lambda_1 = \big(
\frac{\a_{3,1}-\a_{3,2}}{\a_{1,1}-\a_{1,2}} - \frac{\a_{3,1}}{\a_{1,1}} \big)
v_3$. Note that:

\begin{align*}
&\frac{\a_{3,1}-\a_{3,2}}{\a_{1,1}-\a_{1,2}} > \frac{\a_{3,1}}{\a_{1,1}}
\\ \Leftrightarrow\; &1 - \frac{\a_{3,2}}{\a_{3,1}} > 1 -
\frac{\a_{1,2}}{\a_{1,1}}
\\ \Leftrightarrow\; &\frac{\a_{1,1}}{\a_{1,2}} < \frac{\a_{3,1}}{\a_{3,2}}
\end{align*}
This holds by (\ref{t0b}), and thus $\lambda_1 > 0$. Now let $\lambda_2 = \big(
\frac{\a_{3,1}-\a_{3,2}}{\a_{2,1}-\a_{2,2}} - \frac{\a_{3,2}}{\a_{2,2}} \big)
v_3$. Note that:

\begin{align*}
&\frac{\a_{3,1}-\a_{3,2}}{\a_{2,1}-\a_{2,2}} > \frac{\a_{3,2}}{\a_{2,2}}
\\ \Leftrightarrow\; &\frac{\a_{3,1}}{\a_{3,2}} - 1 > \frac{\a_{2,1}}{\a_{2,2}}
- 1
\\ \Leftrightarrow\; &\frac{\a_{2,1}}{\a_{2,2}} < \frac{\a_{3,1}}{\a_{3,2}}
\end{align*}
This holds by (\ref{t0b}), and thus $\lambda_2 > 0$. Now let $v_1 =
\frac{\a_{3,1}-\a_{3,2}}{\a_{1,1}-\a_{1,2}} v_3 - \gamma_1$, for some $\gamma_1
\in (0,\lambda_1)$. We have:
\begin{align} &\frac{\a_{3,1}}{\a_{1,1}} < \frac{v_1}{v_3} <
\frac{\a_{3,1}-\a_{3,2}}{\a_{1,1}-\a_{1,2}} \label{t1b} \end{align}

\noindent And let $v_2 = \frac{\a_{3,2}}{\a_{2,2}}v_3 + \gamma_2$, for some
$\gamma_2 \in (0,\lambda_2)$. We have:
\begin{align} &\frac{\a_{3,2}}{\a_{2,2}} < \frac{v_2}{v_3} <
\frac{\a_{3,1}-\a_{3,2}}{\a_{2,1}-\a_{2,2}} \label{t2b} \end{align}

\noindent We refine our specification of $\gamma_1$ and $\gamma_2$ such that:
\begin{align} &\frac{1}{\a_{3,2}v_3}\big[ (\a_{1,1}-\a_{1,2})\gamma_1 +
(\a_{2,1}-\a_{2,2}) \gamma_2 \big] < \delta,   \label{eVCG1b}
\\ &\gamma_2 > \frac{\a_{1,1}-\a_{1,2}}{\a_{2,2}} \gamma_1 \text{, and} \label{eVCG2b}
\\ &\frac{\a_{1,1}-\a_{1,2}}{\a_{1,1}v_3} \big( \frac{\a_{1,1}}{\a_{3,2}}
\gamma_1 + \frac{\a_{2,1}-\a_{2,2}}{\a_{3,2}} \gamma_2 \big) < \delta +
\frac{\a_{1,2}}{\a_{1,1}}\epsilon  \label{eVCG3b} \end{align}
Note that such values can be chosen consistent with everything specified above,
for arbitrary $\delta > 0$ and arbitrary click-through-rates.

Letting $(i,j)$ denote the allocation in which agent $i$ receives slot 1 and
agent $j$ receives slot 2,
\if , $(3,2)$ is an efficient allocation in the absence of agent 1, and $(3,1)$
is an efficient allocation in the absence of agent 2.  \fi if we can establish
that (1,2) is an efficient allocation, then the bidder labels here correspond
to those used in Proposition \ref{prop:gef0}.
Letting $w(i,j)$ denote $\a_{i,1}v_i + \a_{j,2} v_j$, i.e., the social value of
allocation $(i,j)$, this can be established by demonstrating that: $w(1,2) >
w(2,1)$, $w(1,2) > w(1,3)$, $w(1,2) > w(2,3)$, $w(1,2) > w(3,2)$, and $w(1,2) >
w(3,1)$.

\begin{align*}
&w(1,2) > w(2,1)
\\ \Leftrightarrow\; &\a_{1,1}v_1 + \a_{2,2}v_2 > \a_{2,1}v_2 + \a_{1,2}v_1
\\ \Leftrightarrow\; &(\a_{1,1}-\a_{1,2})v_1 > (\a_{2,1}-\a_{2,2})v_2
\\ \Leftrightarrow\; &(\a_{3,1}-\a_{3,2})v_3 - (\a_{1,1}-\a_{1,2})\gamma_1 >
\a_{3,2}v_3 \frac{\a_{2,1}-\a_{2,2}}{\a_{2,2}} + (\a_{2,1}-\a_{2,2}) \gamma_2
\\ \Leftrightarrow\; &\Big(\a_{3,1}-\a_{3,2} -
\a_{3,2}\Big(\frac{\a_{2,1}}{\a_{2,2}}-1\Big)\Big)v_3 >
(\a_{1,1}-\a_{1,2})\gamma_1 + (\a_{2,1}-\a_{2,2}) \gamma_2
\\ \Leftrightarrow\; &\Big(\a_{3,1}-\frac{\a_{3,2}\a_{2,1}}{\a_{2,2}}\Big)v_3
> (\a_{1,1}-\a_{1,2})\gamma_1 + (\a_{2,1}-\a_{2,2}) \gamma_2
\\ \Leftrightarrow\; &\frac{\a_{3,1}}{\a_{3,2}}-\frac{\a_{2,1}}{\a_{2,2}} >
\frac{1}{\a_{3,2}v_3}\Big[ (\a_{1,1}-\a_{1,2})\gamma_1 + (\a_{2,1}-\a_{2,2})
\gamma_2 \Big]
\end{align*}

\noindent The inequality holds by (\ref{eVCG1b}).

\begin{align*}
&w(1,2) > w(3,1)
\\ \Leftrightarrow\; &\a_{1,1}v_1 + \a_{2,2}v_2 > \a_{3,1}v_3+\a_{1,2}v_1
\\ \Leftrightarrow\; &\a_{2,2}v_2 > \a_{3,1}v_3-(\a_{1,1}-\a_{1,2})v_1
\\ \Leftrightarrow\; &\a_{2,2}\Big(\frac{\a_{3,2}}{\a_{2,2}}v_3+\gamma_2\Big) >
\a_{3,1}v_3-(\a_{1,1}-\a_{1,2})\Big(\frac{\a_{3,1}-\a_{3,2}}{\a_{1,1}-\a_{1,2}}v_3
-\gamma_1 \Big)
\\ \Leftrightarrow\; &\a_{3,2}v_3 + \a_{2,2}\gamma_2 > \a_{3,2}v_3 +
(\a_{1,1}-\a_{1,2})\gamma_1 - \a_{2,2}\gamma_2
\\ \Leftrightarrow\; &\a_{2,2}\gamma_2 > (\a_{1,1}-\a_{1,2})\gamma_1
\end{align*}

\noindent The inequality holds by (\ref{eVCG2b}).

\begin{align*}
&w(1,2) > w(2,3)
\\ \Leftrightarrow\; &\a_{1,1}v_1 + \a_{2,2}v_2 > \a_{2,1}v_2+\a_{3,2}v_3
\\ \Leftrightarrow\; &\a_{1,1} \Big(\frac{\a_{3,1}-\a_{3,2}}{\a_{1,1}-\a_{1,2}}
v_3 - \gamma_1 \Big) > (\a_{2,1}-\a_{2,2})\Big(\frac{\a_{3,2}}{\a_{2,2}}v_3 +
\gamma_2\Big) + \a_{3,2}v_3
\\ \Leftrightarrow\; &\frac{\a_{1,1}}{\a_{3,2}}
\frac{\a_{3,1}-\a_{3,2}}{\a_{1,1}-\a_{1,2}} v_3 >
\frac{\a_{2,1}-\a_{2,2}}{\a_{2,2}}v_3 + v_3 + \frac{\a_{1,1}}{\a_{3,2}}
\gamma_1 + \frac{\a_{2,1}-\a_{2,2}}{\a_{3,2}} \gamma_2
\\ \Leftrightarrow\; &\frac{\a_{3,1}-\a_{3,2}}{\a_{3,2}} >
\frac{\a_{1,1}-\a_{1,2}}{\a_{1,1}} \Big( \frac{\a_{2,1}-\a_{2,2}}{\a_{2,2}} + 1
\Big) + \frac{\a_{1,1}-\a_{1,2}}{\a_{1,1}v_3} \Big( \frac{\a_{1,1}}{\a_{3,2}}
\gamma_1 + \frac{\a_{2,1}-\a_{2,2}}{\a_{3,2}} \gamma_2 \Big)
\\ \Leftrightarrow\; &\frac{\a_{3,1}}{\a_{3,2}}-1 >
\Big(1-\frac{\a_{1,2}}{\a_{1,1}}\Big) \frac{\a_{2,1}}{\a_{2,2}} +
\frac{\a_{1,1}-\a_{1,2}}{\a_{1,1}v_3} \Big( \frac{\a_{1,1}}{\a_{3,2}} \gamma_1
+ \frac{\a_{2,1}-\a_{2,2}}{\a_{3,2}} \gamma_2 \Big)
\\ \Leftrightarrow\; &\frac{\a_{3,1}}{\a_{3,2}}-1 >
\Big(1-\frac{\a_{1,2}}{\a_{1,1}}\Big) \Big(\frac{\a_{1,1}}{\a_{1,2}} + \epsilon
\Big) + \frac{\a_{1,1}-\a_{1,2}}{\a_{1,1}v_3} \Big( \frac{\a_{1,1}}{\a_{3,2}}
\gamma_1 + \frac{\a_{2,1}-\a_{2,2}}{\a_{3,2}} \gamma_2 \Big)
\\ \Leftrightarrow\; &\frac{\a_{1,1}}{\a_{1,2}}+\epsilon+\delta-1 >
\frac{\a_{1,1}}{\a_{1,2}}+\epsilon - 1 - \frac{\a_{1,2}}{\a_{1,1}}\epsilon +
\frac{\a_{1,1}-\a_{1,2}}{\a_{1,1}v_3} \Big( \frac{\a_{1,1}}{\a_{3,2}} \gamma_1
+ \frac{\a_{2,1}-\a_{2,2}}{\a_{3,2}} \gamma_2 \Big)
\\ \Leftrightarrow\; &\delta  + \frac{\a_{1,2}}{\a_{1,1}}\epsilon >
\frac{\a_{1,1}-\a_{1,2}}{\a_{1,1}v_3} \Big( \frac{\a_{1,1}}{\a_{3,2}} \gamma_1
+ \frac{\a_{2,1}-\a_{2,2}}{\a_{3,2}} \gamma_2 \Big)
\end{align*}

\noindent The inequality holds by (\ref{eVCG3b}).

\begin{align*}
w(1,2) > w(1,3)
\Leftrightarrow\; &\a_{2,2}v_2 > \a_{3,2}v_3
\\ \Leftrightarrow\; &\frac{v_2}{v_3} > \frac{\a_{3,2}}{\a_{2,2}}
\end{align*}
This inequality holds directly by (\ref{t2b}).

\begin{align*}
w(1,2) > w(3,2)
\Leftrightarrow\; &\a_{1,1}v_1 > \a_{3,1}v_3
\\ \Leftrightarrow\; &\frac{v_1}{v_3} > \frac{\a_{3,1}}{\a_{1,1}}
\end{align*}
This inequality holds directly by (\ref{t1b}).

The above five inequalities establish that the efficient allocation is (1,2).

\if 0  
We can now observe that $p_1 = \a_{3,1}v_3$ and $p_2 = \a_{3,1}v_3 -
(\a_{1,1}-\a_{1,2})v_1$. Therefore:
\begin{align*}
\frac{p_1}{p_2} = &\frac{\a_{3,1}v_3}{\a_{3,1}v_3 - (\a_{1,1}-\a_{1,2})v_1}
\\ =\; &\frac{\a_{3,1}v_3}{\a_{3,1}v_3 - (\a_{1,1}-\a_{1,2}) \Big(
\frac{\a_{3,1}-\a_{3,2}}{\a_{1,1}-\a_{1,2}} v_3 - \gamma_1 \Big)}
\\ =\; &\frac{\a_{3,1}v_3}{\a_{3,2}v_3 + (\a_{1,1}-\a_{1,2}) \gamma_1 }
\\ <\; &\frac{\a_{3,1}}{\a_{3,2}}
\end{align*}
The inequality holds because $\a_{1,1}-\a_{1,2} > 0$, by the strictly
decreasing click-through-rate assumption, and $\gamma_1 > 0$. In light of Corollary
\ref{cor0}, this completes the proof.
\fi   

Now, since a VCG result is always globally envy-free (see, e.g.,
\cite{leonard83}), in light of Proposition \ref{prop:gef0}, to complete the
proof it is sufficient to show that $(\a_{3,1}-\a_{3,2})v_3 >
(\a_{1,1}-\a_{1,2})v_1$. We have:
\begin{align*}
&(\a_{3,1}-\a_{3,2})v_3 - (\a_{1,1}-\a_{1,2})v_1
\\ =\; &(\a_{3,1}-\a_{3,2})v_3 - (\a_{1,1}-\a_{1,2})\bigg(\frac{\a_{3,1}-\a_{3,2}}{\a_{1,1}-\a_{1,2}}v_3 - \gamma_1 \bigg)
\\ =\; &(\a_{1,1}-\a_{1,2})v_3\gamma_1 \\ >\; &0
\end{align*}
\end{proof}

\begin{corollar}[\ref{cor1}]
Assume strictly decreasing click-through-rates. In a two-slot setting with any number
of bidders greater than two, if there exists a bidder with click-ratio strictly
greater than that of two other agents, there always exist values such that the
VCG result is not supported.
\end{corollar}

\vspace{3mm}

\begin{propositio}[\ref{prop:inorder}]
In settings with at most three bidders, if the VCG result is not supported
when selling slots in-order, it is not supported when selling slots in reverse
order.
\end{propositio}

\begin{proof}
The proof will make use of the following three claims.

\vspace{2mm} \noindent {\em Claim 1: If $j>i$, $\a_{i,i}>\a_{i,j}$, and
$v_{i,i}-p_i=v_{i,j}-p_j$, then $b_{i,i}\geq p_i$ implies $\a_{i,j}b_i>p_j$.} \vspace{2mm}

We have $v_{i,j}=\a_{i,j}v_i$ and $b_{i,j}=\a_{i,j}b_i$, so $v_{i,i}-p_i=v_{i,j}-p_j$ implies
$\a_{i,i}v_i-p_i=\a_{i,j}v_i-p_j$
and $b_{i,i}\geq p_i$ implies:
\begin{align*}
\a_{i,i}v_i-\a_{i,i}b_i&\leq\a_{i,j}v_i-p_j\\
\a_{i,i}(v_i-b_i)&\leq\a_{i,j}v_i-p_j
\end{align*}
Since $\a_{i,i}>\a_{i,j}\geq0$, we get
$\a_{i,j}(v_i-b_i)<\a_{i,j}v_i-p_j$,
and rearranging thus gives
$\a_{i,j}b_i=\a_{i,j}b_i>p_j$.

\vspace{2mm} \noindent {\em Claim 2: If $j<i$ and $\a_{i,i}<\a_{i,j}$, then
$\a_{i,j}b_i=p_j$ implies $\a_{i,i} b_i \geq p_i$ for any envy-free prices
$p_i$.}  \vspace{2mm}

By envy-freeness, we know
$v_{i,i}-p_i\geq v_{i,j}-p_j$, 
and thus:
\[ \a_{i,i}v_i-p_i\geq \a_{i,j}v_i-p_j=\a_{i,j}(v_i-b_i)\geq\a_{i,i}(v_i-b_i) \]
\noindent Rearranging gives $b_{i,i}=\a_{i,i}b_i\geq p_i$, as desired.

\vspace{2mm} \noindent {\em Claim 3: For $j<k$ and $\a_{i,j}\geq\a_{i,k}$,
suppose $\a_{i,k}b_i=p_k$ and $\a_{i,j}b_i>p_j$, then
$v_{i,j}-p_j>v_{i,k}-p_k$.}  \vspace{2mm}

We have $\a_{i,k}b_i=p_k$ and $\a_{i,j}b_i>p_j$. Thus,
$v_{i,k}-p_k=\a_{i,k}(v_i-b_i)$
and
$v_{i,j}-p_j>\a_{i,j}(v_i-b_i)$.
As long as $\a_{i,j}\geq\a_{i,j}$, we then get
$v_{i,k}-p_k=\a_{i,k}(v_i-b_i)\leq \a_{i,j}(v_i-b_i)<v_{i,j}-p_j$,
as desired.

\vspace{3mm}

Claim 2 says that it is possible to find bids $b_1$ and $b_2$ that satisfy the following:
\begin{align*}
b_{2,1}&=p_1\\
b_{2,2}&\geq p_2\\
b_{3,2}&=p_3
\end{align*}
That is, bidder 2 sets the price for slot 1 while bidding enough to win slot 2,
and bidder 3 sets the price for 2 while bidding enough to win slot 3.
Additionally, since VCG prices are individually rational, we know $v_{1,1}\geq
p_1$, so it is also possible to satisfy \[b_{1,1}\geq p_1\]

Assuming we sell slots in order, the only participating bid that is yet
unspecified is $b_{3,1}$. As long as $b_{3,1}\leq p_1$, then selling slots in
order will result in the VCG result; however, if $b_{3,1}>p_1$, then bidder 3
will either win slot 1 or set a price higher than the VCG price, precluding the
VCG result. Moreover, notice that bidder 3 is the only loser when we sell slot
2, so we must have $b_{3,2}=p_2$ in order to achieve the VCG result. Thus,
selling slots in order works if and only if $b_{3,1}\leq p_1$ when
$b_{3,2}=p_2$.

Finally, we show that the VCG result cannot be supported by selling out of
order when selling in order fails. We have just argued that if selling slots in
order does not work, it necessarily follows that $b_{3,1}>p_1$ when
$b_{3,2}=p_2$. Claim 3 then implies $v_{3,1}-p_1>v_{3,2}-p_2$ at VCG prices,
i.e., bidder 3 strictly prefers slot 1 to slot 2 at VCG prices. However, since
VCG prices are minimal envy-free prices ({\bf CITE. LEONARD?}), there must be
some bidder who is indifferent between her slot and slot 2 at VCG prices. The
only remaining bidder is 1, and thus we can conclude that bidder 1 is
indifferent: $v_{1,1}-p_1=v_{1,2}-p_2$. Using this indifference, Claim 1 then
says that slots 1 and 2 must be sold in order if VCG prices are to be
supported. It immediately follows that VCG prices cannot be supported by any
order of sale in this setting, completing the result.
\end{proof}

\fi

\if 0
\subsection*{Proofs for Section \ref{sec:poa}}

\begin{lemm}[\ref{lem:poa}]
Let $(i,j)$ denote an allocation in which $i$ receives slot 1 and $j$ receives
slot 2. For arbitrary click-through-rates $\a$ and values $v$, letting 1 and 2 denote
the respective winners of slots 1 and 2 in the efficient allocation: the only
possible inefficient equilibria are $(\argmax_{i \in I\setminus\{1\}}
\a_{i,1}v_i,\,1)$ and $(2,\,\argmax_{i \in I\setminus\{2\}} \a_{i,2}v_i)$.
\end{lemm}

\begin{proof}
First note that $\forall i \neq 1$, $\a_{2,2}v_2 > \a_{i,2}v_i$, by efficiency.
If 2 is not allocated a slot and slot 2 is allocated to some $j \neq 1$, then
$\a_{2,2}v_2 > \a_{j,2}v_j$, and $b_j \leq v_j$ by assumption, and thus 2 has a
profitable deviation to bid high enough to win slot 2. Thus the only candidates
for equilibria involve 2 receiving a slot or 1 receiving slot 2.

If 1 receives slot 2 in equilibrium, then slot 1 must go to $i = \argmax_{j \in
I\setminus\{1\}} \a_{j,1}v_j$. Otherwise, since bids don't exceed values, $i$
could bid truthfully and win slot 1 for a profit).
If 2 receives slot 1 in equilibrium, then slot 2 must go to $i = \argmax_{j \in
I\setminus\{2\}} \a_{j,2}v_j$. Again, since bids don't exceed values, this
holds because otherwise $i$ could bid truthfully and win slot 2 for a profit.
Finally, if 2 receives slot 2 in equilibrium, then slot 1 must go to $i =
\argmax_{j \in I\setminus\{2\}} \a_{j,1}v_j$. Yet again this holds because
otherwise $i$ could bid truthfully and win slot 1 for a profit. In this case
$i$ is 1, and so (1,2)---the efficient allocation---is the only equilibrium
with 2 receiving slot 2.
\end{proof}

\begin{corollar}[\ref{corr2}]
Given click-through-rates $\a$ and values $v$, letting 1 and 2 denote the respective
winners of slots 1 and 2 in the efficient allocation, $j$ denote $\argmax_{i
\in I\setminus\{1\}} \a_{i,1}v_i$, and $k$ denote $\argmax_{i \in
I\setminus\{1\}} \a_{i,2}v_i$, the price of anarchy is:
\[ \max\Big\{ e(1,2) \cdot 1,\; e(j,1) \cdot \frac{\a_{1,1}v_1 +
\a_{2,2}v_2}{\a_{j,1}v_j+\a_{1,2}v_1},\; e(2,k) \cdot \frac{\a_{1,1}v_1 +
\a_{2,2}v_2}{\a_{2,1}v_2+\a_{k,2}v_k} \Big\},  \]
where $e(i,j) = 1$ if allocation $(i,j)$ is attainable in equilibrium and 0
otherwise.
\end{corollar}

\vspace{4mm}

\begin{propositio}[\ref{prop:poa1}]
For the two-slot, $n$-bidder setting, for any $n \geq 2$, for arbitrary
non-decreasing click-through-rates and values, the price of anarchy is at most 2.
\end{propositio}

\begin{proof}
Let 1 and 2 denote the respective winners of slots 1 and 2 in the efficient
allocation, $j$ denote $\argmax_{i \in I\setminus\{1\}} \a_{i,1}v_i$, and $k$
denote $\argmax_{i \in I\setminus\{1\}} \a_{i,2}v_i$. Take arbitrary bids $b$
that realize allocation $(j,1)$ in equilibrium, if any exist.
Let $p_2$ denote the price paid by 1, and let $p_{1,1}$ denote the price 1
would have to pay were he to deviate from the equilibrium in a way that leads
him to win slot 1.
Since $b$ forms an equilibrium, $\a_{1,1}v_1-p_{1,1} \leq \a_{1,2}v_1-p_2$, and
noting that $p_{1,1} \leq \a_{j,1}v_j$, we have:
\[  \a_{1,1}v_1 - \a_{1,2}v_1 \leq p_{1,1} - p_2 = \a_{j,1}v_j - p_2  \]

\noindent Adding $\a_{2,2}v_2$ to both sides of this inequality and rearranging
yields:
\[ \a_{1,1}v_1 + \a_{2,2}v_2 \leq \a_{j,1}v_j + \a_{1,2}v_1 + \a_{2,2}v_2 - p_2 \]

\noindent This implies that:
\[ \frac{\a_{1,1}v_1 + \a_{2,2}v_2}{\a_{j,1}v_j+\a_{1,2}v_1}
        \leq \frac{\a_{j,1}v_j + \a_{1,2}v_1 + \a_{2,2}v_2 - p_2}{\a_{j,1}v_j+\a_{1,2}v_1}
        = 1 + \frac{\a_{2,2}v_2 - p_2}{\a_{j,1}v_j+\a_{1,2}v_1}
\]

\noindent Now noting that $\a_{2,2}v_2 \leq \a_{2,1}v_2 \leq \a_{j,1}v_j$ (by
non-decreasing click-through-rates plus the definition of $j$), we have:
\[ 1 + \frac{\a_{2,2}v_2 - p_2}{\a_{j,1}v_j+\a_{1,2}v_1}
        \leq 1 + \frac{\a_{j,1}v_j - p_2}{\a_{j,1}v_j+\a_{1,2}v_1}
        \leq 1 + 1 = 2
\]

Now take arbitrary bids $b$ that realize allocation $(2,k)$ in equilibrium, if
any exist.
Let $p_{1,1}$ denote the price 1 would have to pay were he to deviate from the
equilibrium in a way that leads him to win slot 1, and $p_{1,2}$ the price he'd
have to pay were he to deviate in a way that yields him slot 2.
Since $b$ forms an equilibrium, $\a_{1,2}v_1 \leq p_2 \leq \a_{j,2}v_j$
(otherwise $i$ could bid truthfully and win slot 2 for a profit, since $b_j
\leq v_j$).  Similarly, $\a_{1,1}v_1 \leq p_{1,1} \leq \a_{2,1}v_2$ (using $b_2
\leq v_2$).  This implies that:
\begin{align*}
\frac{\a_{1,1}v_1 + \a_{2,2}v_2}{\a_{2,1}v_2+\a_{j,2}v_j}
        \leq\; &\frac{\a_{2,1}v_2 + \a_{2,2}v_2}{\a_{2,1}v_2+\a_{1,2}v_1}
\\	\leq\; &\frac{\a_{2,1}v_2+\a_{2,1}v_2}{\a_{2,1}v_2+\a_{1,2}v_1}
\\	\leq\; &\frac{\a_{2,1}v_2+\a_{2,1}v_2}{\a_{2,1}v_2} = 2
\end{align*}

\noindent We use weakly-decreasing click-through-rates in the second inequality and
non-negativity of values and click-through-rates in the third. By Corollary
\ref{corr2}, this is sufficient to establish the claim.
\end{proof}

\begin{propositio}[\ref{prop:poa2}]
For the two-slot, $n$-bidder setting, for any $n \geq 2$, for arbitrary
$\epsilon > 0$, there exist non-decreasing click-through-rates and values such the
price of anarchy is at least $2-\epsilon$.
\end{propositio}

\begin{proof}
Consider a setting with $n$ bidders, for arbitrary $n \geq 2$. Consider the
case where two bidders, which we'll call 1 and 2, have value 1 and all other
bidders (if there are any) have value 0. Take $\a_{1,1}=1-\delta$,
$\a_{1,2}=\delta$, $\a_{2,1}=1$, and $\a_{2,2}=1-\delta$, for arbitrary $\delta
\in (0,\frac{1}{3})$. The efficient allocation is $(1,2)$, and this is
supported, e.g., by equilibrium bids $b_1=1$ and $b_2 = \delta$. But allocation
$(2,1)$ is also supported as an equilibrium, e.g., by bids $b_1=0$ and $b_2=1$.
The price of anarchy is thus:
\[ \frac{\a_{1,1}v_1 + \a_{2,2}v_2}{\a_{2,1}v_2+\a_{1,2}v_1} = \frac{(1-\delta)
+ (1-\delta)}{1 + \delta} = \frac{2-2\delta}{1+\delta} \]
For any $\epsilon > 0$,  if $\delta < \frac{\epsilon}{4-\epsilon}$ then
$\frac{2-2\delta}{1+\delta} > 2-\epsilon$. Therefore, for any $\epsilon > 0$,
we can choose $\delta \in (0,\min\{\frac{1}{3},\frac{\epsilon}{4-\epsilon}\})$,
in which case the price of anarchy will exceed $2-\epsilon$.
\end{proof}

\fi

\end{document}